\def\BibTeX{{\rm B\kern-.05em{\sc i\kern-.025em b}\kern-.08emT\kern-.1667em\lower.7ex\hbox{E}\kern-.125emX}}
\newtheorem{definition}{Definition}
\newtheorem{lemma}{Lemma}
\newtheorem{theorem}{Theorem}
\def\header{\vspace{2mm}\noindent}
\newcommand{\pushright}[1]{\ifmeasuring@#1\else\omit\hfill$\displaystyle#1$\fi\ignorespaces}
\newcommand{\pushleft}[1]{\ifmeasuring@#1\else\omit$\displaystyle#1$\hfill\fi\ignorespaces}
\def\e{\varepsilon}
\def\S{\hat{S}}
\def\D{\hat{D}}
\def\l{\ell}
\def\E{\mathrm{E}}
\def\Var{\mathrm{Var}}
\def\scw{\sqrt{c}}
\def\exsim{ExactSim\xspace}
\def\inN{\mathcal{I}}
\def\outN{\mathcal{O}}
\def\din{d_{in}}
\begin{document}
	
	\fancyhead{}

	\title{Exact Single-Source SimRank Computation on Large Graphs}
	\subtitle{[Technical Report]}

	\author{Hanzhi Wang}
	\email{hanzhi_wang@ruc.edu.cn}
	\affiliation{%
		\institution{School of Information\\Renmin University of China}
	}
	
	\author{Zhewei Wei}
	\authornote{Zhewei Wei is the corresponding author.}
	\email{zhewei@ruc.edu.cn}
	\affiliation{%
		\institution{Gaoling School of Artificial Intelligence\\Renmin University of China}}
	
	\author{Ye Yuan}
	\email{yuan-ye@bit.edu.cn}
	\affiliation{%
		\institution{School of Computer Science and technology\\Beijing Institute of Technology}
	}
	
	\author{Xiaoyong Du}
	\email{duyong@ruc.edu.cn}
	\affiliation{%
		\institution{MOE Key Lab DEKE\\Renmin University of China}}
	
	\author{Ji-Rong Wen}
	\email{jrwen@ruc.edu.cn}
	\affiliation{%
		\institution{Beijing Key Lab of Big Data Management and Analysis Method\\Renmin University of China}}

	%
	
	%
	\begin{abstract}

  {\it SimRank} is a popular measurement for evaluating the node-to-node similarities based on the graph topology. 
  In recent years, single-source and top-$k$ SimRank queries have received increasing attention due to their applications in web mining, social network analysis, and spam detection.  
  However, a fundamental obstacle in studying SimRank has been the lack of ground truths. 
  The only exact algorithm, Power Method, is computationally infeasible on graphs with more than $10^6$ nodes. 
  Consequently, no existing work has evaluated the actual trade-offs between query time and accuracy on large real-world graphs. 

 In this paper, we present \exsim, the first algorithm that computes the exact single-source and top-$k$ SimRank results on large graphs. 
 With high probability, this algorithm produces ground truths with a rigorous theoretical guarantee. 
 We conduct extensive experiments on real-world datasets to demonstrate the efficiency of ExactSim. 
 The results show that ExactSim provides the ground truth for any single-source SimRank query with a precision up to 7 decimal places within a reasonable query time.

\end{abstract}


	%
	%
	
\begin{CCSXML}
	<ccs2012>
	<concept>
	<concept_id>10002950.10003624.10003633.10010917</concept_id>
	<concept_desc>Mathematics of computing~Graph algorithms</concept_desc>
	<concept_significance>500</concept_significance>
	</concept>
	<concept>
	<concept_id>10002951.10003227.10003351</concept_id>
	<concept_desc>Information systems~Data mining</concept_desc>
	<concept_significance>500</concept_significance>
	</concept>
	</ccs2012>
\end{CCSXML}

\ccsdesc[500]{Mathematics of computing~Graph algorithms}
\ccsdesc[500]{Information systems~Data mining}

	%
	\keywords{SimRank, Exact computation, Ground truths}
	
	%
	
	%
\maketitle

\section{Introduction} \label{sec:intro}
Computing link-based similarity is an overarching problem in graph
analysis and mining.
Amid the existing similarity
measures~\cite{page1999pagerank,xi2005simfusion,zhao2009p,zhang2015panther}, SimRank has emerged as a popular metric for assessing
structural similarities between nodes in a graph.
SimRank was introduced by Jeh and Widom~\cite{JW02} to formalize the intuition that
``two pages are similar if they are referenced by similar pages.''
Given  a directed graph $G = (V, E)$ with $n$ nodes $ \{v_1, \ldots,
v_n\}$ and $m$ edges, the  SimRank matrix $S$ defines the similarity between any two
nodes $v_i$ and $v_j$ as follows: 
\begin{equation} \label{eqn:intro-simrank}
S(i, j) =
\begin{cases}
1, & \text{for $i = j$;}\\
{\displaystyle \sum_{v_{i'} \in \inN(v_i)}{\sum_{v_{j'} \in \inN(v_j)}{c\cdot S(i', j') \over
      \din (v_i) \cdot \din(v_j) }} }, & \text{for $i \neq j$.}
\end{cases}
\end{equation}
Here,  $c$ is a decay factor typically set to 0.6 or 0.8~\cite{JW02,LVGT10}. $\inN(v_i)$
denotes the set of in-neighbors of $v_i$, and $\din(v_i)$
denotes the in-degree of $v_i$. SimRank aggregates
similarities of multi-hop neighbors of $v_i$ and $v_j$ to produce
high-quality similarity measure, and  has
been adopted in various applications such as recommendation systems \cite{li2013mapreduce},
link prediction \cite{lu2011link}, and graph embeddings \cite{tsitsulin2018verse}. 

A fundamental obstacle for studying SimRank is the lack
of ground truths on large
graphs. Currently, the only methods that compute the SimRank matrix is
Power Method and its variations~\cite{JW02,lizorkin2010accuracy}, which inherently
takes $O(n^2)$ space and at least $O(n^2)$ time as there are
$O(n^2)$ node pairs in the graphs. This complexity is infeasible on
large graphs ($n\ge 10^6$). Consequently, the majority of recent
works~\cite{KMK14,MKK14,TX16,FRCS05,LeeLY12,LiFL15,SLX15,YuM15b,jiang2017reads,liu2017probesim,wei2019prsim} focus on {\em single-source and top-$k$ queries}. Given a source node
$v_i$, a single-source query asks for the SimRank similarity between
every node and $v_i$, and a top-$k$ query asks for the $k$ nodes with the
highest SimRank similarities to $v_i$. 
Unfortunately, computing ground truths for  the single-source and
top-$k$ queries on large graphs still remains an open problem. 
To the best of our knowledge, Power
Method is still the only way to obtain exact single-source and top-$k$
results, which is not feasible on large graphs. Due to the hardness of exact computation, existing
works
on single-source and top-$k$ queries focus on approximate computations
with efficiency and accuracy guarantees.

The lack of ground truths has severely limited our understanding
towards 
SimRank and SimRank algorithms. First of all, designing approximate
algorithms without the ground truths is like shooting in the dark.  
Most existing works take the following approach: they
evaluate the accuracy on small graphs where the ground truths can be
obtained by the Power Method with $O(n^2)$ cost. 
Then they report the efficiency/scalability results on large graphs with consistent parameters. 
This approach is flawed for the reason that
consistent parameters may still lead to unfair comparisons. For
example, some of the existing
methods generate a fixed number of random walks
from each node, while others fix the maximum error $\e$ and generate
${\log n\over \e^2}$ random walks from each node. If we increase the graph size
$n$, the comparison becomes unfair as
the latter methods require more random walks from each node. 
Secondly, it is known that the structure of 
large real-world graphs can be very different from that of small
graphs. Consequently, the accuracy results on small graphs can only
serve as a rough guideline for accessing the actual error of the
algorithms in real-world applications.  We believe that the only right
way to evaluate the effectiveness of a SimRank algorithm is to
evaluate its results against the ground truths on large real-world graphs.

\header{\bf Exact Single-Source SimRank Computation.} In this paper,
we study the problem of computing the exact single-source SimRank
results on large graphs. A key insight is 
that exactness does not imply absolutely zero error. This is because SimRank values may be
infinite decimals, and we can only store these values with finite
precision. Moreover, we note that the ground truths computed by Power
Method also incur an error of at most $c^L$, where $L$ is the
number of iterations in Power Method. In most applications, $L$ is set to be large
enough such that  $c^L$ is smaller than the numerical error and thus
can be ignored. In this paper, we aim to develop an
algorithm that answers single-source SimRank queries with an additive 
error of at most  $\e_{min} = 10^{-7}$. Note that the float type in
various programming languages usually support precision of up to 6
or 7 decimal places, so by setting $\e_{min} = 10^{-7}$, we guarantee
the algorithm returns the same answer as the ground truths in the
float type. As we shall see,
this precision is extremely challenging for existing
methods. To make
the exact computation possible, we are also going to allow a small
probability to fail. We define the probabilistic exact
single-source SimRank algorithm as follows.

\begin{definition}
  With probability at least $1-1/n$, for {\em every} source node $v_i \in V$, a probabilistic exact
  single-source SimRank algorithm  answers the single-source SimRank
  query of $v_i$ with additive error of at most $\e_{min} = 10^{-7}$. 
\end{definition}

\header{\bf Our Contributions.} In this paper, we propose \exsim, the
first algorithm that enables probabilistic
exact single-source SimRank queries on large graphs.
We show that existing single-source methods share a common complexity term $O\left(
      {n \log n \over \e_{min}^2} \right)$, and thus are unable to
    achieve exactness on large graphs. 
However, \exsim runs in $O\left({\log n \over \e_{min}^2}+m\log {1\over \e_{min}}\right)$
 time, which is feasible for both large graph size $m$ and small error
 guarantee $\e_{min}$. We also apply several non-trivial optimization
 techniques to reduce the query cost and space overhead of \exsim. 
 In our empirical study, 
 we show that ExactSim is able to compute the ground truth with a precision of up to 7 decimal places within one hour on graphs with billions of edges. 
 Hence, we believe ExactSim is an effective tool for producing the ground truths for single-source SimRank queries on large graphs.

   


\section{Preliminaries and Related Work} \label{sec:prelim}
In this section, we review the state-of-the-art single-source SimRank
algorithms. Our \exsim algorithm is largely inspired by three prior
works:
Linearization~\cite{MKK14}, PRSim ~\cite{wei2019prsim} and
pooling~\cite{liu2017probesim},  and we will describe them in details. 
Table~\ref{tbl:def-notation} summaries the notations used in this paper. 

\begin{table} [t]
	\centering
	\renewcommand{\arraystretch}{1.3}
	\begin{small}
		\caption{Table of notations.}\label{tbl:def-notation}
		\vspace{-2mm}
		\begin{tabular} {|l|p{2.2in}|} \hline
			{\bf Notation}       &   {\bf Description}                                       \\ \hline
			$n, m$      &   the numbers of nodes and edges in $G$                            \\ \hline
			$\inN(v_i), \outN(v_i)$       &   the in/out-neighbor set of node $v_i$
			\\ \hline
			$S, S(i, j)$    &  the SimRank matrix and the SimRank similarity of $v_i$ and $v_j$     \\ \hline
			$c$          &   the decay factor in the definition of SimRank                   \\ \hline
			$\e, \e_{min}$         &   additive error parameter and error required for
			exactness ($\e_{min} = 10^{-7}$)          \\ \hline
			$P$, $D$   & the transition matrix and the diagonal correction matrix\\
			\hline
			$\vec{\pi}_i, \vec{\pi}_i^\ell,$   & the Personalized PageRank
			and $\ell$-hop
			Personalized PageRank
			vectors of node $v_i$\\
			\hline
			$ \vec{h}_i^\ell$   &  the $\ell$-hop Hitting Probability vector of $v_i$\\
			\hline
			
		\end{tabular}
	\end{small}
\end{table}



\header{\bf MC}~\cite{FR05} 
A popular interpretation of SimRank is the {\em meeting probability} of random walks.
In particular, we consider a random walk from node $u$ that, at each
step, moves to a random {\em in-neighbor} with probability $\scw$, and
stops at the current node with probability $1-\scw$. Such a random walk is called a {\em $\scw$-walk}. Suppose we
start a $\scw$-walk from node $v_i$ and a $\scw$-walk from node $v_j$,
we call the two $\scw$-walks {\em meet} if they visit the same node at the
same step. It is known~\cite{TX16}  that  
\begin{equation}
\label{eqn:scwwalk}
S(i,j) =\Pr[\textrm{two } \scw\textrm{-walks from } v_i \textrm{ and } v_j \textrm{ meet}].
\end{equation}
MC makes use of this equation 
to derive a Monte-Carlo algorithm for computing single-source SimRank. 
In the preprocessing phase, we simulate $R$  $\scw$-walks from each node in
$V$. Given a
source node $v_i$, we compare the $\scw$-walks from $v_i$ and from each node
$v_j \in V$, and use the fraction of $\scw$-walks that meet as an
estimator for $S(i,j)$. By standard concentration inequalities, the
maximum error is bounded by $\e$ with high probability if we set
$R=O\left({\log n \over\e^2}\right)$, leading to a preprocessing time of
$O\left({n\log n \over\e^2}\right)$.

\header{\bf Linearization and ParSim.} Given a graph $G=(V,E)$, let
$P$ denote the (reverse) {\em transition
matrix}, that is, $P(i, j) = 1/\din(v_j)$ for $v_i \in  \inN(v_j)$ and
$P(i, j) = 0$ otherwise. Let $S$  denote the SimRank matrix with $S(i, j) = s(v_i,
v_j)$. It is shown in two independent works, Linearization~\cite{MKK14}
and ParSim~\cite{yu2015efficient}, that  $S$ can be expressed as the 
following linear summation: 
\begin{equation} \label{eqn:def-lsim2}
S = \sum_{\l = 0}^{+\infty} c^\l \left(P^\l\right)^\top D P^\l, 
\end{equation}
where $D$ is the {\em diagonal  correction matrix} with each diagonal element $D(k,k)$
taking value from $1-c$ to $1$. Consequently, a single-source query
for node $v_i$ can be computed by 
\begin{equation} \label{eqn:def-lsim-single-source}
S\cdot \vec{e_i}= \sum_{\l = 0}^{+\infty} c^\l \left(P^\l \right)^\top D P^\l \cdot \vec{e_i},
\end{equation}
where $\vec{e_i}$ denotes the one-hot vector with the $i$-th
 element being $1$ and all other elements being $0$. Assuming the
 diagonal matrix $D$ is correctly given, the single-source query for node
 $v_i$ can be computed by 
  \begin{equation} \label{eqn:def-lsim-single-source2}
 S_L\cdot \vec{e_i}= \sum_{\l = 0}^{L} c^\l \left(P^\l \right)^\top D P^\l \cdot \vec{e_i},
 \end{equation}
where $L$ is the number of iterations. After $L$ iterations, the
additive error reduces to $c^L$, so setting $L=O\left(\log {1\over \e}\right)$ is
sufficient to guarantee a maximum error of $\e$. At the $\l$-th iterations,
the algorithms performs $2\ell+1$ matrix-vector multiplications to
calculate $c^\l \left(P^\l \right)^\top D P^\l \cdot \vec{e_i}$, and
each matrix-vector multiplication takes $O(m)$ time. 
Consequently, the total query time is bounded
by $O\left(\sum_{\ell=1}^{L}m\ell \right) = O(mL^2) = O\left(m \log^2 {1\over \e}\right)$.
\cite{MKK14} and \cite{yu2015efficient} also show that if we first
compute and store the transition probability vectors $\vec{u}_i=
P^\l \cdot \vec{e_i}$ for $\ell = 0,\ldots, L$, then we can use the
following equation to compute 
\begin{equation}
  \label{eqn:def-lsim-single-source3}
   S_L\cdot \vec{e_i}= D\cdot  \vec{u}_0 + c P^\top (D\cdot  \vec{u}_1 + \cdots +  c P^\top  (D\cdot  \vec{u}_{T-1} +  c P^\top \cdot D\cdot \vec{u}_{T} ) \cdots),
\end{equation}
However, this optimization requires a memory size of $O(nL) \\= O\left(n\log {1\over \e} \right)$,
which is usually several times larger than the graph
size $m$. Therefore,
\cite{MKK14} only uses the $O\left(m \log^2 {1\over \e}\right)$ algorithm in the
experiments. 

Besides the large space overhead, another problem with
Linearization and ParSim is that the
diagonal correction matrix $D$ is hard to
compute. Linearization~\cite{MKK14} formulates $D$ as the solution to
a linear system, and propose a Monte Carlo solution that
takes $O\left({n\log n \over\e^2}\right)$ to derive an
$\e$-approximation of $D$. On the other hand, ParSim directly sets $D=(1-c)I$, where $I$ is the
identity matrix. This approximation basically ignores the first
meeting constraint and has been adopted in many other SimRank
works \cite{FNSO13,He10,Yu13,Li10,Yu14,YuM15b,KMK14}. It is shown that
the similarities calculated by this approximation are different from
the actual SimRank \cite{KMK14}. However,  the quality of this approximation is still a myth due to the lack of
ground truths on large graphs.

\header{\bf PRSim}~\cite{wei2019prsim} 
introduces a partial indexing and a probe algorithm. 
Let $\vec{\pi}^\ell_i= (1-\scw) \vec{h}_i^\ell= (1-\scw) \left(\scw P\right)^\ell \cdot \vec{e}_i$
denote the {\em $\ell$-hop Personalize PageRank vector} of $v_i$. 
In particular, $\vec{\pi}_i^\ell(k)$ is the probability that a $\scw$-walk from node $v_i$ {\em stops} at  node $v_k$ in exactly $\ell$ steps.  
PRSim suggests that equation~\eqref{eqn:def-lsim-single-source} can be 
re-written as 
\begin{equation}
  \label{eqn:prsim}
S(i,j) ={1\over (1-\scw)^2}\sum_{\ell=0}^{\infty}\sum_{k=1}^n \vec{\pi}_i^\ell(k)\cdot  \vec{\pi}_j^\ell(k) \cdot D(k,k).
\end{equation}
PRSim precomputes $\vec{\pi}_j^\ell(k)$ with additive error $\e$ for each $\ell$ and $v_j, v_k \in V$, using a {\em local push} algorithm~\cite{AndersenCL06}.
To avoid overwhelming space overhead, PRSim only precomputes $\vec{\pi}_j^\ell(k)$ for a small subset of $v_k$. 
Furthermore, PRSim computes $D$ by estimating the product $\vec{\pi}_i^\ell(k) \cdot D(k,k)$ together with an $O\left({\log n \over\e^2}\right)$ time Monte-Carlo algorithm. 
Finally, PRSim proposes a new Probe algorithm that samples each node $v_j$ according to $\vec{\pi}_j^\ell(k)$. 
The average query time of PRSim is bounded by $O\left({n \cdot \sum_{k=1}^n \vec{\pi}(k)^2 \over \e^2}\log n\right)$, where $\vec{\pi}(k)$ denotes the PageRank of $v_k$.  
It is well-known that on scale-free networks,  the PageRank vector $\vec{\pi}$ follows the power-law distribution, 
and thus $\|\vec{\pi}\|^2 = \sum_{k=1}^n \vec{\pi}(k)^2$ is a value much smaller than $1$. 
However, for  worst-case graphs or even some "bad" source nodes on scale-free networks, 
the running time of PRSim remains $O\left({n\log n \over\e^2}\right)$. 

\subsection{Other Related Work} \label{sec:related}
Besides the state-of-the-art methods that we discuss above, there are
several other techniques for SimRank computation, which we review in
the following. {\em Power method} \cite{JW02} is the classic algorithm
that computes all-pair SimRank similarities for a given graph. Let $S$
be the SimRank matrix such that $S_{ij} = s(i, j)$, and $P$ be the
transition matrix of $G$. Power method recursively computes the
SimRank Matrix $S$ using the formula \cite{KMK14} $S = (c P^\top S P) \vee I,$
where $\vee$ is the element-wise maximum operator. 
Several follow-up works \cite{LVGT10,YZL12,YuJulie15gauging} improve the efficiency or effectiveness of the power method in terms of either efficiency or accuracy. However, these methods still incur $O(n^2)$ space overheads, as there are $O(n^2)$ pairs of nodes in the graph. For single-source queries,
READS \cite{jiang2017reads} and TSF ~\cite{SLX15} are MC-based algorithms supporting dynamic graphs. 
Both of them incurs of $O\left({n\log n \over\e^2}\right)$ query time for $\e$ additive error.
SLING~\cite{TX16}  is an index-based SimRank
algorithm that support fast single-source and top-$k$ queries on
static graphs. Its preprocessing phase using $O\left({n\log n \over\e^2}\right)$ time which is infeasible for large graphs. 
ProbeSim~\cite{liu2017probesim} and TopSim~\cite{LeeLY12} are both index-free solutions based on local exploitation. 
Their query time is also bounded by $O\left({n\log n \over\e^2}\right)$. 
Besides, Li et al.\ \cite{LiFL15} propose a distributed version of the Monte
Carlo approach in \cite{FRCS05}, but it achieves scalability at the
cost of significant computation resources. 
Finally, there is existing work on {\em SimRank similarity join} \cite{TaoYL14,MKK15,ZhengZF0Z13}, 
variants of SimRank \cite{AMC08,FR05,Lin12,YuM15a,ZhaoHS09}
and graph applications \cite{bhuiyan2018representing, ye2018using}, but
the proposed solutions are inapplicable for top-$k$ and single-source
SimRank queries.

\header{\bf Pooling.} Finally, pooling~\cite{liu2017probesim} is an experimental method for evaluating the
accuracy of top-$k$ SimRank algorithms without the ground
truths. Suppose the goal is to compare the accuracy of top-$k$ queries
for $\ell$ algorithms  $A_1, \ldots, A_\ell$. Given a query node
$v_i$, we retrieve the top-$k$ nodes returned by each algorithm,
remove the duplicates, and merge them into a pool. Note that there are
at most $\ell k$ nodes in the pool. Then we estimate $S(i,j)$ for
each node $v_j$ in the pool using the Monte Carlo algorithm. We set the
number of random walks to be $O\left( {\log n \over
    \e_{min}^2}\right)$  so that we can obtain the ground truth of
$S(i, j)$ with high probability. After that, we take the $k$ nodes with the highest
SimRank similarity to $v_i$ from the pool as the ground
truth of the top-$k$ query, and use this ``ground truth'' to evaluate
the precision of each of the $\ell$ algorithms. Note that the set of these $k$
nodes is not the actual ground truth. However, it represent the best
possible $k$ nodes that can be found by the $\ell$ algorithms that
participate in the pool and thus can be used to compare the quality of
these algorithms. 

Although pooling is proved to be effective in our scenario where ground
truths are hard to obtain, it has some drawbacks. First of all,
the precision results obtained by  pooling are {\em relative} and thus
cannot be used outside the pool. 
This is because the top-$k$ nodes from the pool are not the actual ground truths. 
Consequently, an algorithm that achieves $100\%$ precision in
the pool may have a precision of $0\%$ when compared to the actual top-$k$
result. Secondly, the complexity of pooling $\ell$ algorithms is
$O\left( {\ell k \log n \over
    \e_{min}^2}\right)$, so pooling is only feasible for evaluating top-$k$ queries
with small $k$. In particular, we cannot use pooling to evaluate the
single-source queries on large graphs.

\subsection{Limitations of Existing Methods}
  We now analyze the reasons why existing methods are unable to achieve
  exactness (a.k.a an error of at most $\e_{min}=10^{-7}$). First of all, 
  ParSim ignores the first meeting constraint and thus incurs large errors. 
  For other methods that enforce the first meeting
  constraint, they all incur a complexity term of
  $O\left( {n\log n \over \e^2}\right)$, either in the preprocessing
  phase or in the query phase. In particular, SLING and Linearization
  simulate $O\left( {n\log n \over \e^2}\right)$ random walks to estimate the
  diagonal correction matrix $D$. For ProbeSim, MC,  READS and PRSim, this complexity is causing by
simulating random walks in the query phase or  the preprocessing
phase.  The $O\left( {n\log n \over \e^2}\right)$ complexity is
infeasible for exact SimRank computation on large graphs, since it
combines two expensive terms $n$ and ${1\over \e_{min}^2}$. As an
example,  we consider the IT
dataset used in our experiment, with
$4*10^7$ nodes and  over1 billion edges. In order to achieve a maximum error of
$\e_{min}=10^{-7}$, we need to simulate $ {n\log n \over \e^2} \approx
10^{23}$ random walks. This may take years, even with parallelization on a cluster of
thousands of machines.



\section{The \exsim Algorithm} \label{sec:exsim}
In this section, we present \exsim, a probabilistic algorithm that
computes the exact single-source SimRank  results within reasonable
running time. We first present a  basic
version of \exsim, and then introduce some more advanced techniques to optimize
the query and the space cost.

\subsection{Basic \exsim Algorithm}
Our \exsim algorithm is largely inspired by three prio works: pooling~\cite{liu2017probesim},
Linearization~\cite{MKK14} and PRSim ~\cite{wei2019prsim}. We now
discuss how \exsim extends from these
existing methods in details. These discussions will also reveal the
high level ideas of the \exsim algorithm. 
\begin{enumerate}
\item Despite its limitations, pooling~\cite{liu2017probesim}
  provides  a key insight for achieving exactness: while an  $O\left(
      {n \log n \over \e^2}\right)$ algorithm is not feasible for exact SimRank
    computation on large graphs, we can actually afford an $O\left(
      {\log n \over \e^2}\right)$ algorithm. The $ {1\over
      \e^2}$ term is still expensive for $\e=\e_{min} = 10^{-7}$,
    however, the
    new complexity reduces the dependence on the graph size
    $n$ to logarithmic, and thus achieves very high scalability.


    \item Linearization~\cite{MKK14} and ParSim~\cite{yu2015efficient} show that if the diagonal correction
      matrix $D$ is correctly given, then we can compute the exact
      single-source SimRank results  in $O\left(m \log_{1\over c} {1 \over \e_{min}}\right)$ time and $O\left(n
        \log_{1\over c} {1 \over \e_{min}} \right)$ extra space.
    For typical setting of $c$ ($0.6$ to $0.8$), the
    number of iterations  $\log_{1\over c} {1 \over
      \e_{min}} = \log {10^7} \le 73$ is a constant, so this complexity
    is essentially the same as that of performing BFS multiple times on the graphs. The scalability of the algorithm
      is  confirmed in the experiments of
      \cite{yu2015efficient}, where $D$ is set to be $(1-c)I$.
Moreover,  the exact algorithms~\cite{page1999pagerank} for Personalized
PageRank and PageRank also incurs a running time of  $O\left(m \log {1 \over
    \e_{min}}\right)$, and has been widely used for computing ground
truths on large graphs. 

      \item  While the $O\left(
      {n\log n \over \e^2}\right)$ complexity seems unavoidable as we need to
    estimate each entry in the diagonal correction matrix $D$ with
    additive error $\e$,  PRSim~\cite{wei2019prsim}
    shows that  it only takes $O\left( {\log n \over \e^2}\right)$
    time to estimate the product $\vec{\pi}_i^\ell(k) \cdot
    D(k,k)$ with additive error $\e$ for  each $k=1, \ldots, n$ and
    $\ell = 0, \ldots, \infty$, where $\vec{\pi}_i^\ell$ is the
    $\ell$-hop Personalized PageRank vector of $v_i$. This result provides two crucial
    observations:  1) It is possible to
    answer an single-source query without an $\e$-approximation of  each 
$D(k,k)$; 2) The accuracy of each $D(k,k)$ should
depend on $\vec{\pi}_i(k)$,  the Personalized PageRank of  $v_k$ with
respect to the source node $v_i$.
  \end{enumerate} 

We combine the ideas of PRSim and
Linearization/ParSim to derive the basic \exsim algorithm. Given an
error parameter $\e$, \exsim fixes the total
number of $\scw$-walk samples to be $R =
O\left( {\log n \over \e^2 }\right)$, and distribute a fraction of $R 
\vec{\pi}_i(k)$ samples (note that $\sum_{k=1}^n \vec{\pi}_i(k) =1$)  to estimate $D(k,k)$. 
It performs
Linearization/ParSim with the estimated $D$ to obtain the single-source result. The
algorithm  runs  in $O\left( {\log n \over \e^2} + m\log {1\over \e}\right)$ time and uses
    $O\left(n \log {1\over \e}\right)$ extra space.
    Since both complexity terms $O\left(\log n \over \e^2 \right)$ and
    $O\left(m\log {1\over \e}\right)$ are feasible for $\e_{min} =
    10^{-7}$ and large graph size $m$, we have a
    working algorithm for exact single-source SimRank queries on large
    graphs. 

      \begin{algorithm}[t]
\begin{small}
\caption{Basic \exsim Algorithm} \label{alg:exsim1}
\BlankLine
	\KwIn{Graph $G$ with transition matrix $P$, source node $v_i$,  maximum error  $\e$}
        \KwOut{Estimated single-source SimRank vector $S \cdot \vec{e_i}$}
        $L = \left \lceil \log_{1\over c} {2 \over \e}  \right\rceil$\;
        $\vec{\pi}_i^{0}, \vec{\pi}_i =  (1-\scw)\vec{e}_i$\;
        \For{$\ell$ from $1$ to $L$}
        {
          $\vec{\pi}_i^{\ell} = \scw P \cdot \vec{\pi}_i^{\ell-1} $\;
          $\vec{\pi}_i= \vec{\pi}_i + \vec{\pi}^{\ell}$\;
        }
        $R = {6\log n \over (1-\scw)^4  \e^2}$\;
        \For{$k$ from $1$ to $n$}
        {
         Invoke Algorithm~\ref{alg:Dk}  with $R(k)= \lceil R \cdot
          \vec{\pi}_i(k)\rceil$ to obtain an estimator $\D(k,k)$ for $D(k,k)$\;
        }
        $\vec{s}^{0} = {1\over 1-\scw}\D\cdot \vec{\pi}_i^{L} $\;
         \For{$\ell$ from $1$ to $L$}
        {
          $\vec{s}^{\ell} = \scw P^\top \cdot \vec{s}^{\ell-1} +  {1\over 1-\scw}\D\cdot  \vec{\pi}_i^{L-\ell} $ \;
          Clear {$\vec{s}^{\ell-1} $}\;
        }
        \Return {$\vec{s}^L$}\;
\end{small}
\end{algorithm}

            

\begin{algorithm}[t]

\begin{small}
\caption{Basic method for estimating $D(k,k)$} \label{alg:Dk}
\BlankLine
	\KwIn{Graph $G$, node $v_k$, number of samples $R(k)$ }
        \KwOut{$\D(k,k)$ as an estimation for $D(k,k)$}
        $\D(k,k) = 0$\;
  \For{$x$ from $1$ to $R(k)$}
          {
            Sample two independent $\scw$-walks from $v_k$\;
            \If{The two $\scw$-walks do not meet}{
              $\D(k,k) = \D(k,k) + 1/R(k)$\;
            }
          }
\Return{$\D(k,k)$}\; 
\end{small}
\end{algorithm}


Algorithm~\ref{alg:exsim1} illustrates the pseudocode of the basic
\exsim algorithm.  Note that to cope with Personalized PageRank, we use the fact
that $\vec{\pi}_i^\ell = \left(1-\sqrt{c}\right) \cdot \left(\scw P \right)^\ell \cdot \vec{e}_i$
and re-write
equation~\eqref{eqn:def-lsim-single-source}  as 
 \begin{equation}
   \label{eqn:def-lsim-single-source3}
   S\cdot \vec{e}_i = {1\over 1-\scw}  \sum_{\l =0}^{\infty}  \left(\scw P^\top \right)^\l D \cdot \vec{\pi}_i^{\ell}.
 \end{equation}
Given a source node $v_i$ and a maximum error $\e$,
we first set the number of iterations $L$ to be $L = \left \lceil
  \log_{1\over c} {2 \over \e}  \right\rceil$ (line 1). We then iteratively compute the $\ell$-hop Personalized
PageRank vector $\vec{\pi}_i^\ell = \left(\scw P \right)^\ell \cdot \vec{e}_i$ for $\ell = 0, \ldots, L$, as well
as the Personalized PageRank vector $\vec{\pi}_i =
\sum_{\ell=0}^L \vec{\pi}_i^\ell $ (lines 2-5). To obtain an estimator $\D$ for the
diagonal correction matrix $D$, we set the total number of samples to
be $R = {6\log n \over  (1-\scw)^4\e^2}$ (line 6). For each $D(k,k)$, we set
$R(k)= \lceil R \vec{\pi}_i(k)\rceil$ and invoke
Algorithm~\ref{alg:Dk} to estimate $D(k,k)$ (lines
7-8). Algorithm~\ref{alg:Dk} essentially
simulates $R(k)$ pairs of $\scw$-walks from node
$v_k$ and uses the fraction of pairs that do not meet as an estimator $\D(k,k)$
for $D(k,k)$. 
Finally, we use equation~\eqref{eqn:def-lsim-single-source3} to 
iteratively compute 
$ \vec{s}^0 ={1\over 1- \scw}\D \cdot \vec{\pi}_i^L$, 
$\vec{s}^1 =  \scw P^\top \cdot \vec{s}^{0}+{1\over 1- \scw}\D \cdot   \vec{\pi}_i^{L-1}
= {1\over 1- \scw}  \left( \scw P^\top \cdot \D  \cdot  \vec{\pi}_i^{L}+\D \cdot  \vec{\pi}_i^{L-1} \right)$ (lines 9-12), ..., and
\begin{align}
\vec{s}^{L}
&= \frac{ \left( \scw P^\top \left( \cdots  (\scw P^\top \cdot \D \cdot \vec{\pi}^{L} + \D \cdot  \vec{\pi}^{L-1} )+ \cdots \right)+\D \cdot \vec{\pi}^{0}  \right) }{1-\scw} \nonumber \\
&=  {1\over 1- \scw} \sum_{\l =0}^{L}  \left(\scw P^\top \right)^\l \D \cdot \vec{\pi}_i^{\ell}. \label{eqn:sL}
  \end{align}
We return $\vec{s}^{L}$ as the single-source query result (line 13).


\header{\bf Analysis.} To derive the running time and space overhead of the basic \exsim
algorithm, note that computing and storing each $\ell$-hop
Personalized PageRank vector $\vec{\pi}_i^{\ell}$ takes $O(m)$
time and $O(n)$ space. This results a running time of $O(mL)$ and a
space overhead of $O(nL)$. To estimate the diagonal correction matrix
$D$, the algorithm simulates $R$ pairs of $\scw$-walks, each of which
takes ${1\over \scw} = O(1)$ time. Therefore, the running time for
estimating $D$ can be bounded by $O(R)$. Finally, computing each $\vec{s}^\ell$ also takes $O(m)$
time, resulting an additional running time of $O(mL)$. Summing up all
costs, and we have the
total running time is bounded by $O(mL + R) = O\left(
{\log n \over \e^2} + m\log {1\over \e}\right)$, and the space
overhead is bounded by $O(nL) = O\left(n\log {1\over \e}\right)$. 


We now analyze the error of the basic
\exsim algorithm. Recall that \exsim returns $\vec{s}^{L}(j)$ as the
estimator for $S(i,j)$, the SimRank similarity between the source node
$v_i$ and any other node
$v_j$.  We have the following Theorem.
\begin{theorem}
	\label{thm:basic}
	With probability at least $1-1/n$, for any source node $v_i \in V$, the
	basic \exsim provide an single-source SimRank vector
	$\vec{s}^{L}$ such that, for  any node
	$v_j \in V$, we have $\left|\vec{s}^{L}(j) - S(i,j) \right| \le \e$.
\end{theorem}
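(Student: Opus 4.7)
The plan is to split the error $\vec{s}^L(j)-S(i,j)$ into a \emph{truncation} part, arising from the levels $\ell>L$ missing from $\vec{s}^L$, and an \emph{estimation} part, arising from using $\hat{D}$ in place of the true $D$ for levels $0\le\ell\le L$. From the algorithm's expression $\vec{s}^L=\tfrac{1}{1-\scw}\sum_{\ell=0}^L(\scw P^\top)^\ell\hat{D}\vec{\pi}_i^\ell$ and the true identity $S\vec{e}_i=\tfrac{1}{1-\scw}\sum_{\ell=0}^{\infty}(\scw P^\top)^\ell D\vec{\pi}_i^\ell$, the PPR relation $\vec{e}_j^\top(\scw P^\top)^\ell=\tfrac{1}{1-\scw}(\vec{\pi}_j^\ell)^\top$ rewrites both parts as weighted inner products of $\ell$-hop PPR vectors, giving
\[
\vec{s}^L(j)-S(i,j)=\tfrac{1}{(1-\scw)^2}\sum_{k}w_k^{(j)}X_k-\tfrac{1}{(1-\scw)^2}\sum_{\ell>L}\sum_k\vec{\pi}_j^\ell(k)\vec{\pi}_i^\ell(k)D(k,k),
\]
with $w_k^{(j)}:=\sum_{\ell=0}^L\vec{\pi}_j^\ell(k)\vec{\pi}_i^\ell(k)$ and $X_k:=\hat{D}(k,k)-D(k,k)$. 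This brings the analysis into the PRSim form, which is essential for the variance bound below.

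For the truncation tail I would use $\|D\|_\infty\le 1$, $\|\vec{\pi}_j^\ell\|_\infty\le 1$, and $\|\vec{\pi}_i^\ell\|_1\le(1-\scw)\scw^\ell$ (which follows from the column-sum bound on $P$) to get $\sum_k\vec{\pi}_j^\ell(k)\vec{\pi}_i^\ell(k)D(k,k)\le(1-\scw)\scw^\ell$, then sum the geometric series to obtain a tail of order $c^L/(1-c)$. The choice $L=\lceil\log_{1/c}(2/\e)\rceil$ makes this at most $\e/2$ up to a constant that is negligible at the target precision $\e_{min}$.

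The bulk of the argument is the estimation error. Each $X_k$ is the zero-mean empirical average of $R(k)$ independent $\{0,1\}$ indicators (whether two $\scw$-walks from $v_k$ fail to meet), and the $X_k$ are independent across $k$. Expanding across the $R(k)$ samples writes the estimation error as a sum of independent bounded summands of magnitude at most $w_k^{(j)}/R(k)$, so by Hoeffding
\[
\Pr\!\left[\Bigl|\sum_k w_k^{(j)}X_k\Bigr|>t\right]\le 2\exp\!\left(-\frac{2t^2}{\sum_k(w_k^{(j)})^2/R(k)}\right).
\]
The key lemma to prove is that the allocation $R(k)=\lceil R\vec{\pi}_i(k)\rceil$ forces $\sum_k(w_k^{(j)})^2/R(k)\le 1/R$ uniformly in the target $j$. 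This rests on two elementary facts: (i) $w_k^{(j)}\le\vec{\pi}_i(k)$, since $\vec{\pi}_j^\ell(k)\le 1$ and $\sum_\ell\vec{\pi}_i^\ell(k)=\vec{\pi}_i(k)$; and (ii) $\sum_k w_k^{(j)}\le 1$, since $\sum_k\vec{\pi}_j^\ell(k)\vec{\pi}_i^\ell(k)\le(1-\scw)\scw^\ell$ and the geometric sum in $\ell$ equals $1$. Combining (i) with $(w_k^{(j)})^2/\vec{\pi}_i(k)\le w_k^{(j)}$ and using (ii) yields the bound. Taking $t=(1-\scw)^2\e/2$ and $R=6\log n/((1-\scw)^4\e^2)$ then drives the per-target failure probability to at most $2/n^3$, and a union bound over the $n$ targets $v_j$ yields total failure at most $1/n$ and total error at most $\e/2+\e/2=\e$.

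The main obstacle is precisely this variance bound: it is the proportional allocation $R(k)\propto\vec{\pi}_i(k)$ that collapses the nested sum $\sum_k(w_k^{(j)})^2/R(k)$ into a $j$-independent constant of order $1/R$, allowing $R=\tilde{O}(1/\e^2)$ rather than the $\tilde{O}(n/\e^2)$ budget that naive Hoeffding would demand. Everything else (the truncation tail, the Hoeffding step, and the final union bound) is routine once the decomposition into PPR inner products is in place, but without this sharp variance bound the whole point of \exsim\ collapses.
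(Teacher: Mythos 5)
Your proof is correct and follows essentially the same route as the paper's: the same truncation/estimation decomposition into $\ell$-hop PPR inner products, and the same key bound $\sum_k (w_k^{(j)})^2/R(k)\le 1/R$ via the allocation $R(k)=\lceil R\,\vec{\pi}_i(k)\rceil$ together with $w_k^{(j)}\le \vec{\pi}_i(k)$ and $\sum_k w_k^{(j)}\le 1$, which is exactly the content of the paper's Lemma~\ref{lem:variance}; the only cosmetic difference is that you close with Hoeffding where the paper invokes Bernstein, and since the squared ranges here obey the same $1/R$ bound as the variance, this yields the same (in fact slightly cleaner) exponent. The one small omission is that the theorem, read against the paper's definition of a probabilistic exact algorithm, requires the guarantee to hold simultaneously for \emph{all} source nodes, so the final union bound should run over all $n^2$ source--target pairs rather than only the $n$ targets for a fixed $v_i$ --- a factor your per-pair bound of $2/n^3$ absorbs only up to a constant in $R$.
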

Theorem~\ref{thm:basic} essentially states that with high probability, the basic \exsim algorithm can compute any
single-source SimRank query with additive $\e$. 
The proof of Theorem~\ref{thm:basic} is fairly technical shown in appendix, however, the basic idea
is to show that the variance of the estimator $\vec{s}^{L}(j)$ can be bounded by $O({1\over R}) = O(\e^2)$.
In particular, we have the following Lemma.

\begin{lemma}
	\label{lem:variance}
	The variance of $\vec{s}^{L}(j)$ is bounded by 
	\begin{equation}
	\label{eqn:variance-all}
	\Var[\vec{s}^{L}(j)] \le {1\over (1-\scw)^4R}\sum_{k=1}^n
	{  \vec{\pi}_i(k)^2 \vec{\pi}_j(k)^2\over \rho(k)} \cdot
	{D(k,k)}.
	\end{equation}
	In particular, by setting $\rho(k) = R(k)/R= \lceil R
	\vec{\pi}_i(k)\rceil /R$ in the basic \exsim algorithm, we
	have
	\begin{equation}
	\label{eqn:variance-basic}
	\Var[\vec{s}^{L}(j)] \le   {1\over (1-\scw)^4R}.
	\end{equation}
\end{lemma}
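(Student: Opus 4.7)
My plan is to start from the linear representation of $\vec{s}^{L}(j)$ as a weighted sum of independent Bernoulli random variables, as sketched in the (commented-out) derivation just preceding the lemma. Concretely, substituting $\D(k,k) = \tfrac{1}{R(k)}\sum_{r=1}^{R(k)}\D_r(k,k)$ into the dot-product form
$$\vec{s}^{L}(j) = \frac{1}{(1-\scw)^2}\sum_{k=1}^{n}\sum_{\ell=0}^{L}\vec{\pi}_i^\ell(k)\vec{\pi}_j^\ell(k)\D(k,k)$$
yields an expression in which $\vec{s}^{L}(j)$ is a deterministic linear combination of the $R=\sum_k R(k)$ Bernoulli variables $\D_r(k,k)$, each with mean $D(k,k)$ and variance $D(k,k)(1-D(k,k))\le D(k,k)$. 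The key fact to flag here is that these Bernoullis are jointly independent, not merely across $r$ within a fixed $k$: Algorithm~\ref{alg:Dk} is invoked separately for each $v_k$ using fresh pairs of $\scw$-walks, so the $\D_r(k,k)$'s across different $k$ are built from disjoint randomness.

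Given independence, I would apply $\Var[\sum a_t X_t] = \sum a_t^2 \Var[X_t]$ and collect terms. Writing $\rho(k)=R(k)/R$, the coefficient on $\D_r(k,k)$ is $\tfrac{1}{R(1-\scw)^2}\cdot\tfrac{1}{\rho(k)}\sum_{\ell}\vec{\pi}_i^\ell(k)\vec{\pi}_j^\ell(k)$, and there are $R\rho(k)$ such terms for each $k$. Squaring and summing produces
$$\Var[\vec{s}^{L}(j)] \le \frac{1}{R(1-\scw)^4}\sum_{k=1}^{n}\frac{\bigl(\sum_{\ell=0}^{L}\vec{\pi}_i^\ell(k)\vec{\pi}_j^\ell(k)\bigr)^2}{\rho(k)}\,D(k,k),$$
after bounding $1-D(k,k)\le 1$. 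The critical algebraic step is then the termwise inequality $\sum_{\ell}\vec{\pi}_i^\ell(k)\vec{\pi}_j^\ell(k)\le \vec{\pi}_i(k)\vec{\pi}_j(k)$, which follows from non-negativity: $\vec{\pi}_i(k)\vec{\pi}_j(k)=\sum_{\ell,\ell'}\vec{\pi}_i^\ell(k)\vec{\pi}_j^{\ell'}(k)$ dominates its diagonal subsum. Applying this yields inequality~\eqref{eqn:variance-all}.

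For the specialization~\eqref{eqn:variance-basic}, I would substitute $\rho(k)=\lceil R\vec{\pi}_i(k)\rceil/R\ge \vec{\pi}_i(k)$, which cancels one factor of $\vec{\pi}_i(k)$ and reduces the bound to $\tfrac{1}{R(1-\scw)^4}\sum_k \vec{\pi}_i(k)\vec{\pi}_j(k)^2 D(k,k)$. Bounding $\vec{\pi}_j(k)\le 1$ and $D(k,k)\le 1$ pointwise, this is at most $\tfrac{1}{R(1-\scw)^4}\sum_k \vec{\pi}_i(k)$, and the truncated PPR vector satisfies $\sum_k \vec{\pi}_i(k) = (1-\scw)\sum_{\ell=0}^{L}\scw^\ell = 1-\scw^{L+1}\le 1$ because $P$ is column-substochastic (exactly stochastic on nodes with positive in-degree).

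I do not expect a genuine obstacle here; the work is mostly bookkeeping. The only points that warrant care are (i) justifying independence across both the $r$ and $k$ indices so that the variance additivity is legal, (ii) keeping track of the single $1/\rho(k)$ factor that survives after the $R\rho(k)$ copies absorb one power of $1/\rho(k)^2$, and (iii) the elementary but easy-to-overlook diagonal-dominance step that converts $\sum_\ell \vec{\pi}_i^\ell(k)\vec{\pi}_j^\ell(k)$ into the clean product $\vec{\pi}_i(k)\vec{\pi}_j(k)$.
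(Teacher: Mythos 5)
Your proposal is correct and follows essentially the same route as the paper's proof: decompose $\vec{s}^{L}(j)$ into a weighted average of the independent Bernoulli variables $\D_r(k,k)$, apply variance additivity, bound $\Var[\D_r(k,k)]\le D(k,k)$, reduce $\sum_{\ell}\vec{\pi}_i^\ell(k)\vec{\pi}_j^\ell(k)$ to $\vec{\pi}_i(k)\vec{\pi}_j(k)$, and then specialize via $\rho(k)\ge\vec{\pi}_i(k)$. The only cosmetic difference is that you justify the middle reduction by diagonal dominance of the non-negative double sum, while the paper bounds $\bigl(\sum_\ell \vec{\pi}_i^\ell(k)\vec{\pi}_j^\ell(k)\bigr)^2$ by the product of squared sums (labeling it Cauchy--Schwarz); these are the same non-negativity fact, and your explicit remarks on cross-$k$ independence and on $\sum_k\vec{\pi}_i(k)\le 1$ for the truncated vector are slightly more careful than the paper's.
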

Note that we only need inequality~\eqref{eqn:variance-basic} to derive
the error bound for the basic \exsim algorithm. The more complex
inequality~\eqref{eqn:variance-all} will be used to design various
optimization techniques.

\subsection{Optimizations}
Although the basic \exsim algorithm is a working algorithm for exact
single-source SimRank computation on large graphs, it suffers from some drawbacks. First of
all,  the $O(n\log {1\over \e})$  space overhead can be several
times larger than the actual graph size $m$. Secondly, we still need to simulate $R=O\left({\log n \over \e^2}\right)$ of 
pairs of $\scw$-walks, which is a significant cost for $\e_{min} = 10^{-7}$. Although parallelization can help, we are still
interested in developing algorithmic techniques that reduces the
number of random walks. In this section, we provide three
optimization techniques that address these drawbacks.

\header{\bf Sparse Linearization.} We design a sparse version of Linearization that
significantly reduces the $O\left(n \log {1 \over \e}\right)$ space overhead while
retaining the $O(\e)$ error guarantee . Recall
that this space overhead is causing by storing the $\ell$-hop
Personalized PageRank vectors  $\vec{\pi}_i^{\ell}$ for $\ell
= 0, \ldots, L$. We propose to make the following simple modification:
Instead of storing the dense vector $\vec{\pi}_i^{\ell}$, we
sparsify  the vector by removing all entries of  
with $\vec{\pi}_i^{\ell}(k) \le (1-\scw)^2\e$. To understand the
effectiveness of this approach, recall that a nice property of the
$\ell$-hop Personalized PageRank vectors is that all
possible entries sum up to $ \sum_{\ell=0}^\infty
\sum_{k=1}^n \vec{\pi}_i^{\ell}(k)= \sum_{k=1}^n
\vec{\pi}^{\ell}(k)   = 1$. By the Pigeonhole principle, the number of
$\vec{\pi}_i^{\ell}(k)$'s that are larger than $(1-\scw)^2\e$ is bounded by  ${1
  \over (1-\scw)^2 \e}$. Thus the
space overhead is reduced to $O\left({1
  \over \e}\right)$. This overhead is acceptable for exact
computations where we set $\e = \e_{min} = 10^{-7}$, as it does not
scale with the graph size.

The following Lemma proves that the sparse Linearization will only
introduce an extra additive error of $\e$. If we scale down $\e$ by a
factor of $2$, the total error guarantee  and the asymptotic running
time of \exsim will remain the same, and the space overhead is reduced
to $O\left( {1\over \e} \right)$.
\begin{lemma}
  \label{lem:sparse}
  The sparse Linearization introduces an extra additive error of
  $\e$ and reduces the space overhead to  $O\left({1
  \over \e}\right)$.
\end{lemma}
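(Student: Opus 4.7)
The plan is to prove the two halves of the lemma separately, both by elementary accounting against the fundamental identity $\sum_{\ell=0}^{\infty}\sum_{k=1}^n \vec{\pi}_i^{\ell}(k) = 1$, which holds because $\vec{\pi}_i^\ell(k)$ is the probability that a $\scw$-walk from $v_i$ terminates at $v_k$ after exactly $\ell$ steps.

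For the space claim, I would apply the Pigeonhole principle directly. Since the collected $\ell$-hop PPR masses sum to at most $1$, the number of pairs $(\ell,k)$ with $\vec{\pi}_i^{\ell}(k) > (1-\scw)^2\e$ cannot exceed $\frac{1}{(1-\scw)^2\e} = O(1/\e)$. As only these entries survive across all $L+1$ sparsified vectors, the aggregate storage of the $\vec{\pi}_i^\ell$'s drops from $O(n\log(1/\e))$ to $O(1/\e)$, independent of both $n$ and $L$.

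For the error claim, let $\tilde{\vec{\pi}}_i^\ell$ denote the sparsified vector (zeroing each entry at most $(1-\scw)^2\e$) and let $\tilde{\vec{s}}^L$ be the resulting estimator. Substituting $\vec{\pi}_j^\ell = (1-\scw)(\scw P)^\ell \vec{e}_j$ into equation~\eqref{eqn:sL} yields the bilinear representation
\begin{equation*}
\vec{s}^L(j) = \frac{1}{(1-\scw)^2}\sum_{\ell=0}^{L}\sum_{k=1}^n \vec{\pi}_i^\ell(k)\,\vec{\pi}_j^\ell(k)\,\D(k,k),
\end{equation*}
so the discarded contribution at $v_j$ equals $\frac{1}{(1-\scw)^2}\sum_{\ell,k :\, \vec{\pi}_i^\ell(k)\le (1-\scw)^2\e}\vec{\pi}_i^\ell(k)\,\vec{\pi}_j^\ell(k)\,\D(k,k)$. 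Bounding $\vec{\pi}_i^\ell(k) \le (1-\scw)^2\e$ on the truncated set, $\D(k,k) \le 1$, and then applying the same identity to $v_j$ (whose $\ell$-hop PPR masses $\vec{\pi}_j^\ell(k)$ summed over all $\ell,k$ are at most $1$), the total deficit is at most $\e$. Rescaling $\e \mapsto \e/2$ in the basic algorithm absorbs this extra error without changing the asymptotic cost, and combining with Theorem~\ref{thm:basic} yields the claimed guarantee.

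The main obstacle I anticipate is that sparsification touches the stored $\vec{\pi}_i^\ell$'s that also feed into the sample budget $R(k) = \lceil R\vec{\pi}_i(k)\rceil$ and into the iterative recurrence for $\vec{s}^\ell$. I would argue that each of these downstream effects is a linear perturbation in $\vec{\pi}_i^\ell - \tilde{\vec{\pi}}_i^\ell$, so the same Pigeonhole-plus-truncation bound of magnitude $\e$ propagates through; in particular, the Monte Carlo variance bound from Lemma~\ref{lem:variance} remains intact because the $\D(k,k)$ estimators are still unbiased per $\scw$-walk pair, and only the deterministic truncation error needs to be charged separately.
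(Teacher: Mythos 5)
Your proposal is correct and follows essentially the same route as the paper: the Pigeonhole argument on $\sum_{\ell,k}\vec{\pi}_i^{\ell}(k)=1$ for the $O(1/\e)$ space bound, and bounding the discarded bilinear contribution by $\frac{1}{(1-\scw)^2}\cdot(1-\scw)^2\e\cdot\sum_{\ell,k}\vec{\pi}_j^\ell(k)\cdot\D(k,k)\le\e$ for the error. Your closing remark about propagating the perturbation through the sample budget and the recurrence is a reasonable extra precaution that the paper's proof does not spell out, but it does not change the argument.
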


\header{\bf Sampling according to $\vec{\pi}_i(k)^2$.} Recall that in
the basic \exsim algorithm, we simulate $R$ pairs of $\scw$-walks in
total, and distribute $ \vec{\pi}_i(k)$ fraction
of the $R$ samples to estimate $D(k,k)$.  A natural question
is that, is there a better scheme to distribute these $R$ samples?
It turns out if we distribute the
samples according to $\vec{\pi}_i(k)^2$, we can further reduce the variance of
the estimator and hence achieve a better running time. More precisely,
we will set   $R(k) =
  R \left \lceil {\vec{\pi}_i(k)^2 \over  \|\vec{\pi}_i\|^2 }
  \right\rceil$, where   $\|\vec{\pi}_i\|^2 = \sum_{k=1}^n  \vec{\pi}_i(k)^2$ is the squared 
  norm of the Personalized PageRank vector $\vec{\pi}_i$. 
  The following Lemma, whose proof can be found in appendix, shows that by sampling according to
  $\vec{\pi}_i(k)^2$, we can reduce the number of sample $R$ by a
  factor of $\|\vec{\pi}_i\|^2$. 
  
  \begin{lemma}
\label{lem:variance1}
 By sampling according to $\vec{\pi}_i(k)^2$, the number of random
 samples required is reduced to $O\left({\|\vec{\pi}_i\|^2 \log n \over
     \e^2}\right)$. 
\end{lemma}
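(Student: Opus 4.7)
My plan is to plug the new allocation $\rho(k) = \vec{\pi}_i(k)^2/\|\vec{\pi}_i\|^2$ into the general variance bound~\eqref{eqn:variance-all} of Lemma~\ref{lem:variance}, and then reuse the Bernstein-plus-union-bound argument already employed for Theorem~\ref{thm:basic}. Substituting this $\rho(k)$ cancels the $\vec{\pi}_i(k)^2$ factor in the numerator of~\eqref{eqn:variance-all}, yielding
\begin{equation*}
\Var\bigl[\vec{s}^{L}(j)\bigr] \;\le\; \frac{\|\vec{\pi}_i\|^2}{(1-\scw)^4 R}\sum_{k=1}^n \vec{\pi}_j(k)^2\, D(k,k).
\end{equation*}
Since $D(k,k)\in[0,1]$ and the entries of $\vec{\pi}_j$ are nonnegative with both $\vec{\pi}_j(k)\le 1$ and $\sum_k \vec{\pi}_j(k) \le 1$, I can bound $\sum_k \vec{\pi}_j(k)^2 \le 1$ and conclude $\Var[\vec{s}^{L}(j)] \le \|\vec{\pi}_i\|^2/((1-\scw)^4 R)$, a factor-$\|\vec{\pi}_i\|^2$ improvement over~\eqref{eqn:variance-basic}.

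Next, I would decompose $\vec{s}^{L}(j)$ as an average of $R$ independent bounded random variables, exactly as in the derivation that yielded~\eqref{eqn:variance-all}, and apply a Bernstein-type tail bound. Choosing $R = \Theta(\|\vec{\pi}_i\|^2 \log n / \e^2)$ makes the variance contribution to the tail exponent of order $\Theta(\log n)$, so $|\vec{s}^{L}(j) - \E[\vec{s}^{L}(j)]|$ exceeds $\e/2$ with probability at most $1/n^2$ for any fixed target $v_j$. Truncating the SimRank series at depth $L = \lceil \log_{1/c}(2/\e)\rceil$ contributes an additional deterministic bias of at most $\e/2$ (as already argued for Theorem~\ref{thm:basic}), so the overall additive error stays within $\e$; a union bound over all $n$ target nodes then yields the $1 - 1/n$ success probability, with sample budget exactly $R = O(\|\vec{\pi}_i\|^2 \log n / \e^2)$.

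The main obstacle will be bookkeeping the per-sample range required by Bernstein: the summand corresponding to node $v_k$ scales like $\|\vec{\pi}_i\|^2\,\vec{\pi}_j(k)/\vec{\pi}_i(k)$, which can blow up when $\vec{\pi}_i(k)$ is much smaller than $\vec{\pi}_j(k)$. The standard remedy is to observe that the linear range term in Bernstein's inequality contributes an additive piece of order $1/R$ (up to logarithmic factors) that is dominated by the square-root variance term at the chosen sample budget; alternatively, one may drop coordinates with $\vec{\pi}_i(k) < (1-\scw)^2 \e$ using the same sparsification argument as in Lemma~\ref{lem:sparse}, at the cost of only an $O(\e)$ extra bias that can be absorbed by rescaling $\e$ by a constant factor.
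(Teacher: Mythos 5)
Your main line coincides with the paper's: plug $\rho(k)\ge \vec{\pi}_i(k)^2/\|\vec{\pi}_i\|^2$ into inequality~\eqref{eqn:variance-all} of Lemma~\ref{lem:variance}, bound $\sum_{k}\vec{\pi}_j(k)^2 D(k,k)\le \|\vec{\pi}_j\|^2\le \|\vec{\pi}_j\|_1^2=1$ to get $\Var[\vec{s}^{L}(j)]\le \|\vec{\pi}_i\|^2/((1-\scw)^4R)$, and feed this into the same Bernstein-plus-truncation-plus-union-bound machinery already used for Theorem~\ref{thm:basic}. The only genuine point of divergence is the one you yourself flag as the obstacle: the range $b$ of the individual summands, which are bounded by $\vec{\pi}_i(k)/\rho(k)$. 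Your first proposed remedy does not work as stated: with $\rho(k)=\vec{\pi}_i(k)^2/\|\vec{\pi}_i\|^2$ and no floor, the ratio equals $\|\vec{\pi}_i\|^2/\vec{\pi}_i(k)$, which is genuinely unbounded as $\vec{\pi}_i(k)\to 0$, so there is no finite $b$ to put into Lemma~\ref{lem:conc} and the claim that the linear range term is ``dominated'' by the variance term has no content. The paper's fix is built into the allocation itself: it takes $R(k)=\lceil R\,\vec{\pi}_i(k)^2/\|\vec{\pi}_i\|^2\rceil$, so every node gets at least one sample and $\vec{\pi}_i(k)/\rho(k)=R\vec{\pi}_i(k)/\lceil R\vec{\pi}_i(k)^2/\|\vec{\pi}_i\|^2\rceil\le \min\{R\vec{\pi}_i(k),\ \|\vec{\pi}_i\|^2/\vec{\pi}_i(k)\}$, which is maximized at $\vec{\pi}_i(k)=\|\vec{\pi}_i\|/\sqrt{R}$ and hence bounded by $b=\|\vec{\pi}_i\|\sqrt{R}$. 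Your second remedy --- discarding coordinates with $\vec{\pi}_i(k)<(1-\scw)^2\e$ via the Lemma~\ref{lem:sparse} argument at the cost of an extra $O(\e)$ bias --- is a valid alternative and yields the comparable range bound $b\le \|\vec{\pi}_i\|^2/((1-\scw)^2\e)$; either route closes the proof. So the proposal is correct provided you commit to the sparsification (or, equivalently, a ceiling/floor on $\rho(k)$ as in the paper) and drop the first, vacuous remedy.
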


To demonstrate the effectiveness of sampling according to $\vec{\pi}_i(k)^2$, notice that in the worst case, $\|\vec{\pi}_i\|^2$ is as large as $\|\vec{\pi}_i\|_1^2
= 1$, so this optimization technique is essentially useless. However, it is
known~\cite{BahmaniCG10} that on scale-free networks, the Personalized PageRank vector
$\vec{\pi}_i$ follows a power-law distribution: let
$\vec{\pi}_i(k_j)$ denote the $j$-th largest entry of
$\vec{\pi}_i$, we can assume $\vec{\pi}_i(k_j) \sim {j^{-\beta}
	\over n^{1-\beta}}$ for some power-law exponent $\beta \in
(0,1)$. In this case, $\|\vec{\pi}_i\|^2$ can be bounded by $O\left(
\sum_{j=1}^n \left( {j^{-\beta}
	\over n^{1-\beta}} \right)^2 \right) = O\left(\max\left\{{lnn\over
	n}, {1\over n^{2-2\beta}} \right\}\right)$, and the
$\|\vec{\pi}_i\|^2$ factor becomes significant for any power-law
exponent $\beta <1$.






\begin{algorithm}[t]
	\begin{small}
		\caption{Improved method for estimating $D(k,k)$} \label{alg:newD}
		\BlankLine
		\KwIn{Graph $G$, node $v_k$,
			sample number $R(k)$ }
		\KwOut{An estimator $\D(k,k)$ for $D(k,k)$}
		\If{$\din(v_k) = 0$}{
			\Return{$\D(k,k)=1$}\;
		}
		\ElseIf{$\din(v_k) = 1$}{
			\Return{$\D(k,k)=1-c$}\;
		}
		$P^{\ell}(x,k)=0$ for $\ell \ge 0, x\in V$\;
		$P^{0}(k,k)=1$\;
		$E_k=0$\;
		\For {$\ell$ from $0$ to $\infty$}
		{
			\For{each $v_{q}$ with non-zero $\left(P^\top\right)^{\ell}(k,q)$}
			{
				Calculate $Z_\ell(k,q)$ using equation~\eqref{eqn:zl}\;
			}
			\For{$\ell'$ from $0$ to $\ell$}
			{
				\For{each $v_{q'}$ with non-zero $\left(P^\top\right)^{\ell-\ell'}(k,q')$}
				{
					\For{each $v_x$ with non-zero $\left(P^\top\right)^{\ell'}(q',x)$}
					{
						\For{each $v_q \in \inN(v_x)$}
						{
							$\left(P^\top\right)^{\ell'+1}(q',q) +=
							{\left(P^\top\right)^{\ell'}(q',x) \over \din(v_x)}$\;
							$E_k += 1$\;
							\If{ $E_k \ge {2R(k) \over \scw}$}
							{
								$\ell(k) = \ell$ and goto OUTLOOP\;
							}
						}
					}
				}
			}
			$\ell = \ell+1$\;
		}
	    OUTLOOP\;
		$\D(k,k) = 1- \sum_{\ell=1}^{\ell(k)}\sum_{q=1}^n
		Z_{\ell}(k, q)$\;
		\For{$z$ from $1$ to $R(k)$}
		{
			Sample two independent non-stop random walks from $v_k$\;
			\If{Two random walks reaches nodes $v_x$ and $v_y$
				at the $\ell(k)$ steps without meeting}
			{
				Sample a $\scw$-walks from $v_x$ and $v_y$\;
				\If{the two $\scw$-walks meet}{
					$\D(k,k) = \D(k,k) - c^{\ell(k)}/R(k)$\;
				}
			}                 
		}
		\Return{$\D(k,k)$}\; 
	\end{small}
\end{algorithm}

  \header{\bf Local deterministic exploitation for $D$.}

 The inequality~\eqref{eqn:variance-all}  in
Lemma~\ref{lem:variance} also suggests that we can reduce the variance of the
estimator $\vec{s}^L(j)$ by refining the Bernoulli estimator $\D(k,k)$.
Recall that we  sample $R(k) = \lceil R\vec{\pi}_i(k) \rceil$ or $R(k) =
R \left \lceil {\vec{\pi}_i(k)^2 \over  \|\vec{\pi}_i\|^2 }
\right\rceil$ pairs
of $\scw$-walks to estimate $D(k,k)$. If $\vec{\pi}_i(k)$ is large, we will
simulate a large number of  $\scw$-walks from $v_k$ to estimate
$D(k,k)$. In that case, the first few steps of these random walks will
most likely visit the same local structures around $v_k$, so it
makes sense to exploit these local structures deterministically, and
use the random walks to approximate the global structures. More
precisely, let $Z_\ell(k)$ denote  the probability that two
$\scw$-walks from $v_k$ first meet at the $\ell$-th step. Since these
events are mutually exclusive for different $\ell$'s, we have 
\begin{align*}
\label{eqn:lmp}
D(k,k)&= 1- \Pr[\textrm{two } \scw\textrm{-walks from } v_k \textrm{
	meet}]= 1- \sum_{\ell=1}^\infty Z_\ell(k).
\end{align*}
The idea is to deterministically compute $\sum_{\ell=1}^{\ell(k)}Z_\ell(k)$ for some tolerable step
$\ell(k)$, and using random walks to estimate the other part
$\sum_{\ell=\ell(k)+1}^{\infty}Z_\ell(k)$. It is easy to see that by
deterministically computing $Z_\ell(k)$ for the first $\ell(k)$
levels, we reduce the variance $\Var(D(k,k))$ by at least
$c^{\ell(k)}$.

A simple algorithm to compute $Z_\ell(k)$ is to list all possible
paths of length $\ell$ from $v_k$ and aggregate all meeting
probabilities of any two paths. However, the number of paths increases
rapidly with the length $\ell$, which makes this algorithm inefficient
on large graphs. Instead, we will derive the close form for $Z_\ell(k)$
in terms of the transition probailities.  In
particular,  let $Z_\ell(k,q)$ denote the  probability that two
$\scw$-walks first meet at node $v_q$ at their $\ell$-th steps. We have
$Z_\ell(k) = \sum_{q=1}^n Z_\ell(k,q)$, and hence 
\begin{align}
D(k,k) = 1-\sum_{\ell=1}^\infty \sum_{q=1}^n Z_\ell(k,q)
\end{align}
Recall that $P^{\ell}$ (the $\ell$-th power
of the (reverse) transition matrix $P$) is
the $\ell$-step (reverse) transition matrix.  We have the
following Lemma that relates $Z_\ell(k,q)$ with the transition
probabilities. 

\begin{lemma}
	\label{lem:zl}
	$ Z_\ell(k,q) $ satisfies the following recursive form:
	\begin{equation} \label{eqn:zl}
	\begin{aligned}
		Z_\ell(k,q)= 
		&c^\ell \left(P^\top\right)^{\ell}(k,q)^2 \\
		&- \sum_{\ell'=1}^{\ell-1}
	\sum_{q'=1}^{n} c^{\ell'} \left(P^\top \right)^{\ell'}(q',q)^2 Z_{\ell-\ell'}(k, q').
	\end{aligned}
	\end{equation}
\end{lemma}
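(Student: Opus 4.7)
The plan is to express the probability that two independent $\scw$-walks from $v_k$ are both located at $v_q$ at step $\ell$ in two different ways, and then equate them to isolate $Z_\ell(k,q)$.

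First I would compute a closed form for the probability that a single $\scw$-walk from $v_k$ is still active and located at $v_q$ after exactly $\ell$ steps. Since each step survives with probability $\scw$, this probability is $c^{\ell/2} \cdot (P^\top)^\ell(k,q)$. By independence of the two walks, the probability that both $\scw$-walks from $v_k$ are simultaneously at $v_q$ at step $\ell$ equals the square, namely $c^\ell \cdot (P^\top)^\ell(k,q)^2$. Call this quantity $M_\ell(k,q)$.

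Next I would decompose $M_\ell(k,q)$ by conditioning on the first meeting event. Since the two walks must have a first meeting on or before step $\ell$ in order to both be at $v_q$ at step $\ell$, I can partition by the pair $(\ell', q')$ where $\ell' \in \{1,\ldots,\ell\}$ is the step of first meeting and $q'$ is the meeting node. Given that the walks first meet at $v_{q'}$ at step $\ell'$, the two walks evolve independently for the remaining $\ell - \ell'$ steps, and the probability that each ends at $v_q$ while still active is $c^{(\ell-\ell')/2} (P^\top)^{\ell-\ell'}(q',q)$, so their joint probability is $c^{\ell-\ell'}(P^\top)^{\ell-\ell'}(q',q)^2$. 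Summing over $(\ell',q')$ gives
\begin{equation*}
c^\ell (P^\top)^\ell(k,q)^2 = \sum_{\ell'=1}^{\ell}\sum_{q'=1}^n Z_{\ell'}(k,q') \cdot c^{\ell-\ell'}(P^\top)^{\ell-\ell'}(q',q)^2.
\end{equation*}
Isolating the $\ell' = \ell$ term (which collapses to $Z_\ell(k,q)$ since $(P^\top)^0(q',q) = \mathbb{1}[q'=q]$) and reindexing the remaining sum via $\ell' \leftarrow \ell - \ell'$ yields exactly equation~\eqref{eqn:zl}.

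The main obstacle I anticipate is justifying the independence used in the conditional step, in particular the strong Markov property for the pair of $\scw$-walks at the first meeting time. One has to be careful because conditioning on ``first meeting at $v_{q'}$ at step $\ell'$'' restricts the past trajectories, yet by the memoryless nature of the geometric stopping time and of the uniform in-neighbor transitions, the future evolution of each walk from $v_{q'}$ is independent of the past and of the other walk's future. Once this is spelled out cleanly, the decomposition and the algebraic rearrangement are routine.
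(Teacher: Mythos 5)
Your proposal is correct and follows essentially the same route as the paper: both arguments observe that $c^\ell (P^\top)^\ell(k,q)^2$ is the probability that the two $\scw$-walks are simultaneously at $v_q$ at step $\ell$, decompose that event by the time and location of the first meeting using the memoryless property, and subtract the contributions from first meetings strictly before step $\ell$. Your version is slightly more explicit in writing the full decomposition and isolating the $\ell'=\ell$ term, but the underlying idea is identical to the paper's proof.
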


Given a node $v_k$ and a pre-determined target level $\ell(k)$, Lemma~\ref{lem:zl} also
suggests a simple algorithm to compute
$Z_\ell(k,q)$ for all $\ell \le \ell(k)$. We start by
performing BFS from node $v_k$  for up to $\ell(k)$ levels to calculate the
transition probabilities $\left(P^\top\right)^\ell(k,q)$ for $\ell =
0,\ldots, \ell(k)$ and $v_q \in V$. For each node $q'$ visited
at the $\ell'$-th level, we start a BFS from $q'$ for $\ell(k) -
\ell'$ levels to calculate  $\left(P^\top\right)^{\ell(k)-\ell'}(q',q)$ for $\ell = 1,\ldots, \ell(k)$ and $v_q \in V$. Then we use
equation~\eqref{eqn:zl} to calculate $Z_\ell(k,q) $ for $\ell
=0,\ldots, \ell(k)$ and $q\in V$. Note that this approach exploits strictly
less edges than listing all possible paths of length $\ell(k)$, as
some of the paths are combined in the computation of the transition
probabilities. 

However, a major problem with the above method is that the target
level $\ell(k)$ has to be predetermined, which makes the running time
unpredictable. An improper value of $\ell(k)$ could lead to the
explosion of the running time.  Instead, we will use an adaptive algorithm
to compute $Z_\ell(k)$.

Algorithm~\ref{alg:newD} illustrates the new
method for estimating $D(k,k)$. Given a node $v_k$ and a sample
number $R(k)$, the goal is to give an estimator for $D(k,k)$. For the
two trivial case
$\din(k) = 0$ and $\din(k)=1$, we return $D(k,k) = 1$ and $1-c$
accordingly (lines 1-4). 
For other cases, we iteratively compute all possible transition probabilities $\left(P^\top\right)^{\ell'+1}(q',q)$ for all $v_{q'}$ that is reachable
from $k$ with $\ell-\ell'$ steps (lines 5-10). Note that these $v_{q'}$'s
are the nodes with $\left(P^\top\right)^{\ell-\ell'}(k,q')>0$. To ensure
the deterministic exploitation stops in
time, we use a counter $E_k$ to record the total number of edges traversed
so far (line 11). If $E_k$ exceeds ${2R(k) \over \scw}$, the expected number of steps for simulating 
$R(k)$ pairs of $\scw$-walks, we terminate the deterministic exploitation and set
$\ell(k)$ as the current target  level for $v_k$ (lines 12-13). After we fix
$\ell(k)$ and compute $\sum_{\ell = 1}^{\ell(k)} Z_\ell(k)$ (lines 14-17), we
will use random walk sampling to estimate
$\sum_{\ell=\ell(k)+1}^{\infty}Z_\ell(k)$ (lines 18-23). In particular, we start two
special random walks from $v_k$. The random walks
do not stop in its first $\ell(k)$ steps; after the $\ell(k)$-th
step, each random walk stops with probability $\scw$ at each step. It is easy to see
that the probability of the two special random walks meet after
$\ell(k)$ steps is ${1\over
c^{\ell(k)}}\sum_{\ell=\ell(k)+1}^{\infty}Z_\ell(k)$. Consequently,
we can use the fraction of the random walks that meet multiplied by
$c^{\ell(k)}$ as an unbiased estimator for
$\sum_{\ell=\ell(k)+1}^{\infty}Z_\ell(k)$.

 \header{\bf Parallelization.} The \exsim algorithm is highly
 parallelizable as it only uses two primitive operations:
 matrix-(sparse) vector
 multiplication and random walk simulation. Both operations are embarrassingly
 parallelizable on GPUs or multi-core CPUs. The only exception is the
 local deterministic exploitation for $D(k,k)$. To parallelize this operation, we
 can apply Algorithm~\ref{alg:newD} to multiple $v_k$
 simultaneously. Furthermore, we can balance the load of each thread by
 applying Algorithm~\ref{alg:newD} to nodes $v_k$'s with similar
 number of samples  $R(k)$ in each epoch.

\section{Experiments} \label{sec:experiments} 
In this section,  we experimentally study \exsim and the other
single-source algorithms. We first evaluate \exsim against other
methods to prove \exsim's ability  of exact computation (i.e., $\e_{min} = 10^{-7}$) both on small and large graphs. 
Then we conduct an ablation study to demonstrate the effectiveness of the optimization techniques.


\header{\bf Datasets and Environment. } We use four small 
datasets and four large datasets, obtained from \cite{snapnets,LWA}. The
details of these datasets can be found in Table~\ref{tbl:datasets}.
All experiments are conducted on a machine with an Intel(R) Xeon(R)
E7-4809 @2.10GHz CPU and 196GB memory. 

\begin{table}[t]
	\centering
	\caption{Data Sets.}
	\vspace{-2mm}
	\begin{small}
		\begin{tabular}{|l|l|r|r|} 
			\hline
			{\bf Data Set} & {\bf Type} & {\bf $\boldsymbol{n}$} & {\bf $\boldsymbol{m}$}	 \\ \hline
			ca-GrQc (GQ) & undirected & 5,242 & 28,968\\
			CA-HepTh(HT) & undirected & 9,877 & 51,946\\
			Wikivote (WV) & directed & 7,115 & 103,689\\
			CA-HepPh (HP) & undirected & 12008 & 236978\\
			DBLP-Author (DB) & undirected & 5,425,963 & 17,298,032 \\
			IndoChina (IC)	& 	directed &	7,414,768  &	191,606,827		\\
			
			It-2004 (IT)	&	directed & 41,290,682 & 1,135,718,909\\
			
			Twitter (TW) & directed & 41,652,230& 1,468,364,884 \\
			\hline
		\end{tabular}
	\end{small}
	\label{tbl:datasets}
\end{table}

\begin{figure*}[!t]
	\begin{small}
		\centering
		\begin{tabular}{cccc}
			\hspace{-6mm} \includegraphics[height=35mm]{./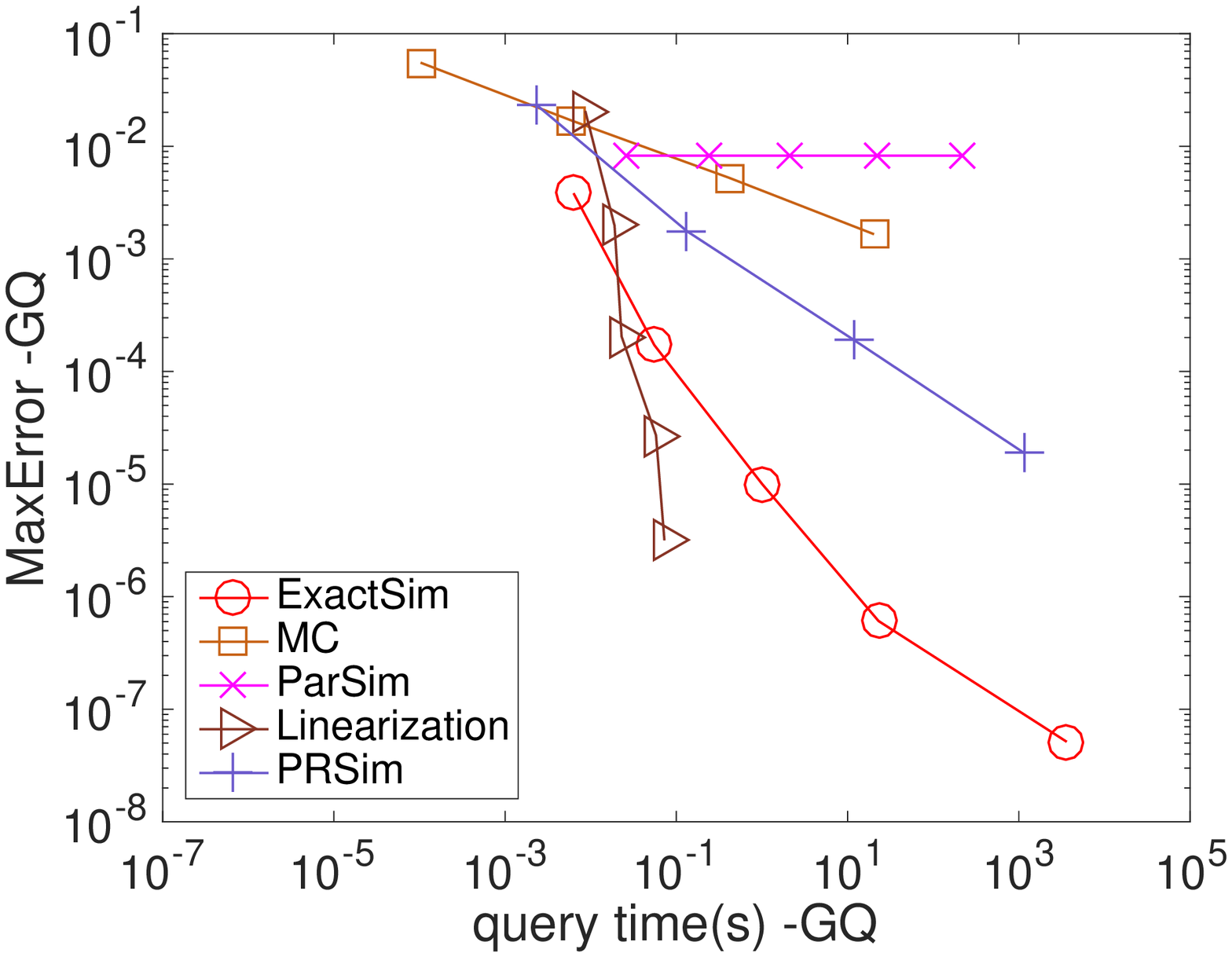} &
			\hspace{-3mm} \includegraphics[height=35mm]{./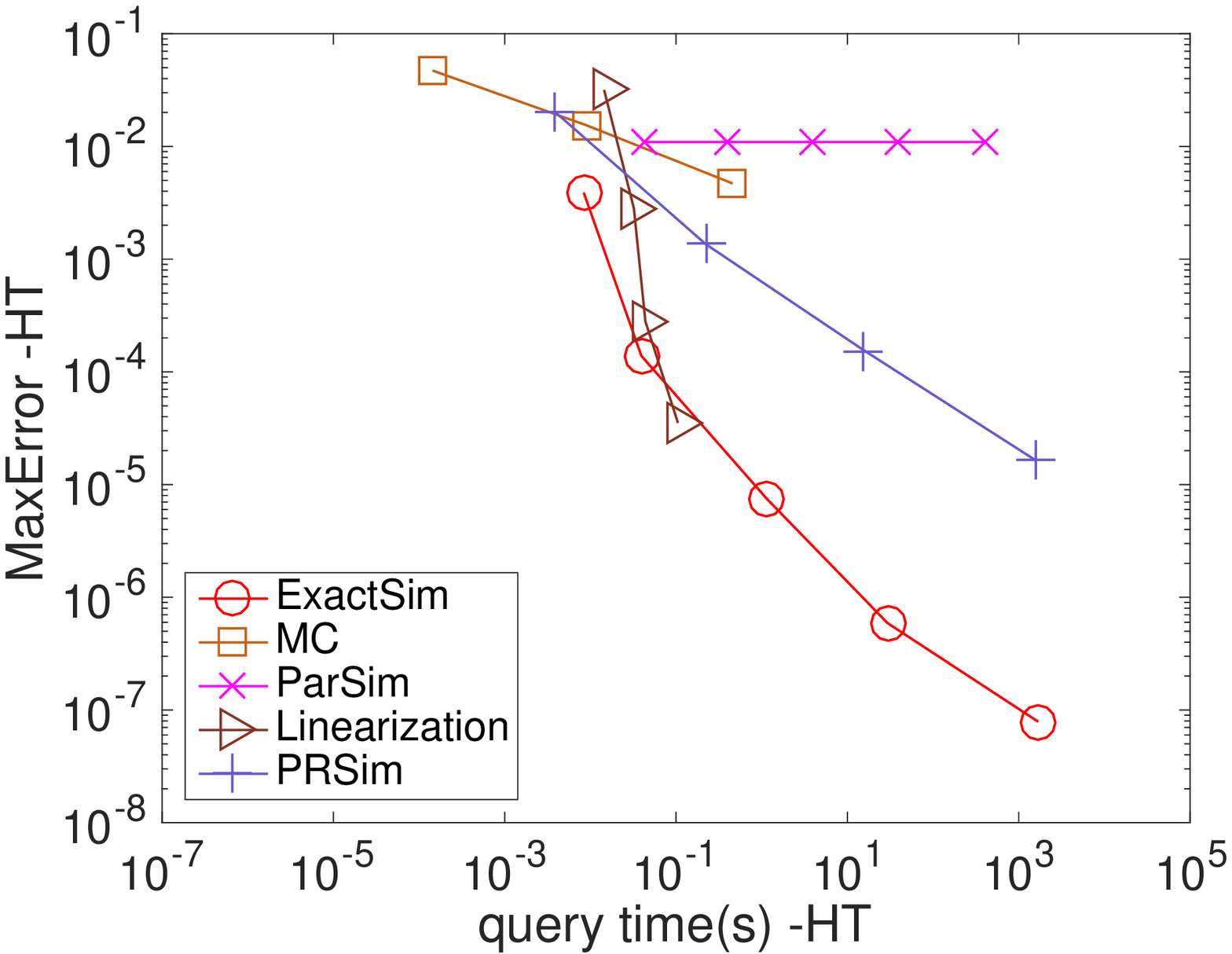} &
			\hspace{-3mm} \includegraphics[height=35mm]{./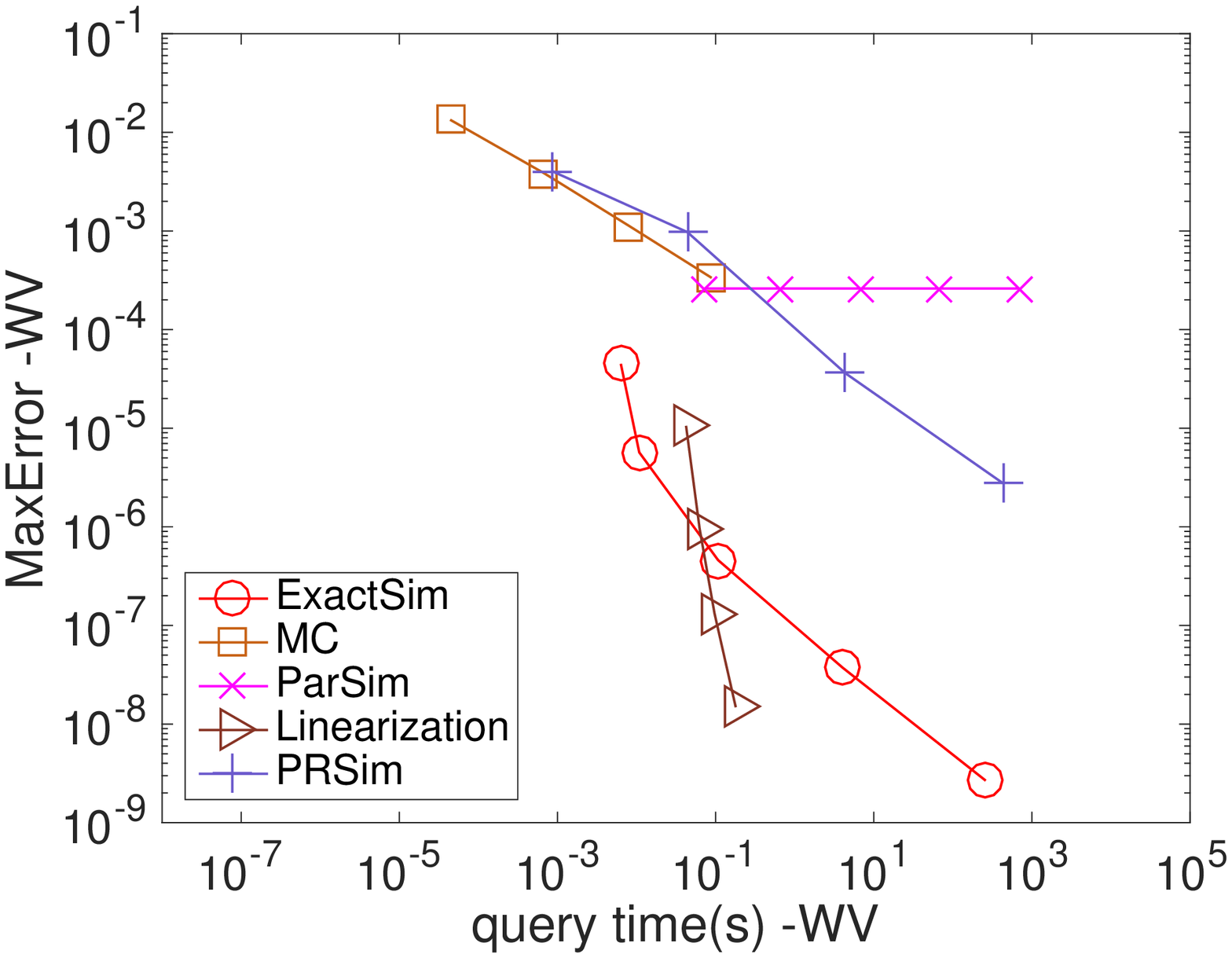} &
			\hspace{-3mm} \includegraphics[height=35mm]{./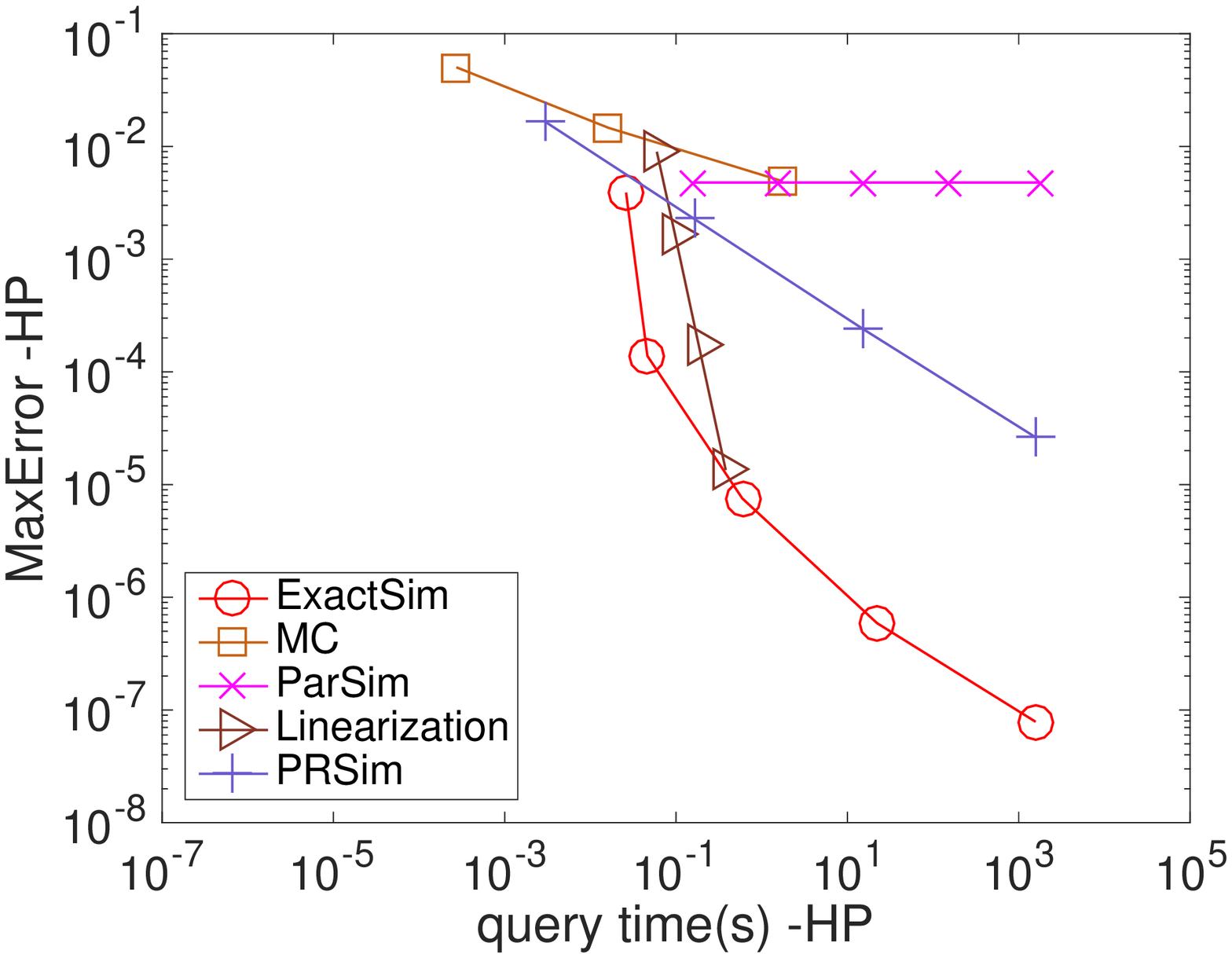}
		\end{tabular}
		\vspace{-3mm}
		\caption{ {MaxError} v.s. Query time on small graphs} 
		\label{fig:maxerr-query-smallgraph}
	\end{small}
\end{figure*}

\begin{figure*}[!t]
	\begin{small}
		\centering
		\begin{tabular}{cccc}
			\hspace{-6mm} \includegraphics[height=34mm]{./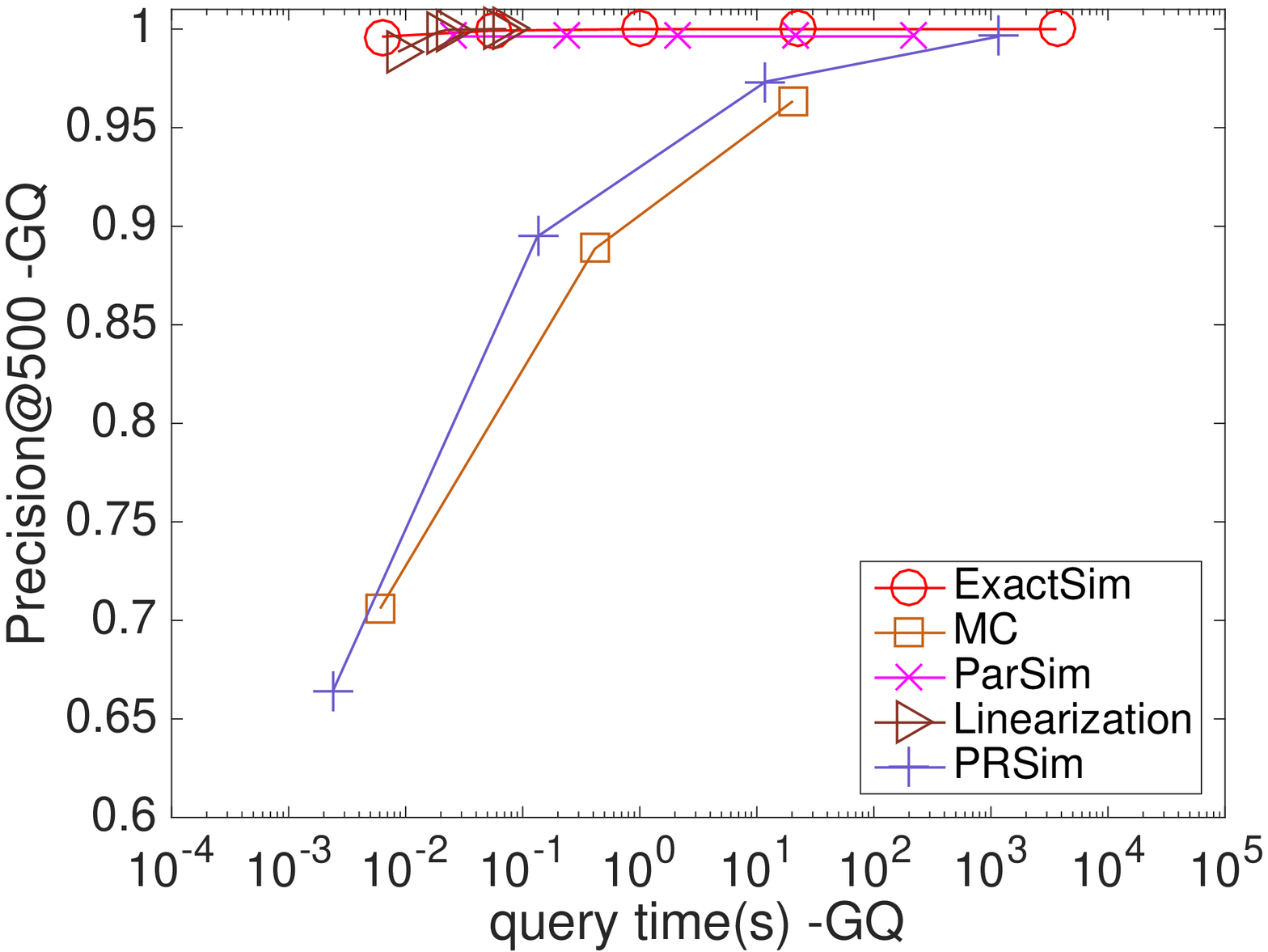} &
			\hspace{-3mm} \includegraphics[height=34mm]{./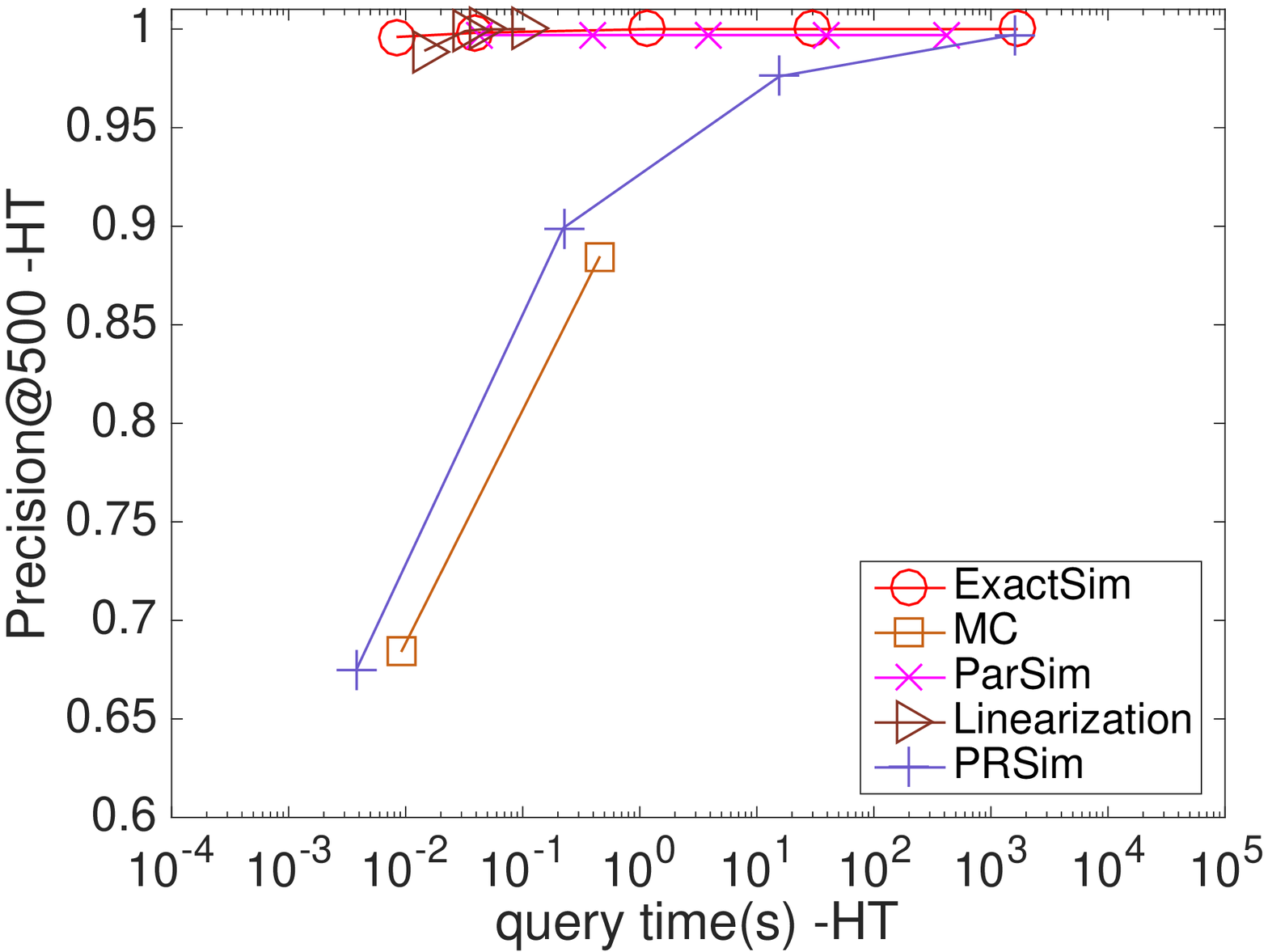} &
			\hspace{-3mm} \includegraphics[height=34mm]{./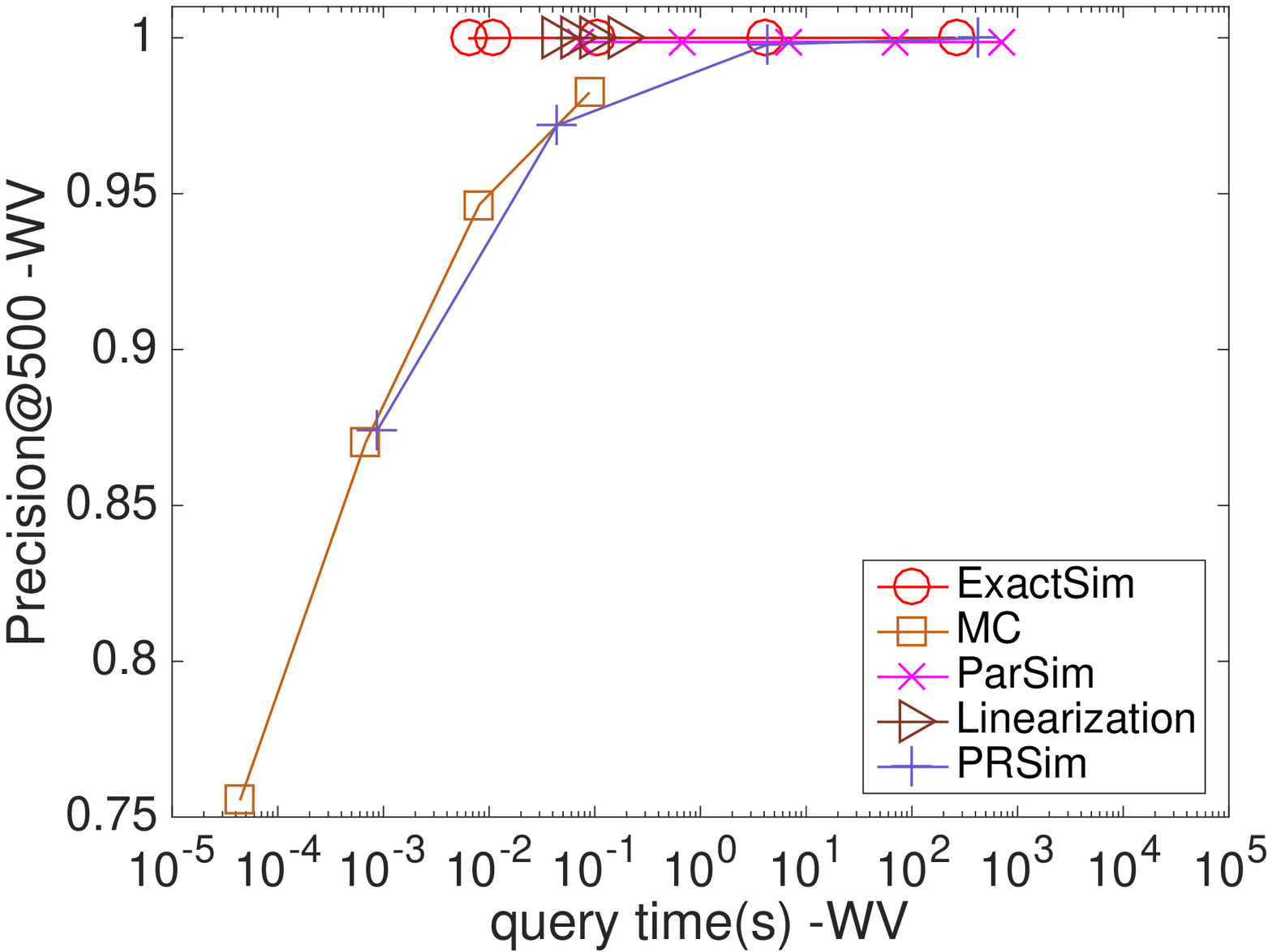}&
			\hspace{-3mm} \includegraphics[height=34mm]{./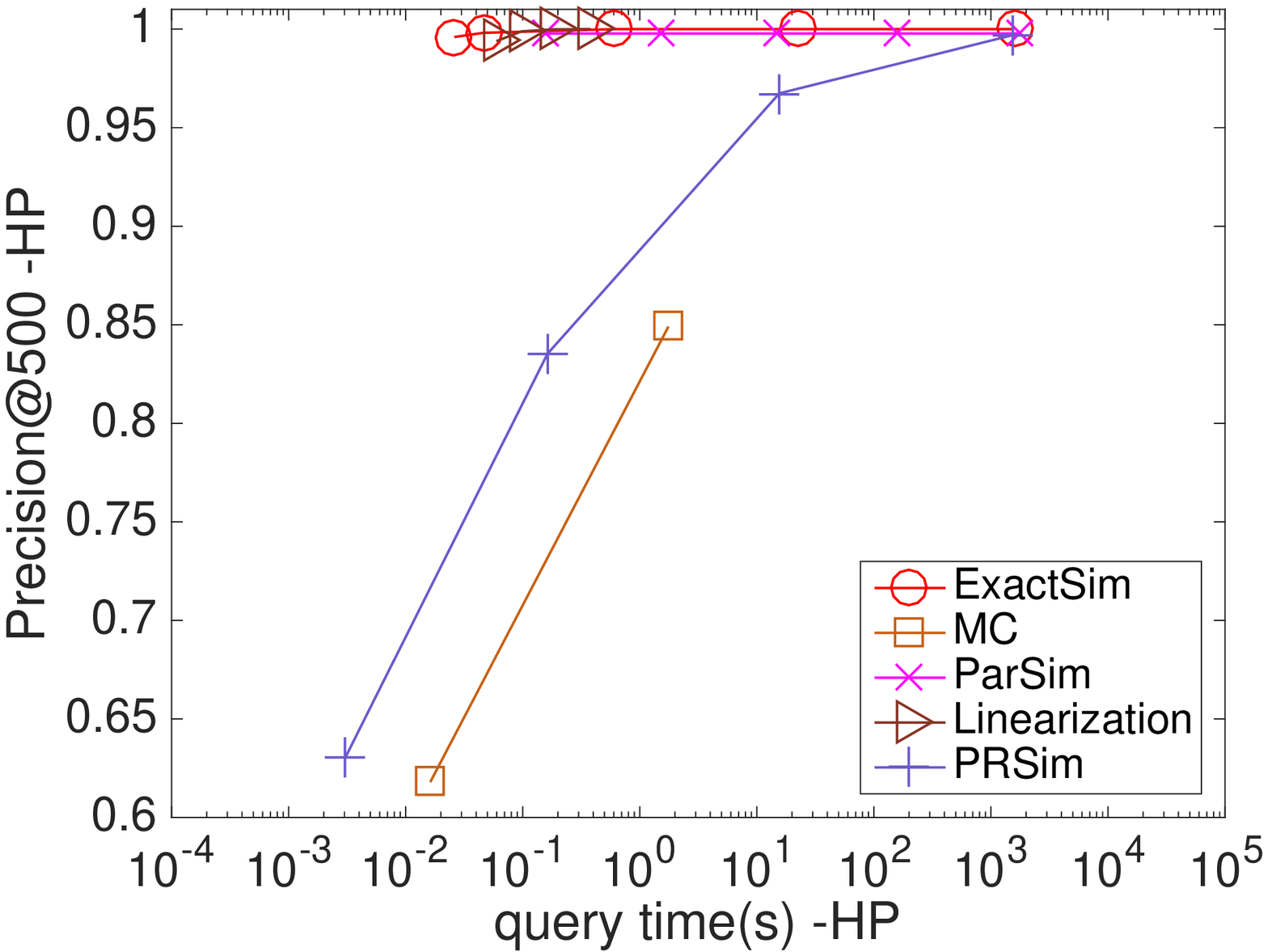}
		\end{tabular}
		\vspace{-3mm}
		\caption{ {Precision@500} v.s. Query time on small graphs} 
		\label{fig:precision-query-smallgraph}
	\end{small}
\end{figure*}

\begin{figure*}[t]
	\begin{small}
		\centering
		\begin{tabular}{cccc}
			\hspace{-6mm} \includegraphics[height=35mm]{./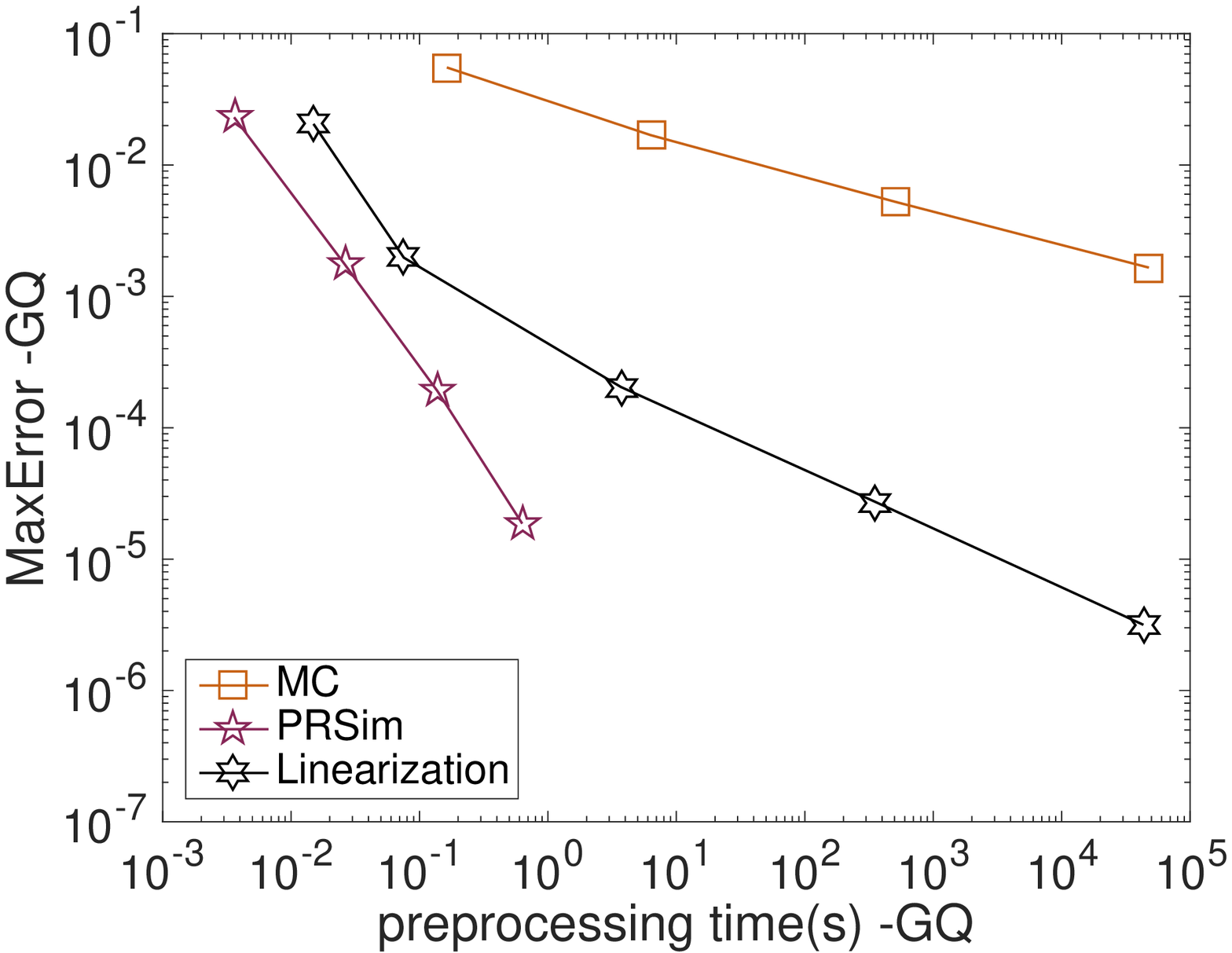} &
			\hspace{-3mm} \includegraphics[height=35mm]{./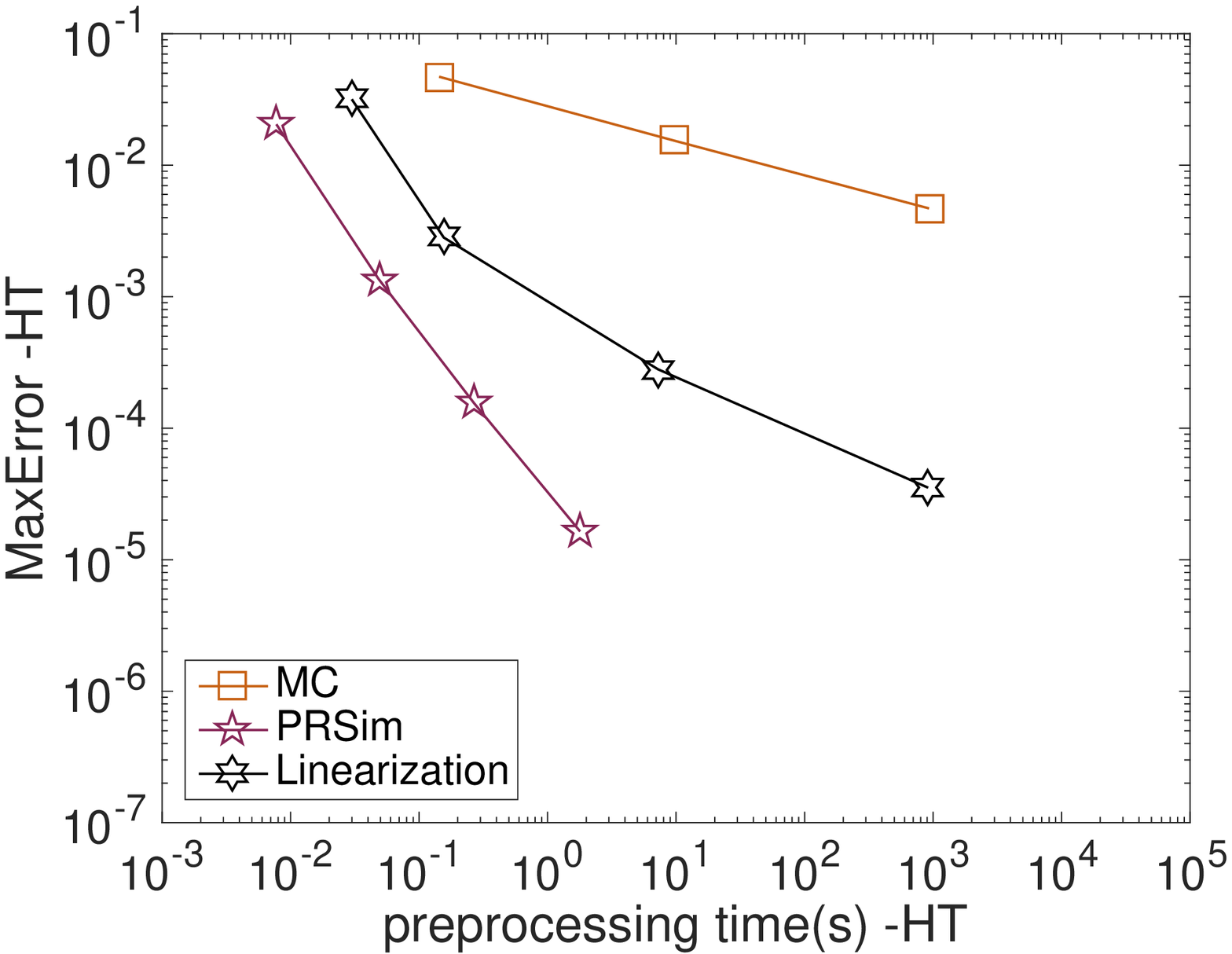} &
			\hspace{-3mm} \includegraphics[height=35mm]{./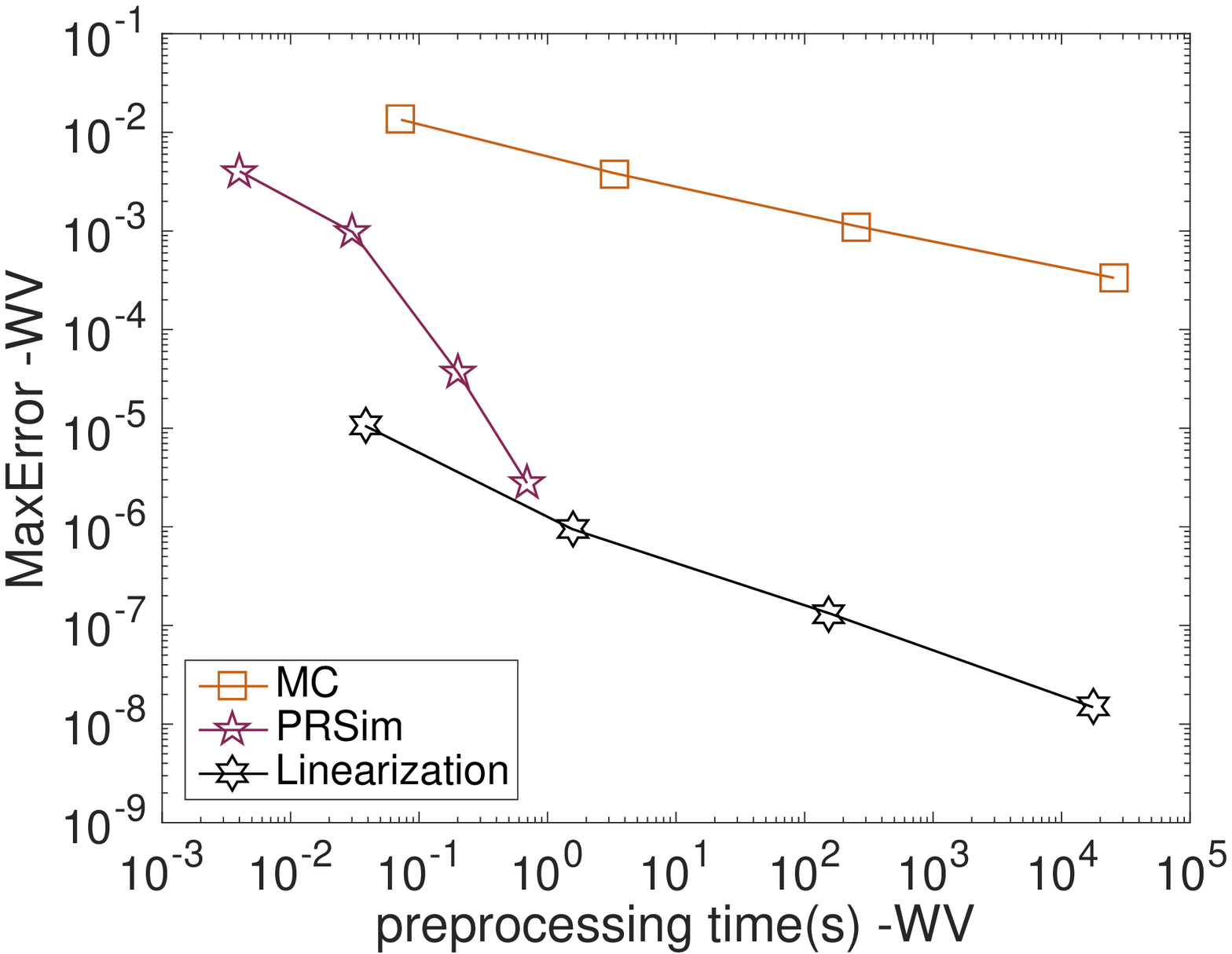}&
			\hspace{-3mm} \includegraphics[height=35mm]{./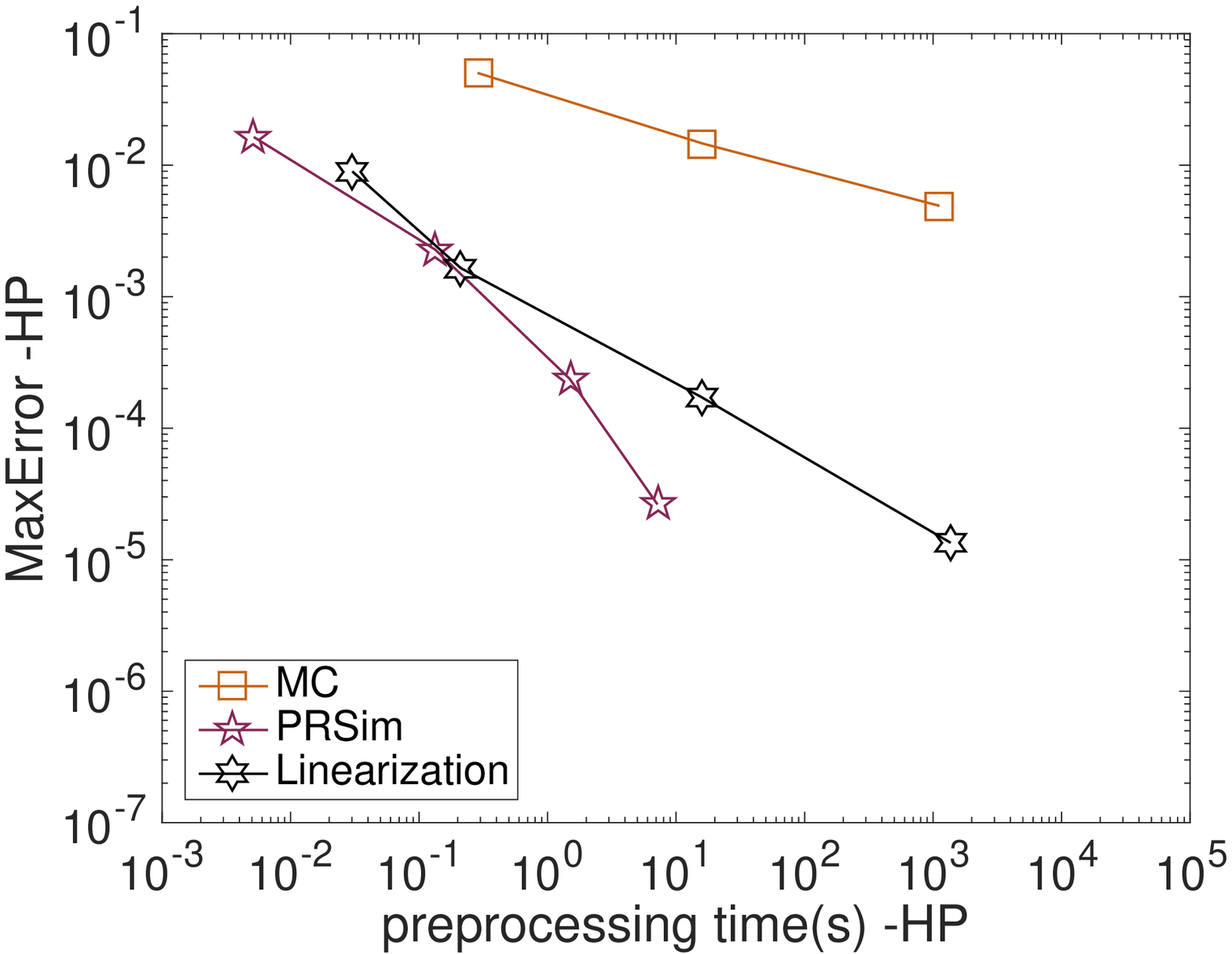}
		\end{tabular}
		\vspace{-3mm}
		\caption{ {MaxError} v.s. Preprocessing time on small graphs} 
		\label{fig:maxerr-pretime-smallgraph}
	\end{small}
\end{figure*}

\begin{figure*}[t]
	\begin{small}
		\centering
		\begin{tabular}{cccc}
			\hspace{-6mm} \includegraphics[height=35mm]{./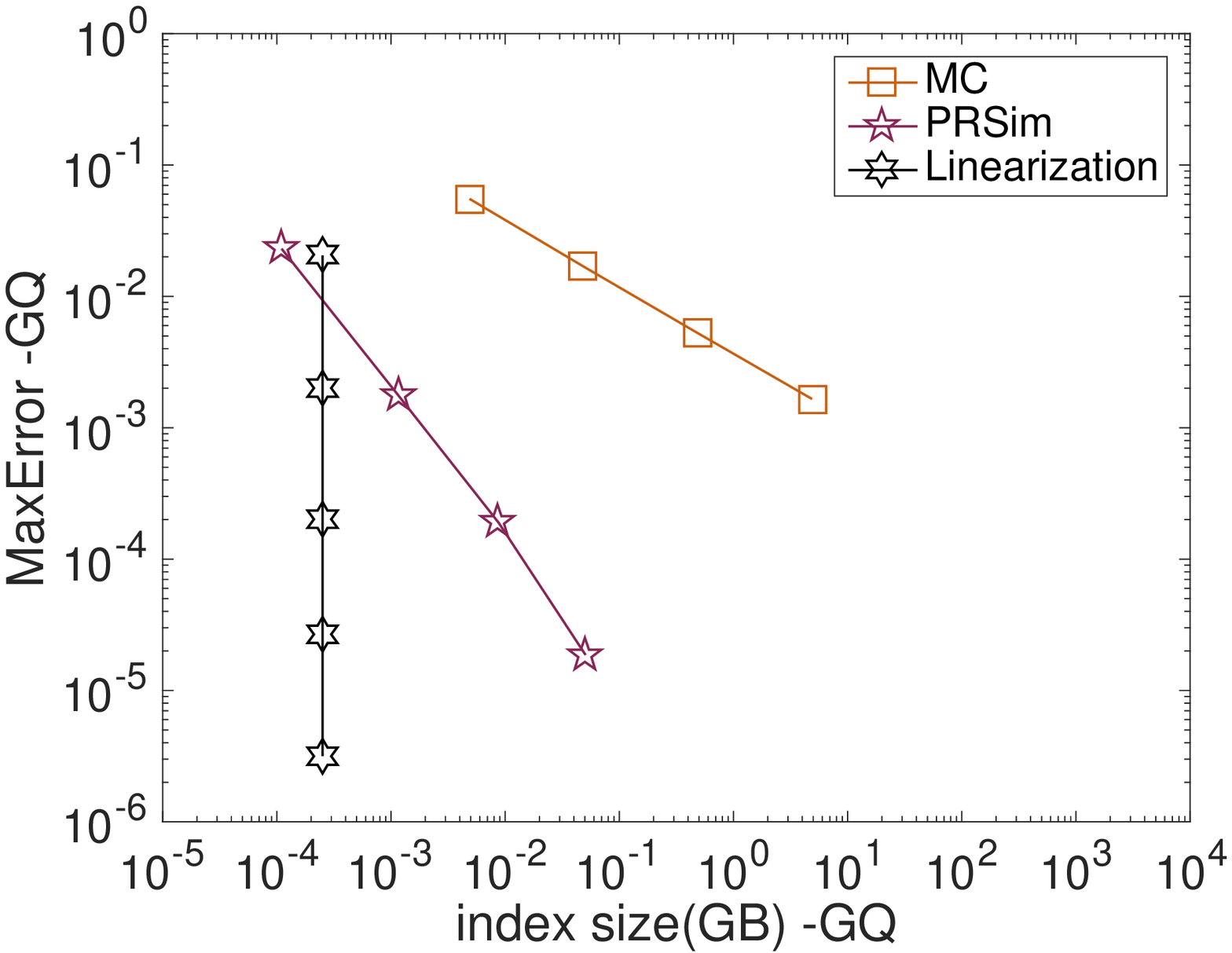} &
			\hspace{-3mm} \includegraphics[height=35mm]{./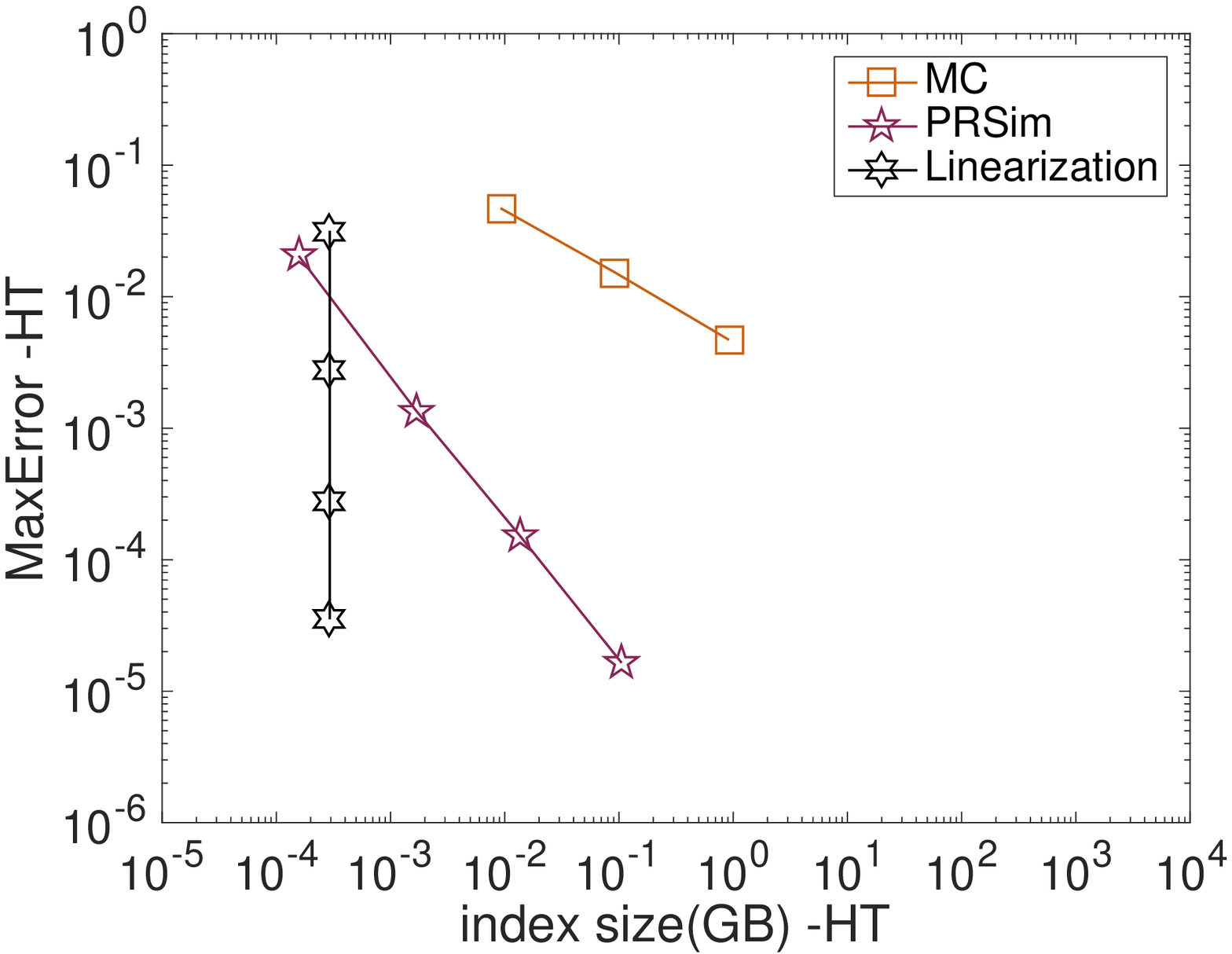} &
			\hspace{-3mm} \includegraphics[height=35mm]{./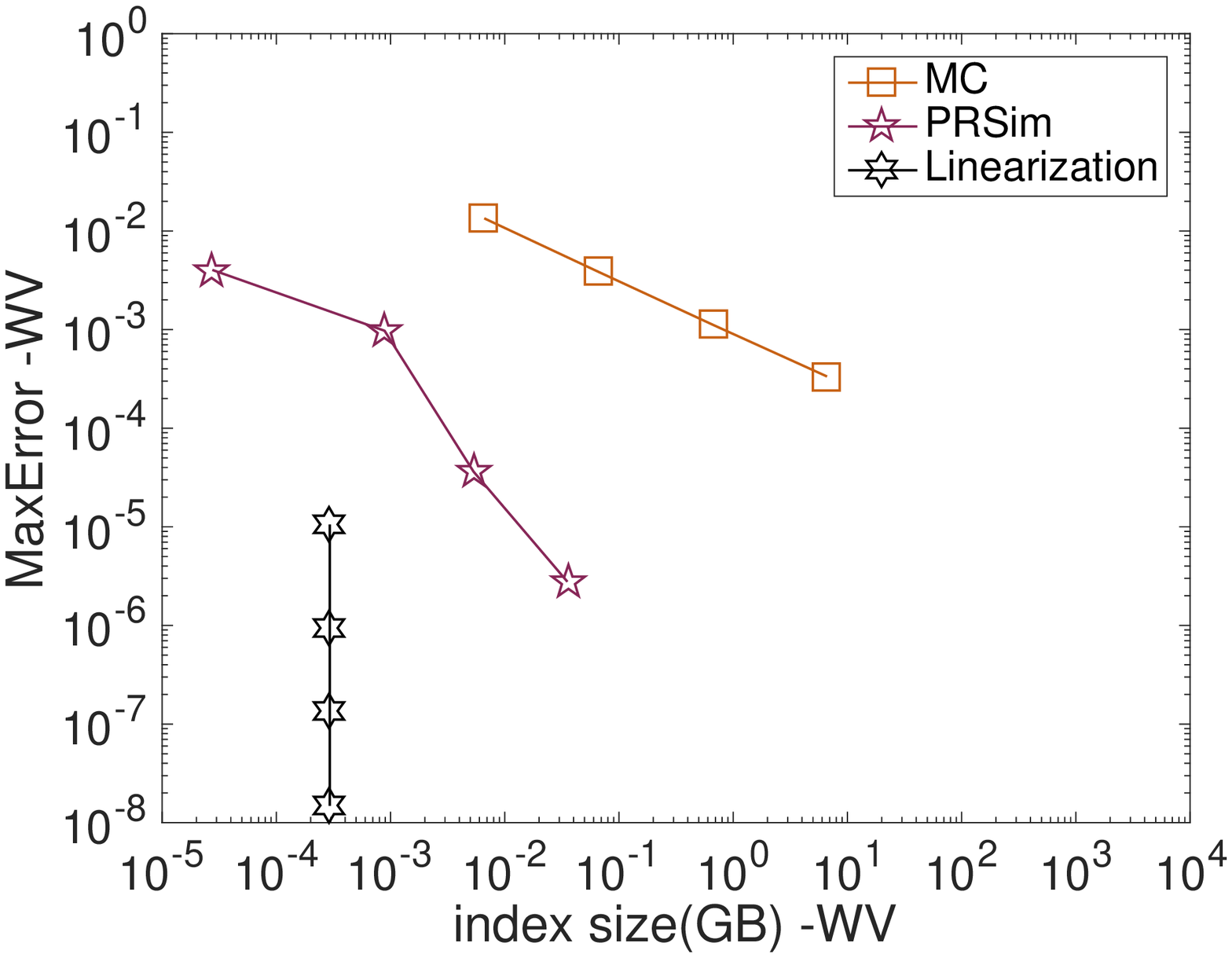}
			\hspace{-0mm} \includegraphics[height=35mm]{./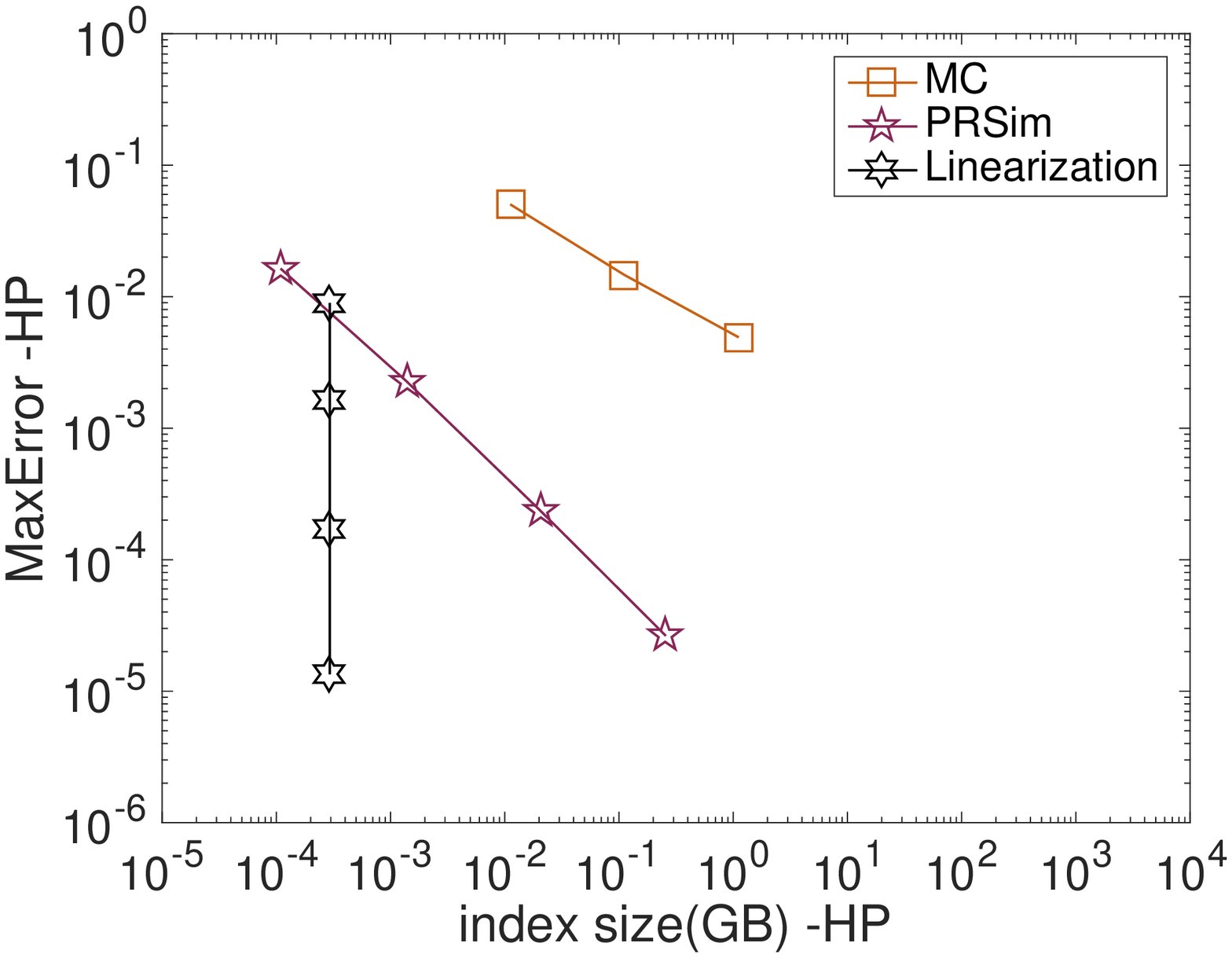}
		\end{tabular}
		\vspace{-3mm}
		\caption{ {MaxError} v.s. Index size on small graphs} 
		\label{fig:maxerr-indexsize-smallgraph}
	\end{small}
\end{figure*}

\begin{figure*}[t]
	\begin{small}
		\centering
		\begin{tabular}{cccc}
			\hspace{-6mm} \includegraphics[height=35mm]{./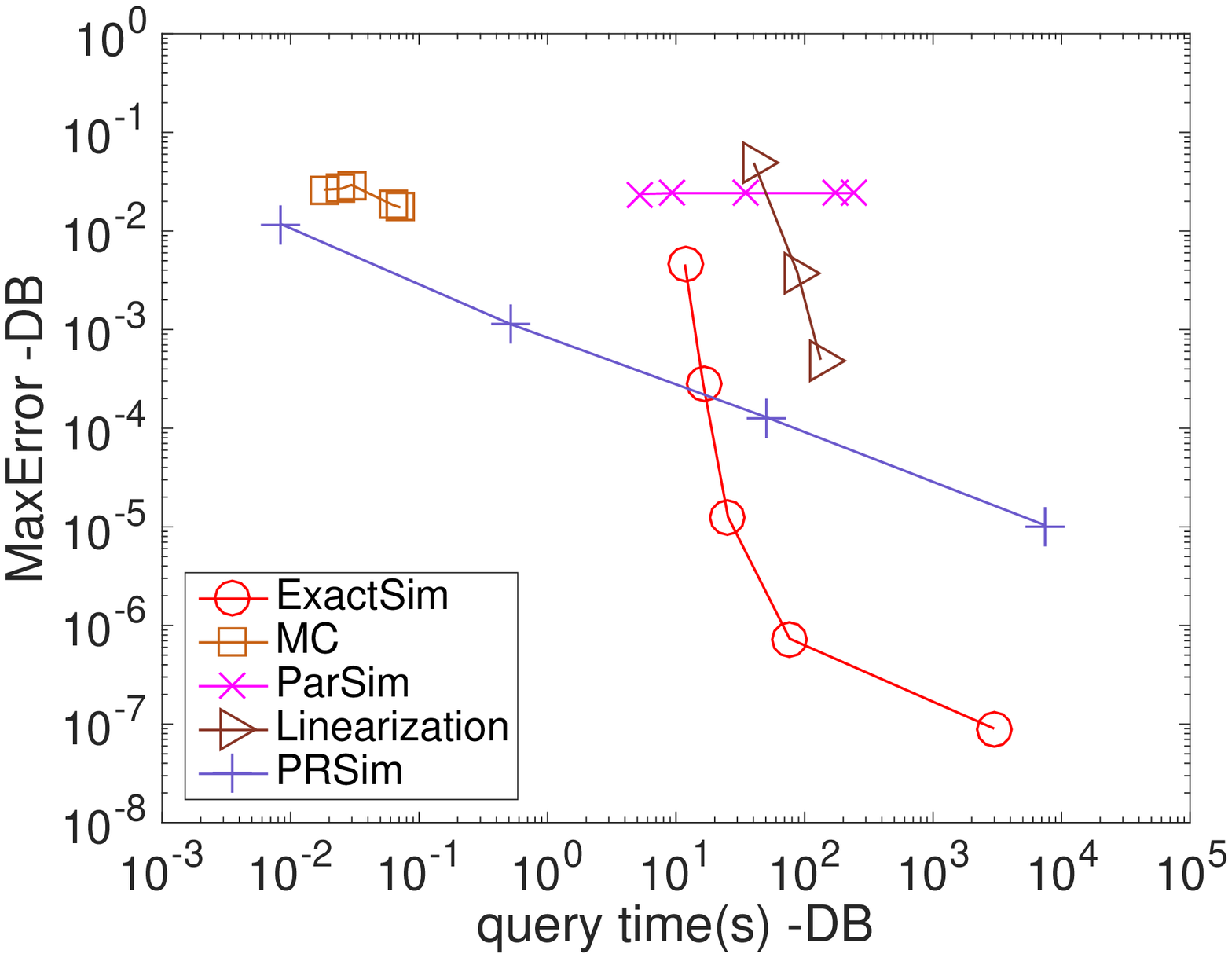} &
			\hspace{-3mm} \includegraphics[height=35mm]{./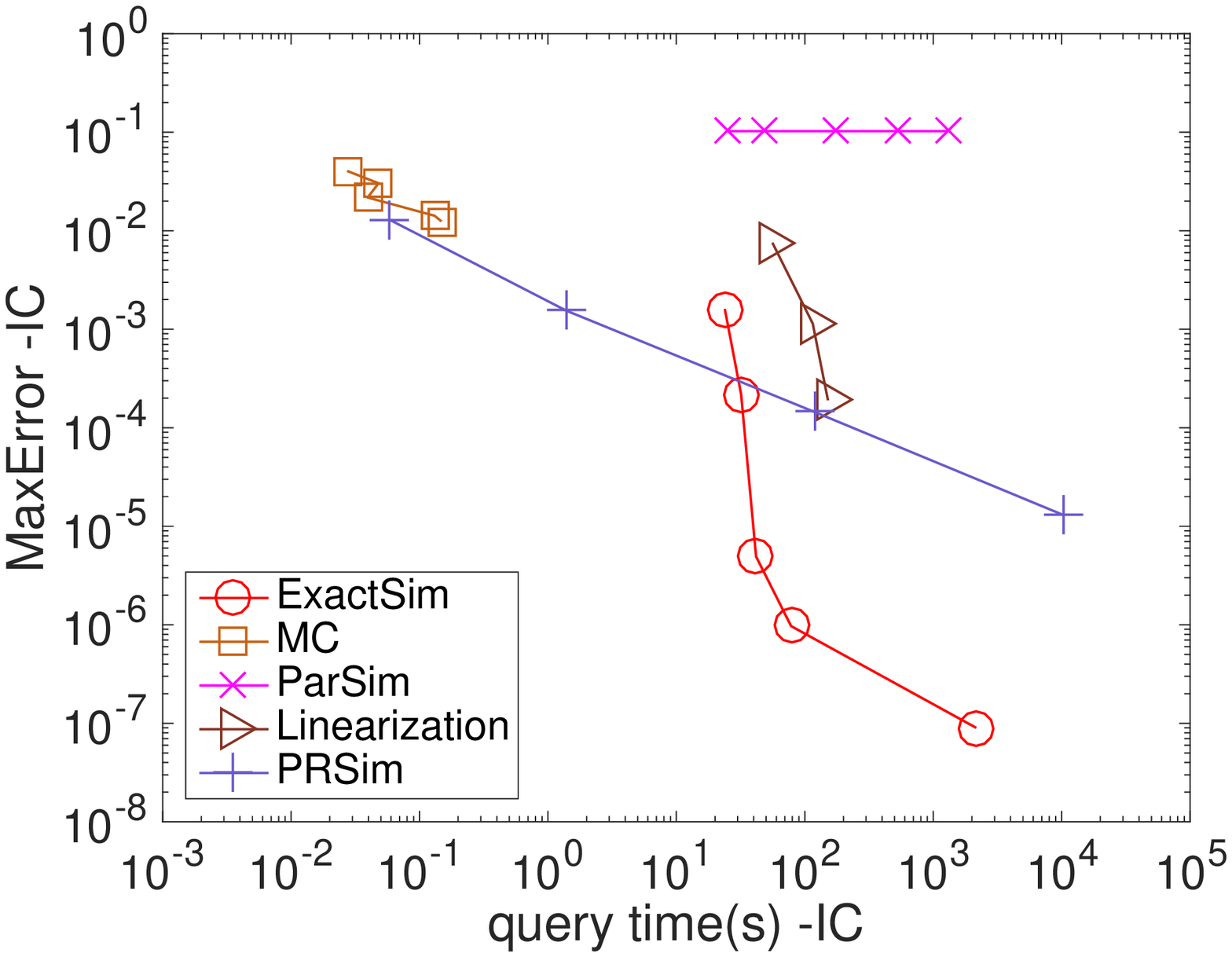} &
			\hspace{-3mm} \includegraphics[height=35mm]{./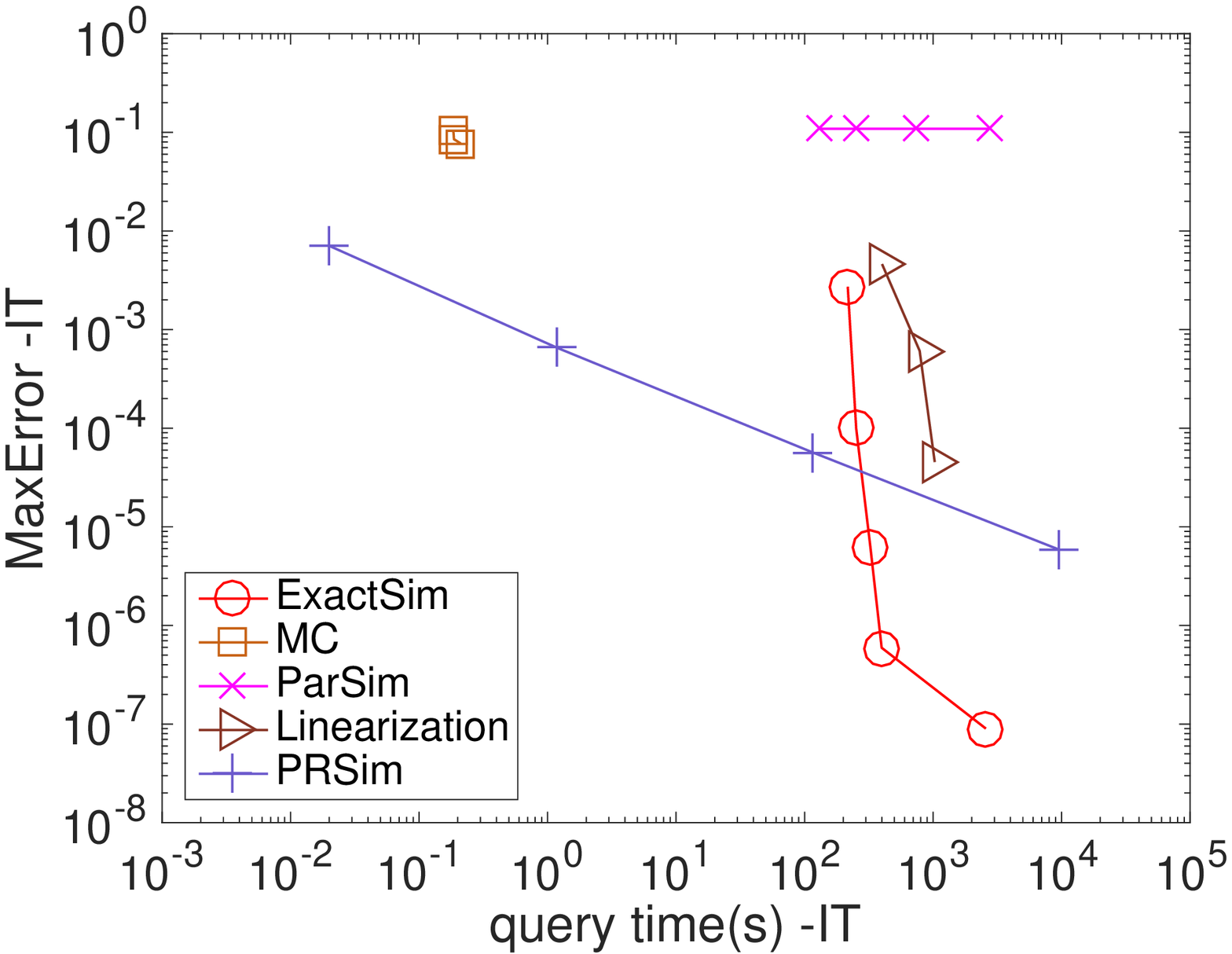} &
			\hspace{-3mm} \includegraphics[height=35mm]{./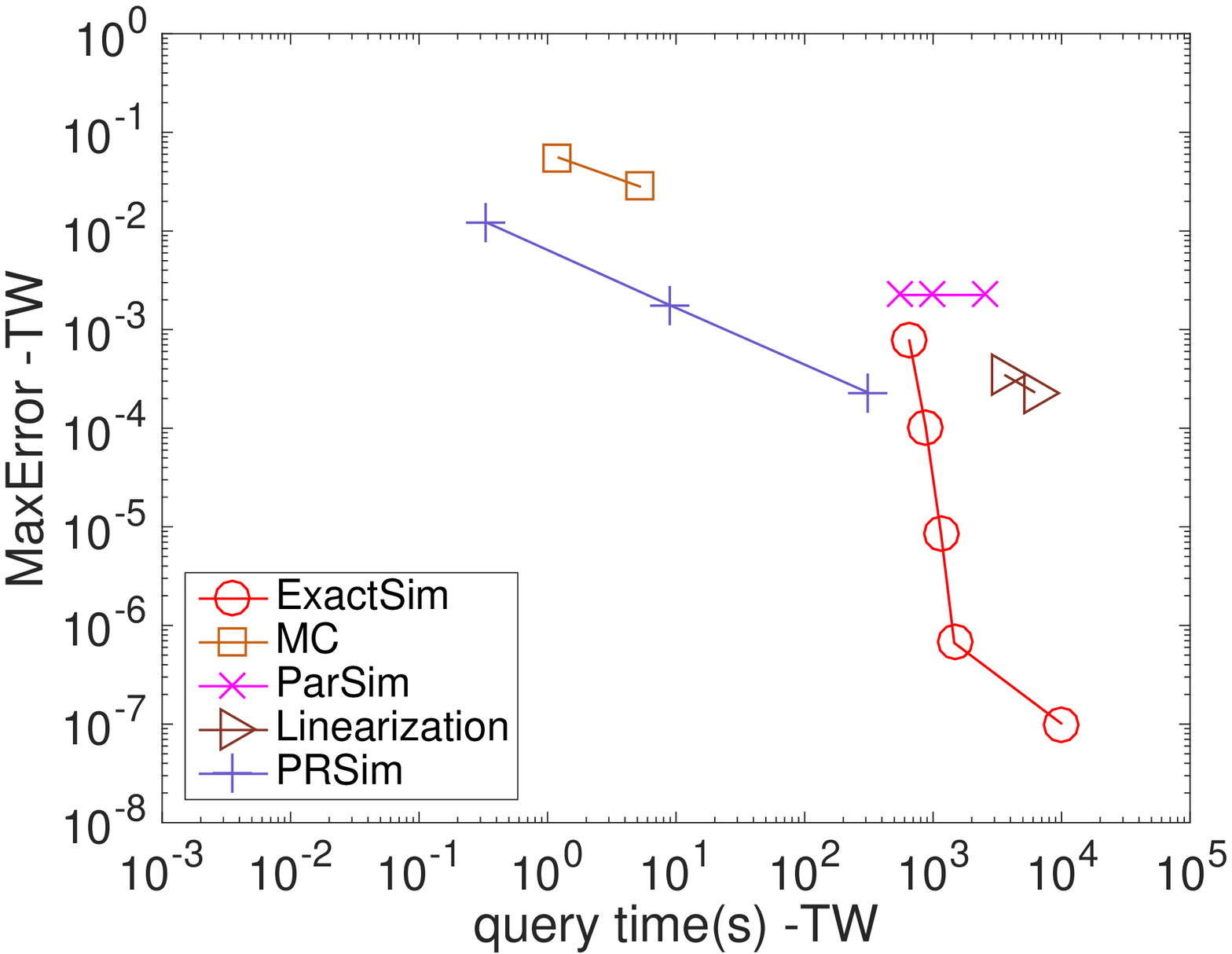}
		\end{tabular}
		\vspace{-3mm}
		\caption{ {MaxError} v.s. Query time on large graphs} 
		\label{fig:maxerr-query-largegraph}
	\end{small}
\end{figure*}

\begin{figure*}[!t]
	\begin{small}
		\centering
		\begin{tabular}{cccc}
			\hspace{-6mm} \includegraphics[height=34mm]{./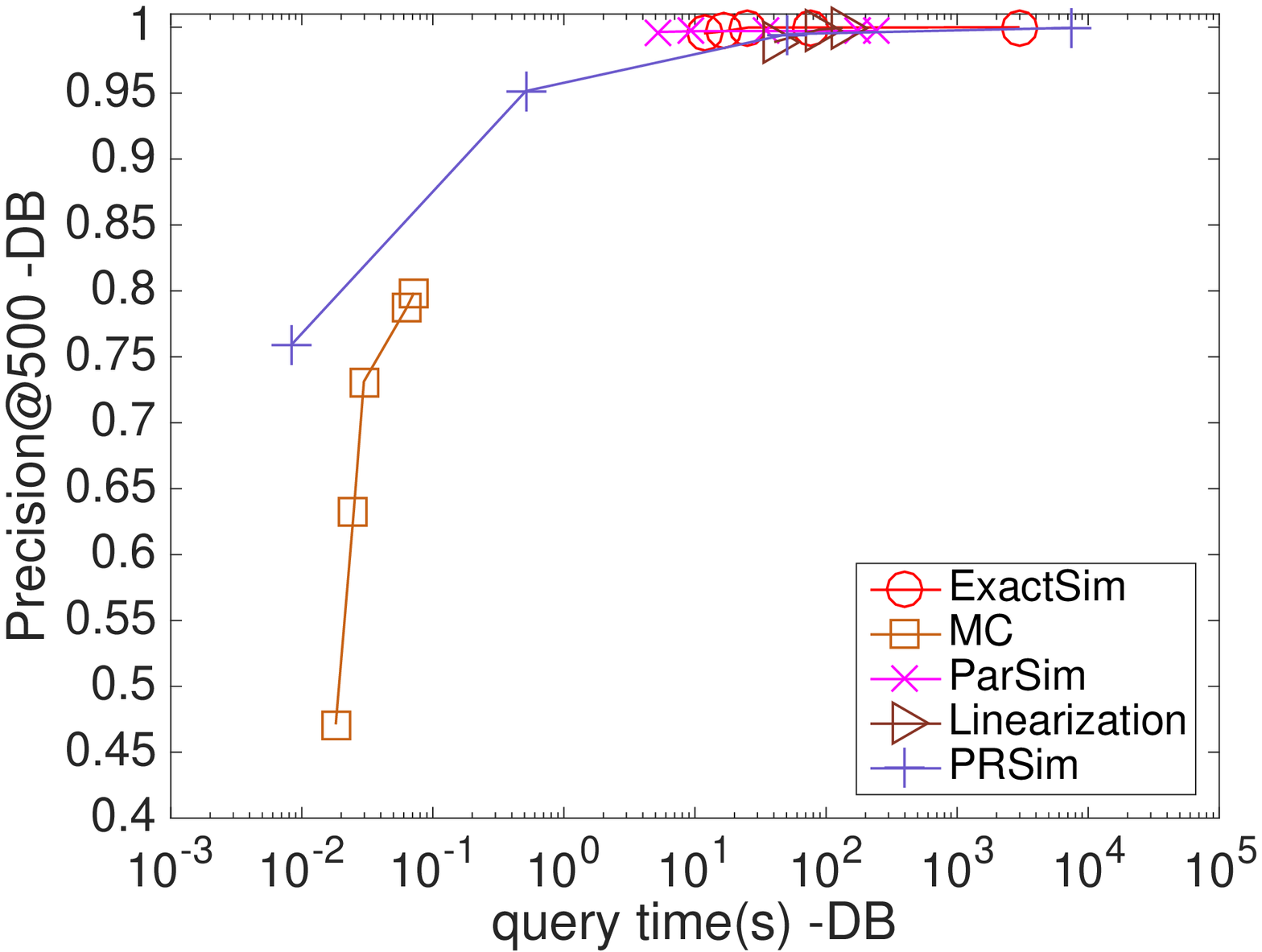} &
			\hspace{-3mm} \includegraphics[height=34mm]{./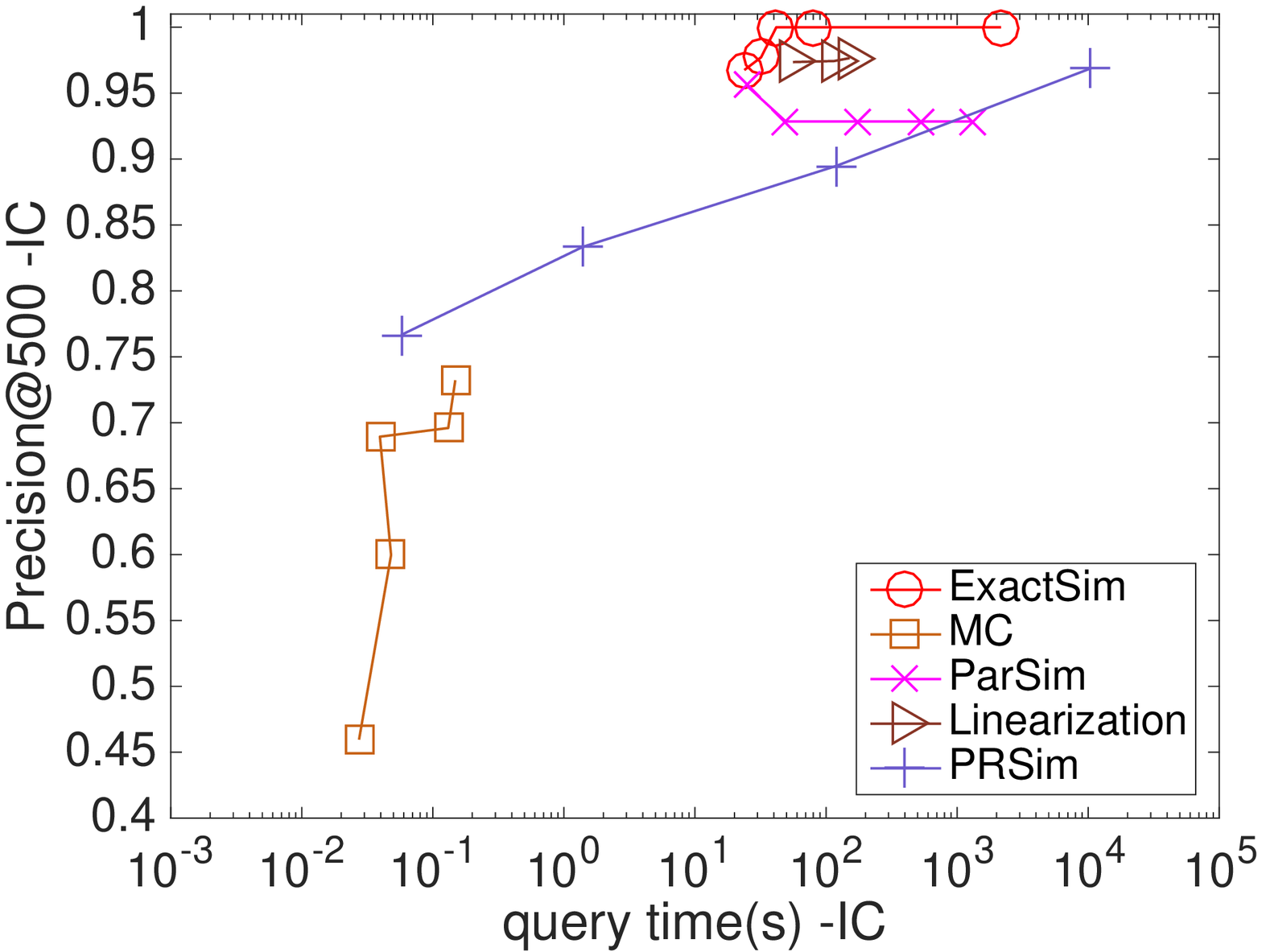} &
			\hspace{-3mm} \includegraphics[height=34mm]{./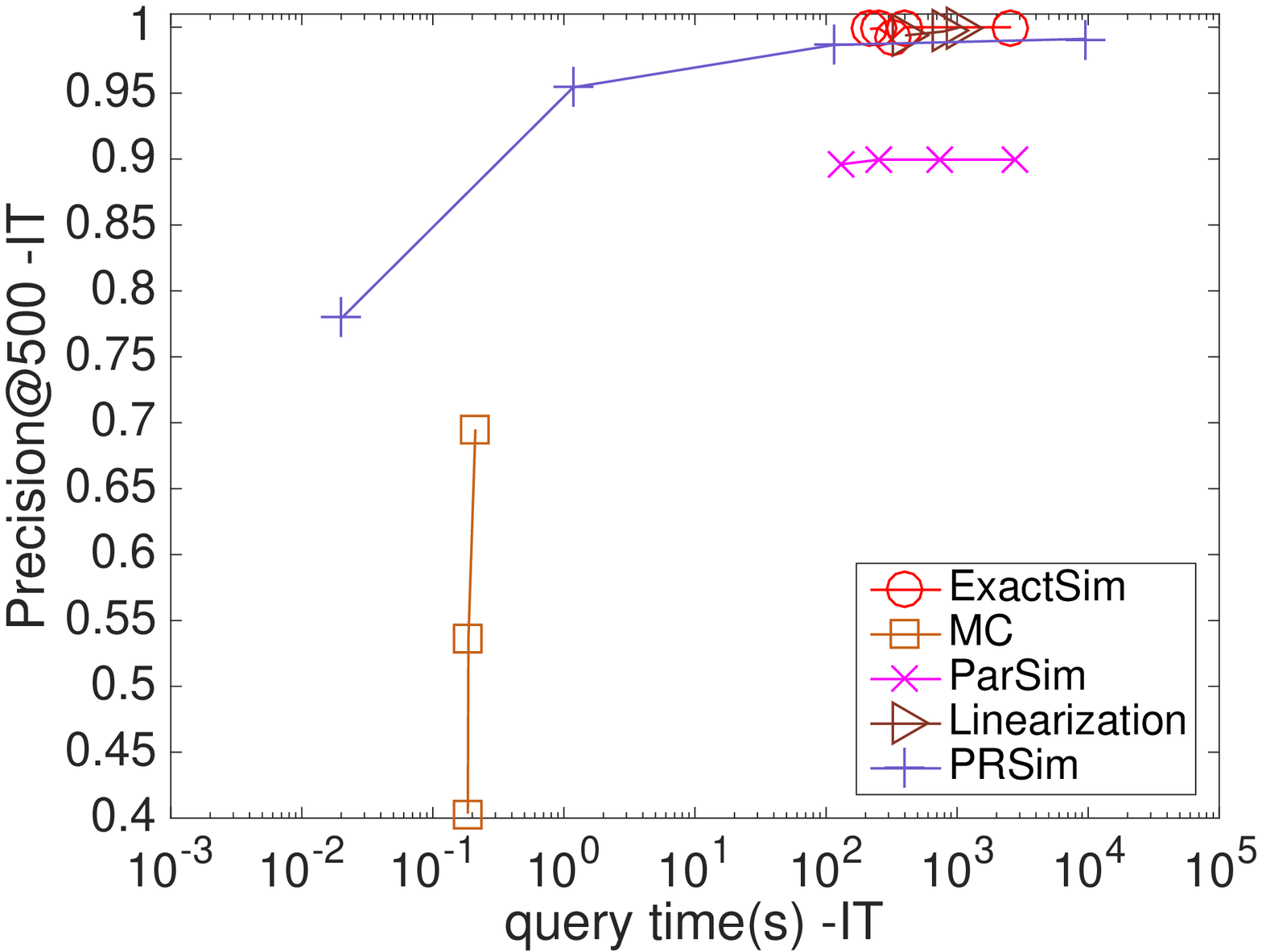} &
			\hspace{-3mm} \includegraphics[height=34mm]{./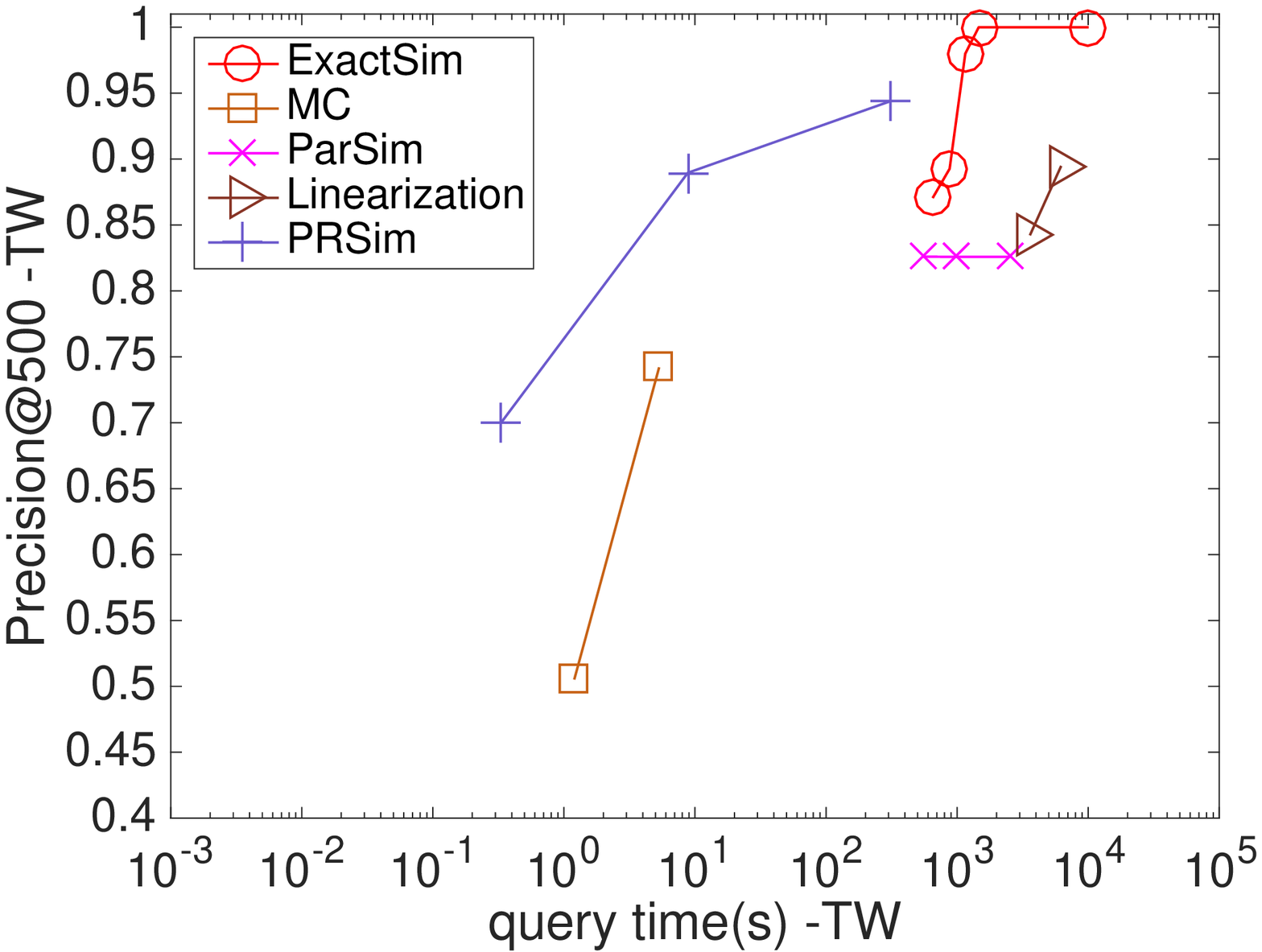}
		\end{tabular}
		\vspace{-3mm}
		\caption{ {Precision@500} v.s. Query time on large graphs} 
		\label{fig:precision-query-largegraph}
	\end{small}
\end{figure*}

\begin{figure*}[t]
	\begin{small}
		\centering
		\begin{tabular}{cccc}
			\hspace{-6mm} \includegraphics[height=35mm]{./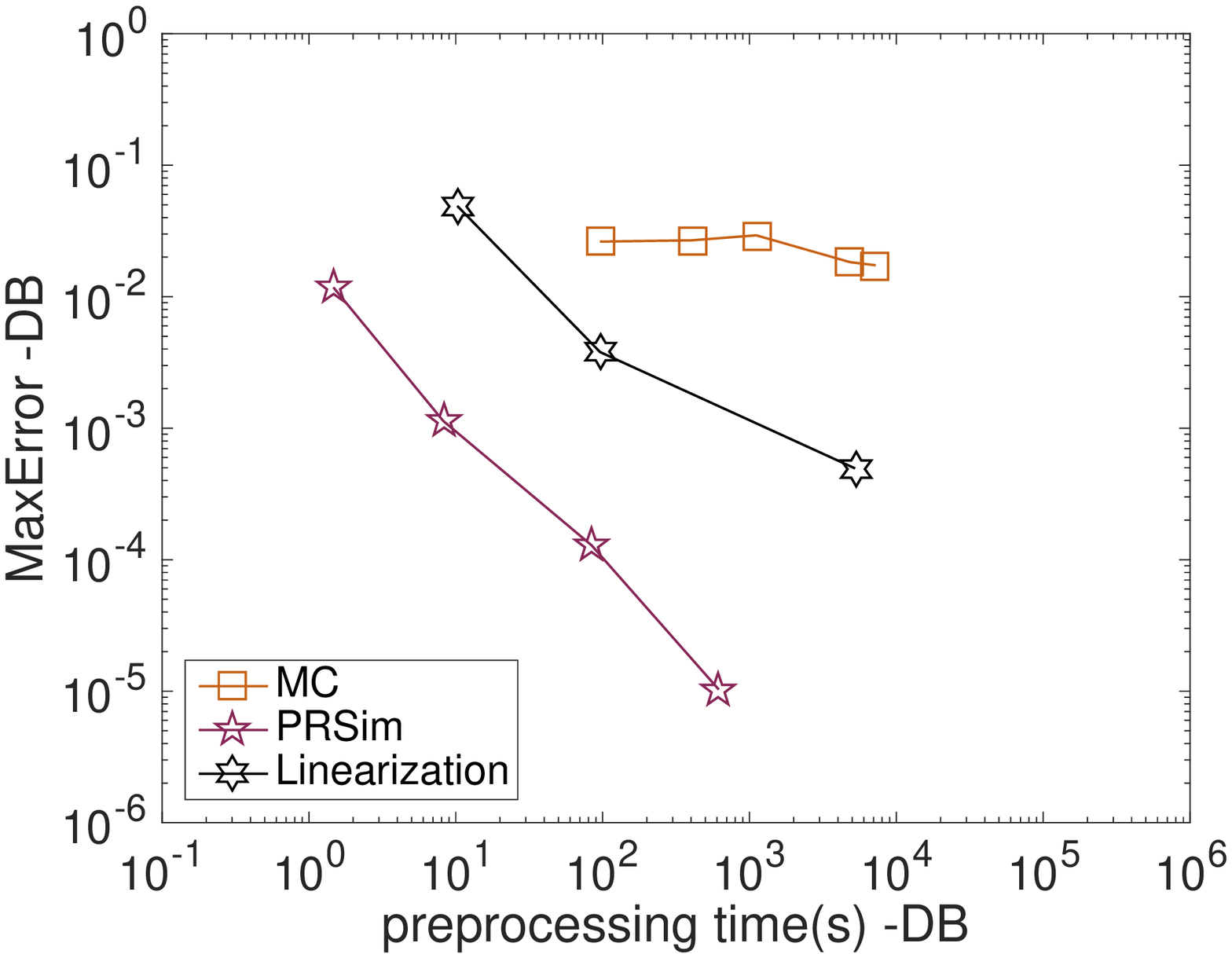} &
			\hspace{-3mm} \includegraphics[height=35mm]{./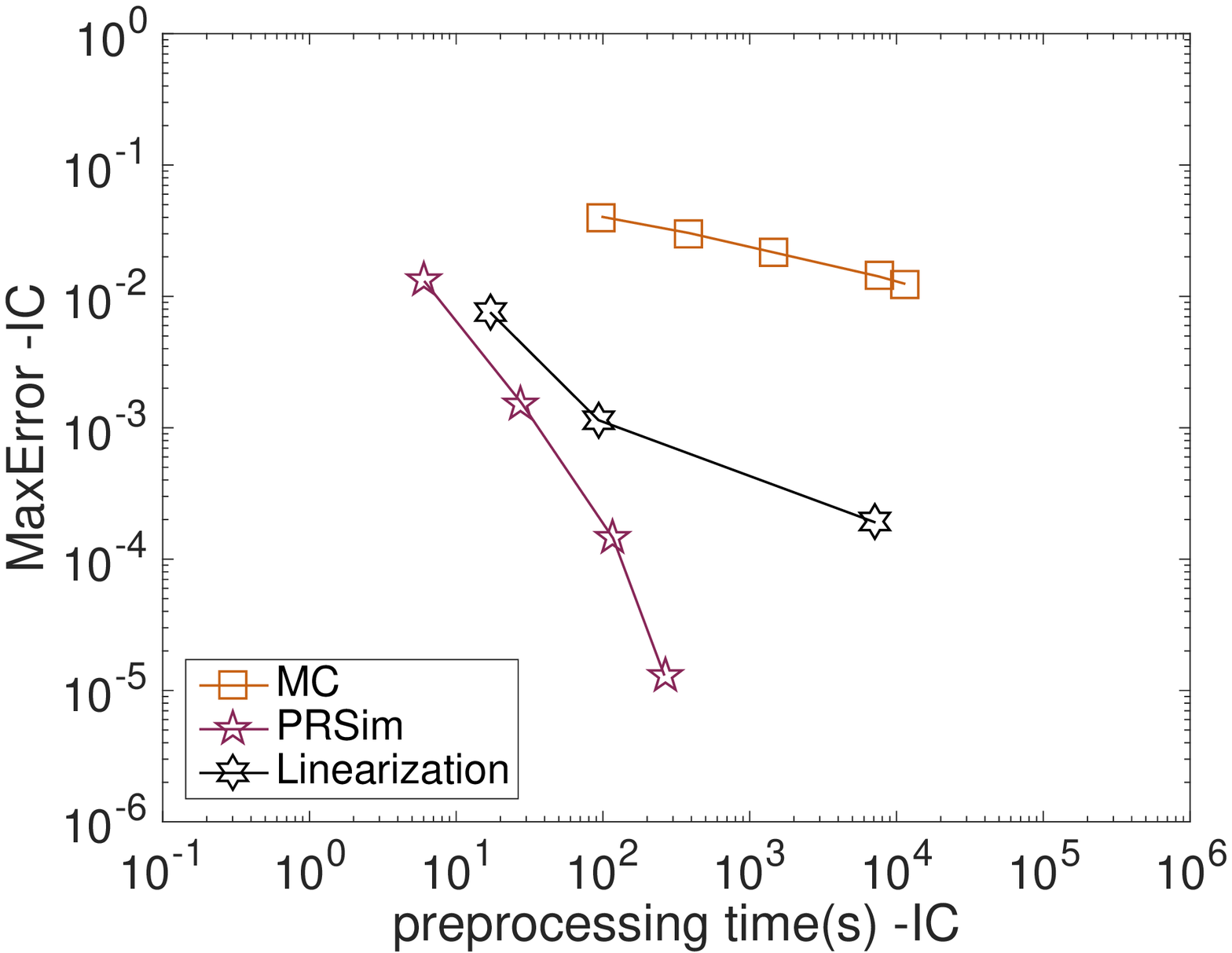} &
			\hspace{-3mm} \includegraphics[height=35mm]{./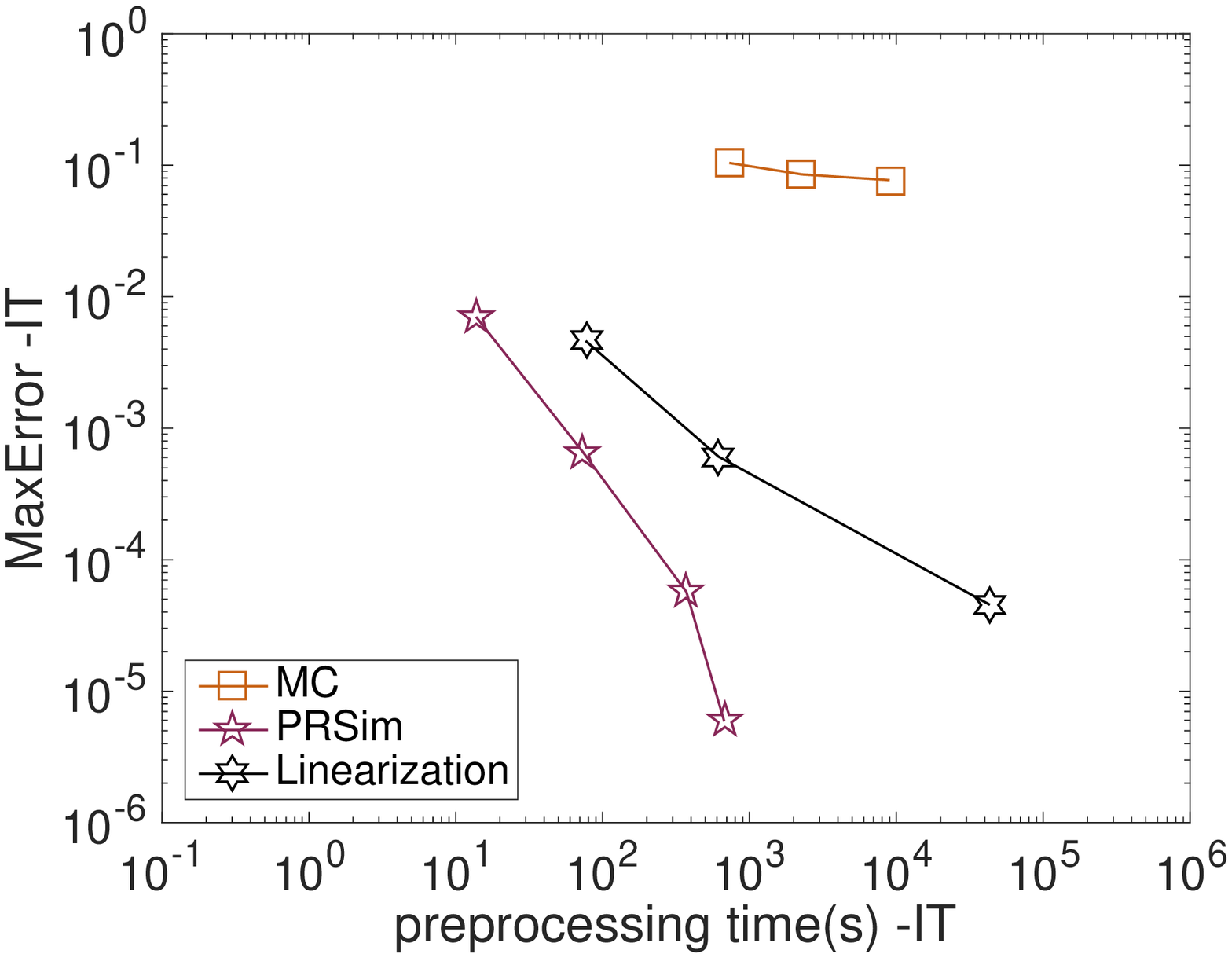} &
			\hspace{-3mm} \includegraphics[height=35mm]{./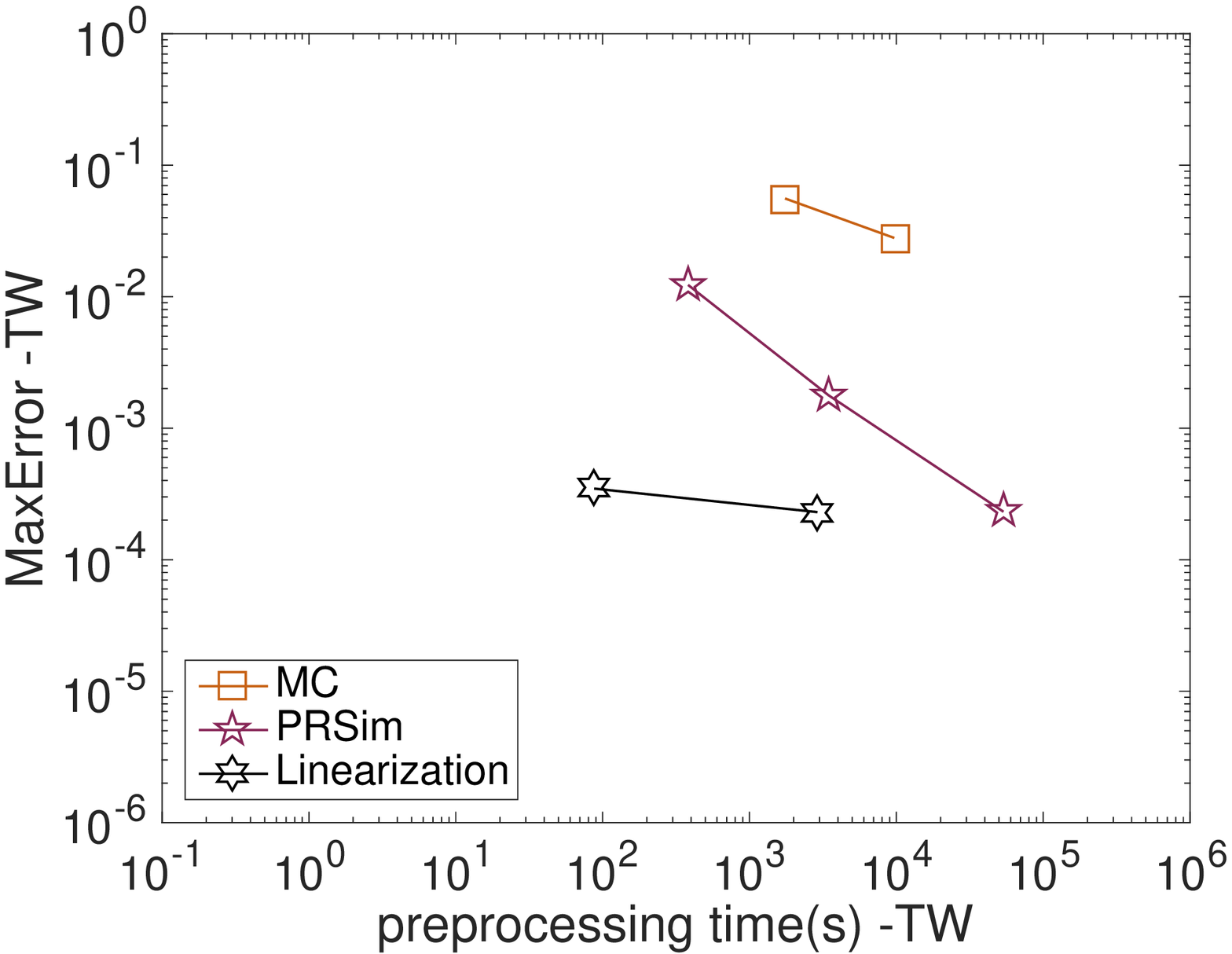}
		\end{tabular}
		\vspace{-3mm}
		\caption{ {MaxError} v.s. Preprocessing time on large graphs} 
		\label{fig:maxerr-pretime-largegraph}
	\end{small}
\end{figure*}

\begin{figure*}[t]
	\begin{small}
		\centering
		\begin{tabular}{cccc}
			\hspace{-6mm} \includegraphics[height=35mm]{./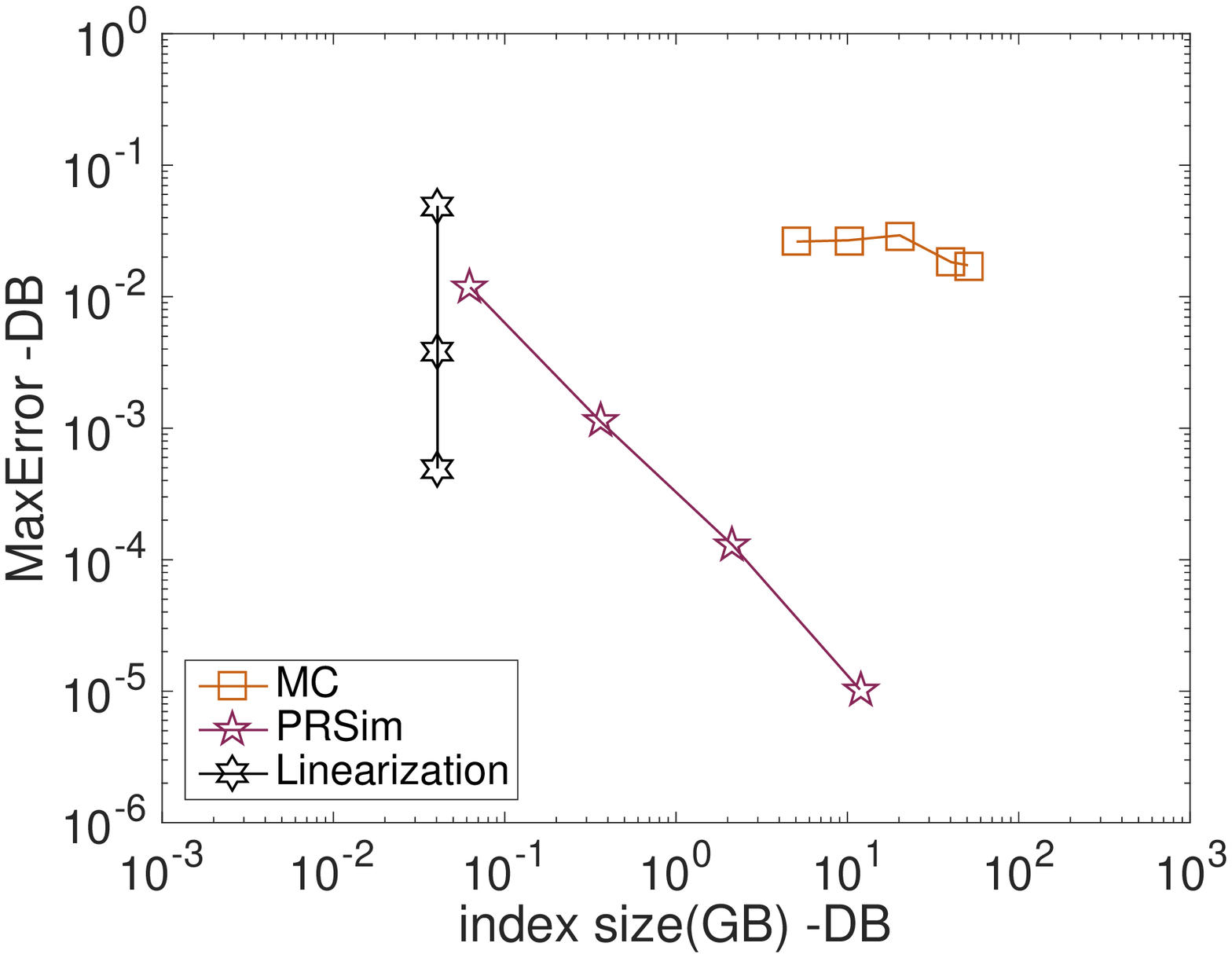} &
			\hspace{-3mm} \includegraphics[height=35mm]{./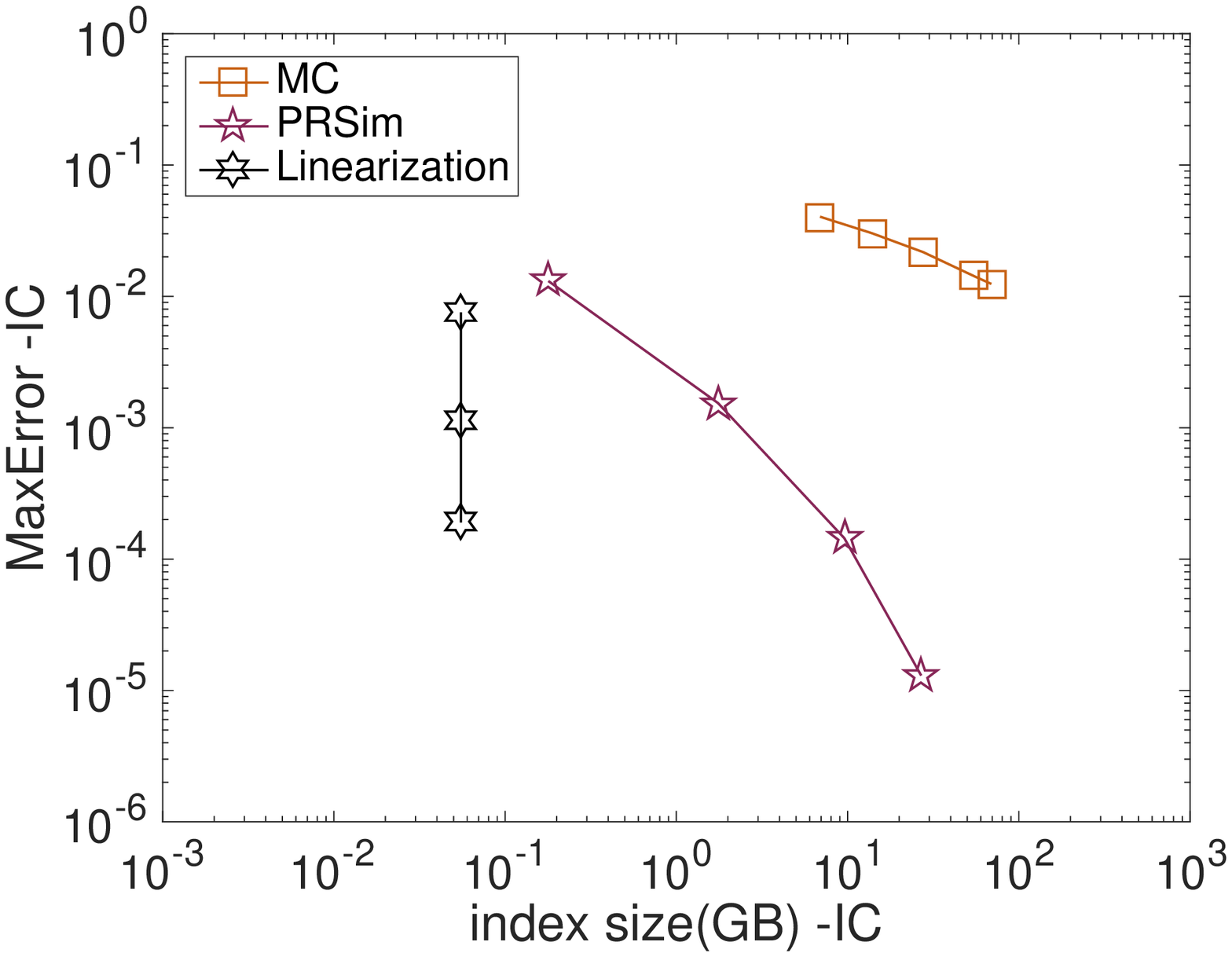} &
			\hspace{-3mm} \includegraphics[height=35mm]{./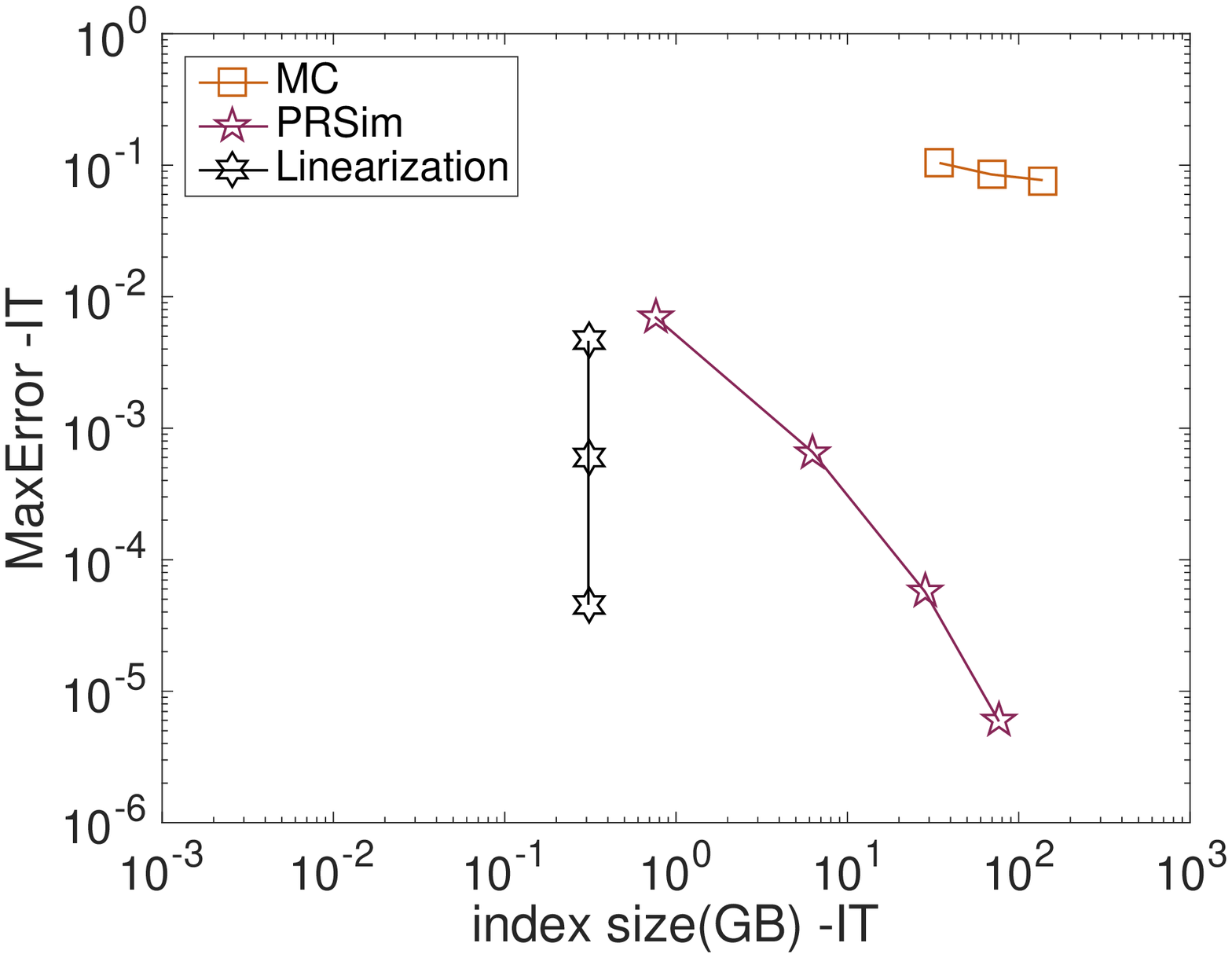} &
			\hspace{-3mm} \includegraphics[height=35mm]{./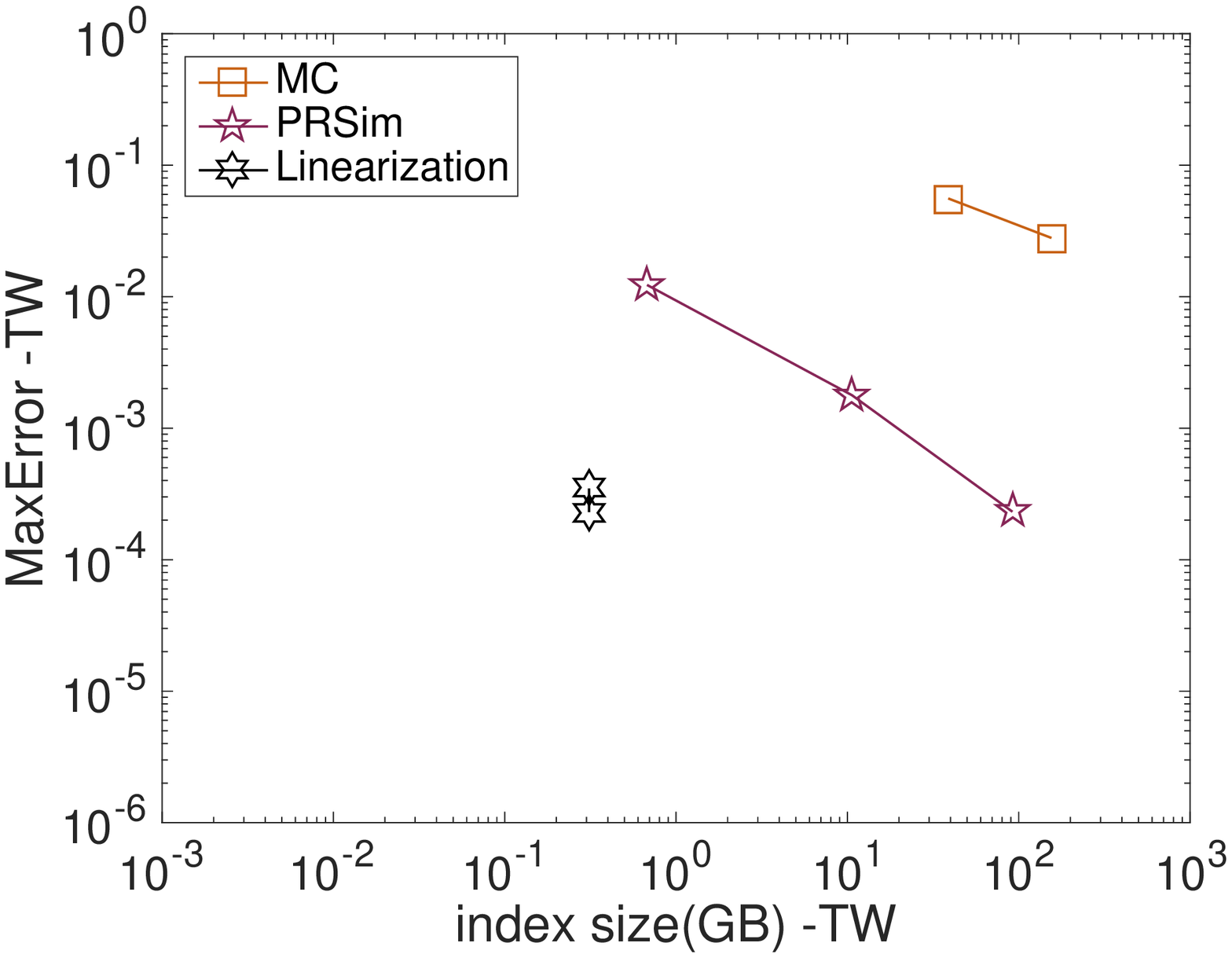}
		\end{tabular}
		\vspace{-3mm}
		\caption{ {MaxError} v.s. Index size on large graphs} 
		\label{fig:maxerr-indexsize-largegraph}
	\end{small}
\end{figure*}

\header{\bf Methods and Parameters.} 
We evaluate \exsim 
and other four single-source algorithms, including MC \cite{FR05}, Linearization \cite{MKK14}, ParSim \cite{yu2015efficient} and PRSim \cite{wei2019prsim}, 
Among them, \exsim, ParSim are index-free methods, and the others are index-based methods; 
\exsim and ParSim can handle {\em dynamic} graphs, and the other methods can only handle {\em static} graphs. 
In this paper, we only focus on static graphs.
For a fair comparison, we run each algorithm in the single thread mode.

MC has two parameters: the length of each random walk $L$ 
and the number of random walks per node $r$. We vary $(L,r)$
from $(5,50)$ to $(5000,50000)$ on small graphs and from $(5,50)$ to $(50,500)$ on
large graphs.
ParSim has one parameter $L$, the number of
 iterations. We vary it  from $50$ to $5\times 10^5$
 on small graphs and from $10$ to $500$ 
 on large graphs.
Linearization, PRSim and \exsim share the same error parameter $\e$, 
and we vary $\e$ from $10^{-1}$ to $10^{-7}$ (if possible) on both small and large graphs. 
We evaluate the optimized \exsim unless otherwise stated. 
In all experiments, we set the decay factor $c$ of SimRank as 0.6.



\header{\bf Metrics. }  We use ${MaxError}$ and {\em Precision@k} to
evaluate the quality of the single-source and top-$k$ results. Given a
source node $v_i$ and an approximate single-source result with $n$ similarities
$\S(i,j), j=1,\ldots, n$, ${MaxError}$ is defined to be the maximum
error over $n$ similarities: $MaxError =
\max_{j=1}^n\left|\S(i,j)-S(i,j)\right|$. Given a source node $v_i$
and an approximate top-$k$ result $V_k = \{v_1, \ldots, v_k\}$,  {\em Precision@k} is defined to be the
percentage of nodes in $V_k$ that coincides with the actual top-$k$
results. In our experiment, we set $k$ to be $500$. Note that this is
the first time that top-$k$ queries with $k>100$ are evaluated on large
graphs. On each dataset, we issue 50 queries and
report the average ${MaxError}$ and {\em Precision@500}.

\subsection{Experiments on small graphs}
We first evaluate \exsim against  other single-source algorithms on four small graphs. 
We compute the ground truths of the single-source and top-$k$ queries
using the Power Method~\cite{JW02}. 
We omit a method if its query or preprocessing time exceeds $24$
hours.

Figure~\ref{fig:maxerr-query-smallgraph} shows the tradeoffs between
${MaxError}$ and the query time of each algorithm. 
The first
observation is that \exsim is the only algorithm that consistently
achieves an error of $10^{-7}$ within $10^4$ seconds. Linearization is
able to achieve a faster query time when the error parameter $\e$ is
large. However, as we set $\e \le 10^{-5}$, Linearization is troubled
by its $O\left( {n\log n \over \e^2}
\right)$ preprocessing time and is unable to finish the computation of the
diagonal matrix $D$ in $24$ hours. 

Figure~\ref{fig:precision-query-smallgraph} presents the tradeoffs
between {\em Precision@500} and query time of each algorithm. We
observe that \exsim with $\e = 10^{-7}$ is able to achieve a precision of $1$ on all four
graphs. This confirms the exactness of \exsim. We also note that
ParSim is able to achieve high precisions on all four graphs despite
its large ${MaxError}$ in Figure~\ref{fig:maxerr-query-smallgraph}. This observation demonstrates the effectiveness
of the $D \sim (1-c) I$ approximation on small datasets. 
Finally, for the index-based methods MC, PRSim and Linearization,
we also plot the tradeoffs between ${MaxError}$ and preprocessing
time/index size in Figure~\ref{fig:maxerr-pretime-smallgraph}
and~\ref{fig:maxerr-indexsize-smallgraph}. The index sizes of
Linearization form a vertical line, as the algorithm only recomputes
and stores a diagonal matrix $D$. PRSim generally achieves the
smallest error given a fixed amount of preprocessing time and index
size.

\subsection{Experiments on large graphs}
For now, we have experimental evidence that ExactSim is able to obtain the exact single-source and top-k SimRank results on small graphs. 
On the other hand, the theoretical analysis in section 3 guarantees that ExactSim with $\e=10^{-7}$ can achieve a precision of 7 decimal places with high probability. 
Hence, we will treat the results computed by ExactSim with $\e=10^{-7}$ as the ground truths to evaluate the performance of various algorithms (including \exsim with larger $\e$) on large graphs. 
We also omit a method if its query or preprocessing time exceeds 24 hours. 

Figure~\ref{fig:maxerr-query-largegraph} and Figure~\ref{fig:precision-query-largegraph}  show the trade-offs
between the query time and ${MaxError}$ and  {\em Precision@500} of each
algorithm. 
Figure~\ref{fig:maxerr-pretime-largegraph} and
Figure~\ref{fig:maxerr-indexsize-largegraph} display the
${MaxError}$ and preprocessing time/index size plots of the index-based
algorithms.  
  For \exsim with $\e=10^{-7}$, we set its ${MaxError}$ as
$10^{-7}$ and {\em Precision@500}  as $1$. We observe from
Figure~\ref{fig:precision-query-largegraph}  that \exsim with $\e=10^{6}$ also
achieves a precision of $1$ on all four graphs. This suggests that the
top-$500$ result of  \exsim with $\e=10^{-6}$ is the same as
that of \exsim with $\e=10^{-7}$. In other words, the top-$500$
result of \exsim actually {\em converges} after $\e=10^{-6}$. This is
another strong evidence of the exact nature of \exsim.  From
Figure~\ref{fig:maxerr-query-largegraph}, We also
observe that $\exsim$ is the only algorithm that achieves an error of
less than $10^{-6}$ on all four large graphs. In particular, on the  TW
dataset, no existing algorithm can  achieve an error of less than
$10^{-4}$, while \exsim is able to achieve exactness within $10^4$
seconds. 

\subsection{Ablation study.} 
We now evaluate the effectiveness of the
optimization techniques. Recall that we use sampling according to
$\vec{\pi}_i(k)^2$ and local deterministic exploitation to 
reduce the query time, and sparse Linearization to reduce the space
overhead.  Figure~\ref{fig:exsim-comparision} shows the time/error
tradeoffs of the basic \exsim and the optimized
\exsim algorithms. Under similar actual error, we observe a speedup of
$10-100$ times. Table~\ref{tbl:memory-cost} shows the memory overhead of the basic \exsim and the optimized
\exsim algorithms. We observe that the space overhead of the basic
\exsim algorithm is usually larger than the graph size, while sparse
Linearization reduces the memory usage by a factor of $5-6$ times. This
demonstrates the effectiveness of our optimizing techniques.



 

\begin{figure}[h]
	\begin{small}
		\centering
		\begin{tabular}{cc}
				\hspace{-5mm} \includegraphics[height=33mm]{./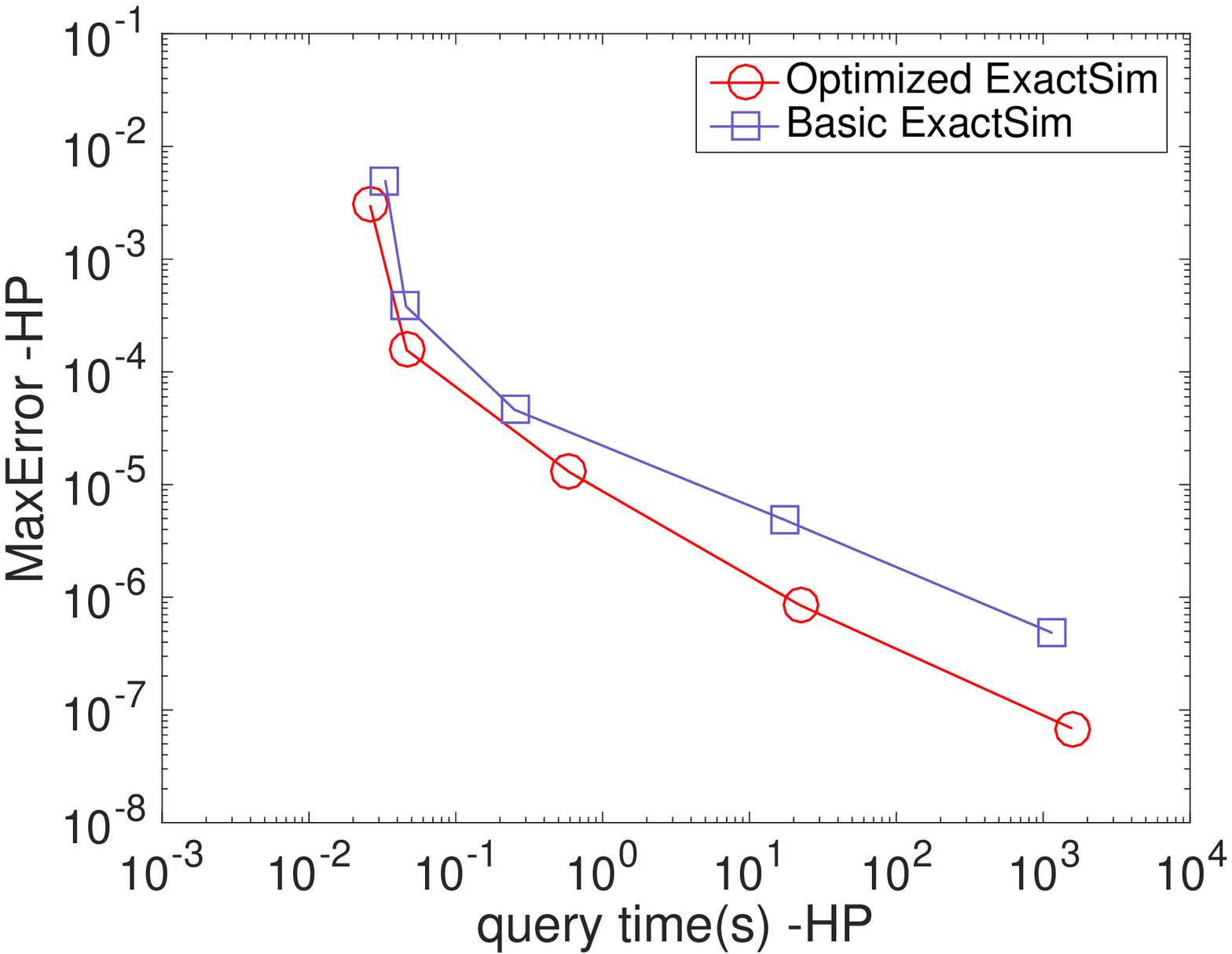} &
				\hspace{-1mm} \includegraphics[height=33mm]{./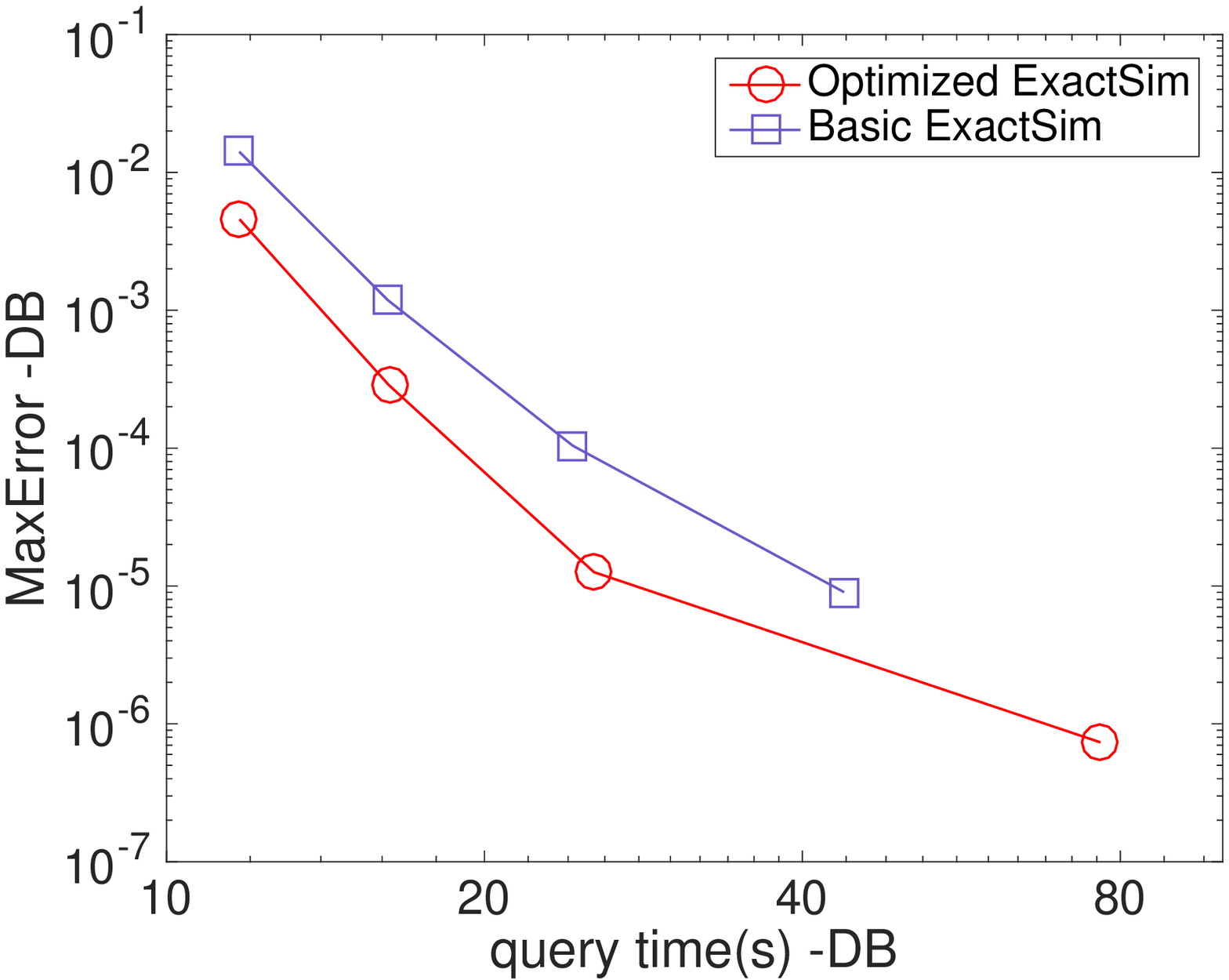} 
			\end{tabular}
			\caption{Basic ExactSim v.s. Optimized ExactSim} 
			\label{fig:exsim-comparision}
		\end{small}
	\end{figure}

\begin{table} [t]
	\centering
	\renewcommand{\arraystretch}{1.5}
	\begin{small}
		\caption{Memory overhead on large graphs.} 
		\vspace{-2mm}
		\begin{tabular} {|c|c|c|c|c|} \hline
			Memory overhead (GB)
			&\hspace{-2mm} {\em DB}\hspace{-1mm} 
			&\hspace{-2mm} {\em IC}\hspace{-1mm} 
			&\hspace{-2mm} {\em IT}\hspace{-1mm} 
			&\hspace{-2mm}{\em TW} \\ \hline
			
			Basic ExactSim  &2.49  & 3.40 &   18.95 & 19.12 \\ \hline		
                        \exsim & 0.47  & 0.58 & 3.26 &  3.54 \\ \hline
                  Graph size (GB) & 0.48  & 1.88 & 10.94 &  13.30 \\ \hline	
			
		\end{tabular}
	\label{tbl:memory-cost}
	\end{small}
	\normalsize
\end{table}



\section{Conclusions} \label{sec:conclusions}
This paper presents \exsim, an algorithm that produces the ground
truths for  single-source and top-$k$ SimRank
queries with precision up to 7 decimal places on large graphs. 
We also design various optimization techniques to improve the space and time complexity of the proposed algorithm. 
We believe the ExactSim algorithm can be used to produce the ground truths for evaluating single-source SimRank algorithms on large graphs.
For future work, we
note that the $O\left({\log n \over \e^2}\right)$ complexity of \exsim
prevents it from achieving a precision of $10^{-14}$ (i.e., the precision
of the double type). To achieve such extreme precision, we
need an algorithm with  $O\left({\log n \over \e}\right)$
complexity, which remains a major open problem in SimRank study.



\section{ACKNOWLEDGEMENTS}
This research was supported in part by National Natural Science Foundation of China (No. 61832017, No. 61972401, No. 61932001, No. U1711261, No. 61932004 and No. 61622202), by FRFCU No. N181605012, by Beijing Outstanding Young Scientist Program NO. BJJWZYJH012019100020098, and by the Fundamental Research Funds for the Central Universities and the Research Funds of Renmin University of China under Grant 18XNLG21.


\begin{small}
	\bibliographystyle{plain}
	\bibliography{paper}
\end{small}

\appendix
\section{Inequalities} \label{sec:inequalities_appendix}
\subsection{ Bernstein Inequality }
	\begin{lemma}[Bernstein inequality~\cite{ChungL06}]\label{lem:conc}
	Let $X_1, \cdots, X_R$ be independent  random variables with
	$|X_i| <  b$ for $i=1,\ldots, R$. Let
	$X=\frac{1}{R}\cdot\sum^R_{i=1}X_i$, we have
	\begin{equation}
	\Pr[|X-\E[X]|\ge \lambda] \le 2\cdot \exp\left(-\frac{\lambda^2\cdot
		R}{2R\cdot \Var[X] + 2b\lambda/3}\right),
	\end{equation}
	where $\Var[X]$ is the variance of $X$.
\end{lemma}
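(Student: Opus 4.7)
The plan is to prove the inequality by the standard exponential moment (Chernoff) method applied to the centered summands. First I would set $Y_i = X_i - \E[X_i]$, so that $\E[Y_i]=0$ and $|Y_i|\le 2b$, and observe that the claim is equivalent to bounding $\Pr\!\left[\left|\sum_{i=1}^R Y_i\right|\ge R\lambda\right]$. The factor of $2$ in the target bound will come from applying the same argument to both $\sum_i Y_i$ and $-\sum_i Y_i$ via a union bound, so it suffices to control the one-sided probability $\Pr\!\left[\sum_i Y_i \ge R\lambda\right]$.

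Next, I would apply Markov's inequality to $\exp\!\bigl(t\sum_i Y_i\bigr)$ for a parameter $t>0$ to be chosen later. By independence, $\Pr\!\left[\sum_i Y_i\ge R\lambda\right]\le e^{-tR\lambda}\prod_{i=1}^R \E\!\left[e^{tY_i}\right]$. The key step is to bound each moment generating function using boundedness together with the zero-mean condition: expand $e^{tY_i}$ in its Taylor series, use $|Y_i|^k\le |Y_i|^2 (2b)^{k-2}$ for $k\ge 2$, and apply the factorial bound $k!\ge 2\cdot 3^{k-2}$. Summing the resulting geometric series yields $\E\!\left[e^{tY_i}\right]\le 1+\dfrac{t^2\Var[Y_i]}{2\bigl(1-2bt/3\bigr)}$, valid for $0<t<3/(2b)$.

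Taking logarithms, using $\log(1+x)\le x$, and summing over $i$ (noting $\sum_i \Var[Y_i]=R^2\Var[X]$ because $X=\tfrac{1}{R}\sum_i X_i$) produces a clean exponential bound in $t$. The final step is to optimize: setting $t=\lambda/\bigl(R\Var[X]+2b\lambda/3\bigr)$ balances the two terms in the exponent and produces exactly the Bernstein rate $\exp\!\bigl(-R\lambda^2/(2R\Var[X]+2b\lambda/3)\bigr)$. Doubling for the symmetric lower tail yields the stated factor of $2$ in front.

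The main obstacle is the MGF estimate in the middle step: a coarser bound (e.g.\ via Hoeffding's lemma) loses the variance dependence and only gives a sub-Gaussian tail, which is strictly weaker than what is claimed. Care must be taken to track the $\Var[Y_i]$ coefficient explicitly and to extract the constant $1/3$ in the denominator from the factorial-versus-geometric comparison; once that is in place, the rest of the argument is essentially mechanical algebra and optimization over $t$.
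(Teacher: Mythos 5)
The paper does not prove this lemma at all: it is imported verbatim from Chung and Lu's survey and used as a black box in the proofs of Theorem~1 and Lemma~3, so there is no in-paper argument to compare yours against. Your proposal is the standard exponential-moment derivation of Bernstein's inequality, and the skeleton is sound: centering, Markov applied to $e^{t\sum_i Y_i}$, the Taylor/geometric-series MGF bound via $k!\ge 2\cdot 3^{k-2}$, $\log(1+x)\le x$, and optimization over $t$. The identity $\sum_i\Var[Y_i]=R^2\Var[X]$ and the choice of $t$ are all handled correctly.

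The one genuine issue is constant bookkeeping at the very end. Since the hypothesis is $|X_i|<b$, centering only gives $|Y_i|\le 2b$, and your own MGF bound $\E[e^{tY_i}]\le 1+\frac{t^2\Var[Y_i]}{2(1-2bt/3)}$ correctly carries the range $M=2b$. But then the optimized exponent is $-\frac{u^2}{2\sigma^2+2Mu/3}$ with $u=R\lambda$, $\sigma^2=R^2\Var[X]$, $M=2b$, which after dividing through by $R$ reads
\begin{equation*}
\exp\left(-\frac{R\lambda^2}{2R\,\Var[X]+4b\lambda/3}\right),
\end{equation*}
not the stated $\exp\bigl(-R\lambda^2/(2R\Var[X]+2b\lambda/3)\bigr)$; your claim that the optimization lands ``exactly'' on the stated rate is off by this factor of two in the linear term. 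To recover the constant as written you need $|X_i-\E[X_i]|\le b$ rather than $|X_i|\le b$ --- which does hold in the paper's actual application, where the summands lie in $[0,1]$ so the centered variables are bounded by $b=1$. Alternatively, the weaker constant is harmless downstream, since the paper's choice $R=\tfrac{6\log n}{(1-\sqrt{c})^4\e^2}$ has slack; but as a proof of the lemma as literally stated, this step needs either the stronger hypothesis or the weaker conclusion.
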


\section{Proofs} \label{sec:proof_appendix}

\subsection{Proof of Lemma~\ref{lem:variance}}

\begin{proof} [Proof of Lemma~\ref{lem:variance}]
	Note that $\D_r(k,k)$ is a Bernoulli random variable with
	expectation $D(k,k)$, and thus has variance $\Var[\D_r(k,k)]
	=D(k,k)(1-D(k,k)) \le D(k,k)$. Since $\D_r(k,k)$'s are   independent
	random variables, we have
	\begin{align*}
	&\Var[\vec{s}^{L}(j) ] \\
	&=  {1\over (1-\scw)^4R^2}\sum_{k=1}^n
	\sum_{r=1}^{R\rho(k)}  \left( {\sum_{\ell=0}^{L}
		\vec{\pi}_i^\ell(k)\cdot  \vec{\pi}_j^\ell(k) \over \rho(k) } \right)^2\cdot
	{\Var[\D_r(k,k)]}\\
	& =  {1\over (1-\scw)^4R}\sum_{k=1}^n
	{  \left(\sum_{\ell=0}^{L}
		\vec{\pi}_i^\ell(k)\cdot  \vec{\pi}_j^\ell(k) \right)^2 \over \rho(k)} \cdot
	{D(k,k)(1-D(k,k))}.\\
	\end{align*}
	By the Cauchy-Schwarz inequality, we have
	\begin{equation}\nonumber
	\begin{aligned}
	\left(\sum_{\ell=0}^{L}\vec{\pi}_i^\ell(k)\cdot  \vec{\pi}_j^\ell(k) \right)^2  
	&\le  \left( \sum_{\ell=0}^{L} \vec{\pi}_i^\ell(k) \right)^2 \left( \sum_{\ell=0}^{L} \vec{\pi}_j^\ell(k) \right)^2  \\
	&\le  \vec{\pi}_i(k)^2 \vec{\pi}_j(k)^2 .
	\end{aligned}
	\end{equation}
	Combining with the fact that $1-D(k,k) \le 1$, we have
	\begin{equation}
	\label{eqn:variance2}
	\Var[\vec{s}^{L}(j) ] \le {1\over (1-\scw)^4R}\sum_{k=1}^n
	{  \vec{\pi}_i(k)^2 \vec{\pi}_j(k)^2\over \rho(k)} \cdot
	{D(k,k)}.
	\end{equation}
	and the first part of the Lemma follows.

	Plugging $\rho(k) = R(k) / R =\lceil R \vec{\pi}_i(k)\rceil /R \ge \vec{\pi}_i(k)$ into
	Lemma~\ref{lem:variance}, we have
	\begin{align*}
	\Var[\vec{s}^{L}(j)  ] &\le {1\over (1-\scw)^4R}\sum_{k=1}^n
	{  \vec{\pi}_i(k)^2 \vec{\pi}_j(k)^2\over \vec{\pi}_i(k)} \cdot
	{D(k,k)} \\
	&\le {1\over (1-\scw)^4R}\sum_{k=1}^n
	{  \vec{\pi}_i(k) }.
	\end{align*}
	For the last inequality, we use the fact that
	$D(k,k) \le 1$ and $\vec{\pi}_j(k) \le 1$. Finally, since $\sum_{k=1}^n
	{  \vec{\pi}_i(k) } = 1$, we have $
	\Var[\vec{s}^{L} (j) ]  \le {1\over (1-\scw)^4R} $,
	and the second part of the Lemma follows.
\end{proof}

\subsection{Proof of Theorem~\ref{thm:basic}}
\begin{proof}
	we first note that by
	equation~\eqref{eqn:sL}, $\vec{s}^{L}(j)  $ can be expressed as
	\begin{equation}\nonumber
	\begin{aligned}
	\vec{s}^{L}(j) &=\vec{e}_j \cdot \vec{s}^{L}= {1\over 1-\scw} \vec{e}_j^{\top}\cdot \sum_{\l =0}^{L}  \left(\scw P^\top \right)^\l \D \cdot \vec{\pi}_i^{\ell} \\
	&= {1\over (1-\scw)^2} \sum_{\l =0}^{L}  \left((1-\scw) \left(\scw P \right)^\l \cdot \vec{e}_j\right)^\top \cdot  \D \cdot \vec{\pi}_i^{\ell}.
	\end{aligned}
	\end{equation}
	Since $(1-\scw) \left(\scw P \right)^\l \cdot  \vec{e}_j  = \vec{\pi}_j^\ell$, 
	we have 
	\begin{align} 
	\vec{s}^{L}(j)  = {1\over (1-\scw)^2} \sum_{\l = 0}^{L} \left(\vec{\pi}_j^\ell\right)^\top \cdot  \D \cdot \vec{\pi}_i^{\ell} .
	\end{align}
	Summing up over the diagonal
	elements of $D$ follows that 
	\begin{align}  
	\vec{s}^{L}(j)  = {1\over (1-\scw)^2}\sum_{\ell=0}^{L}\sum_{k=1}^n \vec{\pi}_i^\ell(k)\cdot  \vec{\pi}_j^\ell(k) \cdot \D(k,k). \label{eqn:formula-ppr}
	\end{align}
	Comparing the equation~\eqref{eqn:formula-ppr} with the actual SimRank value $S(i,j)$ given in~\cite{wei2019prsim} that 
	\begin{equation}
	S(i,j) ={1\over (1-\scw)^2}\sum_{\ell=0}^{\infty}\sum_{k=1}^n
	\vec{\pi}_i^\ell(k)\cdot  \vec{\pi}_j^\ell(k) \cdot D(k,k).
	\end{equation}
	we observe that there are two discrepancies:
	1) The iteration number changes from $\infty$ to $L$, 
	and 2) Estimator $\D$ replaces actual diagonal correction matrix  $D$. 
	For the	first approximation, we can bound the error by $c^L \le \e/2$ if \exsim
	sets $L = \left \lceil
	\log_{1\over c} {2 \over \e}  \right\rceil$.  Consequently, we only need to
	bound the error from replacing $D$ with $\D$ utilizing Bernstein inequality given in Lemma~\ref{lem:conc}. 
	
   According to Bernstein inequality, 
   we need to express $\vec{s}^{L}(j)$ as the average of independent random variables. In particular,
   let
   $\D_r(k,k)$, $r=1, \ldots, R(k)$ denote
   the $r$-th estimator of $D(k,k)$ by Algorithm~\ref{alg:Dk}. We observe
   that each $\D_r(k,k)$ is a Bernoulli random variable, that is, $\D_r(k,k) = 1$ with
   probability $D(k,k)$ and $\D_r(k,k) = 0$  with probability
   $1-D(k,k)$. We have
   \begin{equation}
   \begin{aligned}
   \vec{s}^{L}(j)  &= {1\over (1-\scw)^2}\sum_{\ell=0}^{L}\sum_{k=1}^n
   \vec{\pi}_i^\ell(k)\cdot  \vec{\pi}_j^\ell(k) \cdot
   {\sum_{r=1}^{R(k)}\D_r(k,k) \over R(k)}.\\
   & =  {1\over (1-\scw)^2}\sum_{k=1}^n  \sum_{r=1}^{R(k)} {\sum_{\ell=0}^{L}
   	\vec{\pi}_i^\ell(k)\cdot  \vec{\pi}_j^\ell(k) \over R(k) }\cdot
   {\D_r(k,k) }.
   \end{aligned}
   \end{equation}
   Let $\rho(k) = R(k)/R$
   be the fraction of pairs of $\scw$-walks assigned to $v_k$, it follows
   that 
   \begin{equation}
   \label{eqn:basic-formula}
   \vec{s}^{L}(j) = {1\over R}\cdot {1\over (1-\scw)^2}\sum_{k=1}^n  \sum_{r=1}^{R\rho(k)} {\sum_{\ell=0}^{L}
   	\vec{\pi}_i^\ell(k)\cdot  \vec{\pi}_j^\ell(k) \over   \rho(k) }\cdot
   {\D_r(k,k) }.
   \end{equation}
   We will treat each $ {\sum_{\ell=0}^{L}
   	\vec{\pi}_i^\ell(k)\cdot  \vec{\pi}_j^\ell(k) \over   \rho(k) }\cdot
   {\D_r(k,k) }$ as an independent random variable. 
   The number of such random variables is $\sum_{k=1}^n R\rho(k) = R$, so
   we have expressed $
   \vec{s}^{L}(j)$ as the average of  $R$ independent random variables. 
   Lemma~\ref{lem:variance} offers the variance bound of $\vec{s}^{L}(j)$. 
	To utilize Bernstein inequality, 
	we also need to bound $b$,  the maximum value of the
	random variables ${\sum_{\ell=0}^{L}
		\vec{\pi}_i^\ell(k)\cdot  \vec{\pi}_j^\ell(k) \over   \rho(k) } \cdot
	\D_r(k,k)$. We have
	$${\sum_{\ell=0}^{L}
		\vec{\pi}_i^\ell(k)\cdot  \vec{\pi}_j^\ell(k) \over  \vec{\pi}_i(k) } \cdot
	\D_r(k,k)  \le {\sum_{\ell=0}^{L}
		\vec{\pi}_i^\ell(k) \over  \vec{\pi}_i(k) } \le   {\vec{\pi}_i(k)
		\over  \vec{\pi}_i(k) } = 1. $$
	Applying Bernstein inequality with $b=1$ and
	$\Var[\vec{s}^{L}(j) ] \le {1\over (1-\scw)^4R}$, where $R =
	{6\log n \over (1-\scw)^4  \e^2}$, we have
	$\Pr[|\vec{s}^{L}(j)  - E[\vec{s}^{L}(j)] |>  \e/2] < 1/n^3.$
	Combining with the $\e/2$ error introduced by the truncation $L$, we
	have $\Pr[|\vec{s}^{L}(j)  - S(i,j) |>  \e] < 1/n^3.$
	By union bound over all possible target nodes $j=1,\ldots, n$ and all
	possible source nodes  $i = 1, \ldots, n$, we ensure
	that for all $n$ possible source node and $n$ target nodes, 
	$$\Pr[\forall i,j,\,  |\vec{s}^{L}(j)  - S(i, j)|>  \e] < 1/n,$$ 
	and the Theorem follows.
	
\end{proof}

\subsection{Proof of Lemma~\ref{lem:sparse}}
\begin{proof}
	We note that the sparse Linearization introduces an extra error of $(1-\scw)^2\e$ to
	each $\vec{\pi}_i^{\ell}(k)$, $k=1,\ldots,n$, $\ell = 0, \ldots,
	\infty$. According to equation~\eqref{eqn:formula-ppr}, the 
	estimator $\vec{s}^L(j)$ can be expressed as
	\begin{equation}
	\label{eqn:sparse1}
	\vec{s}^L(j)={1\over (1-\scw)^2}\sum_{\ell=0}^{L}\sum_{k=1}^n
	\left(\vec{\pi}_i^\ell(k) \pm (1-\scw)^2\e \right)  \cdot \vec{\pi}_j^\ell(k)\cdot \D(k,k).
	\end{equation}
	Thus, the  error introduced by sparse Linearization can be bounded by 
	\begin{equation}
	\label{eqn:sparse2}
	{1\over (1-\scw)^2}\sum_{\ell=0}^{\infty}\sum_{k=1}^n (1-\scw)^2\e  \cdot
	\vec{\pi}_j^\ell(k) \cdot \D(k,k).
	\end{equation}
	Using the facts that $\sum_{\ell=0}^{\infty}\sum_{k=1}^n
	\vec{\pi}_j^\ell(k)= 1$ and $\D(k,k) \le 1$, the above error can be
	bounded by ${1\over (1-\scw)^2}  \cdot (1-\scw)^2\e = {\e} $.
\end{proof}

\subsection{Proof of Lemma~\ref{lem:variance1}}
\begin{proof} 
	Recall that $\rho(k)
	$ is the fraction of sample  assigned to
	$D(k,k)$. We have  $\rho(k) =  \left\lceil {R \vec{\pi}_i(k)^2 \over
		\|\vec{\pi}_i\|^2} \right\rceil  /R \ge {\vec{\pi}_i(k)^2 \over
		\|\vec{\pi}_i\|^2}$. By the inequality~\eqref{eqn:variance-all} in Lemma~\ref{lem:variance}, we can bound the
	variance of estimator $\vec{s}^L(j)$ as
	\begin{align*}
	&\quad \Var[\vec{s}^L(j) ] \le {1\over (1-\scw)^4R}\sum_{k=1}^n
	{  \vec{\pi}_i(k)^2 \vec{\pi}_j(k)^2\over \rho(k)} \cdot
	{D(k,k)}\\
	& \le {1\over (1-\scw)^4R}\|\vec{\pi}_i\|^2\sum_{k=1}^n
	{  \vec{\pi}_j(k)^2 } ={1\over (1-\scw)^4R}\|\vec{\pi}_i\|^2 \|\vec{\pi}_j\|^2.
	\end{align*}
	Here,  we use the facts that $\|\vec{\pi}_j\|^2 = \sum_{k=1}^n
	\vec{\pi}_j(k)^2  $ and $D(k,k) \le 1$. Since we need to bound the
	variance for all possible nodes $v_j$ (and hence all
	possible $ \|\vec{\pi}_j\|^2$), we make the
	relaxation that $\|\vec{\pi}_j\|^2 \le
	\|\vec{\pi}_j\|_1^2 =  1$, where $	\|\vec{\pi}_j\|_1^2 =(\sum_{k=1}^{n}\vert \vec{\pi}_j(k)\vert)^2$. 
	And thus $ \Var[\vec{s}^L(j) ] 
	\le {1\over (1-\scw)^4R}\|\vec{\pi}_i\|^2.$
	This suggest that by sampling according to
	$\vec{\pi}_i(k)^2$, we reduce the variance of the
	estimators by a factor
	$\|\vec{\pi}_i\|^2$. Recall that the \exsim
	algorithm computes the Personalized PageRank
	vector $\vec{\pi}_i$ before estimating $D$, we
	can obtain the value of   $\|\vec{\pi}_i\|^2$
	and  scale $R$ down by a factor of
	$\|\vec{\pi}_i\|^2$. This simple modification
	will  reduce the running time to $O\left({\|\vec{\pi}_i\|^2 \log n \over
		\e^2}\right)$.
	
	One small technical issue is that the maximum of the random variables $ {\sum_{\ell=0}^{\infty}
		\vec{\pi}_i^\ell(k)\cdot  \vec{\pi}_j^\ell(k) \over   \rho(k) }\cdot
	{\D_r(k,k) }$ may gets  too large as the fraction $\rho(k)$ gets
	too small. However, by the facts that   $\rho(k) =
	\left \lceil { R \vec{\pi}_i(k)^2 \over  \|\vec{\pi}_i\|^2 }
	\right\rceil /R$  and $\D_r(k,k) \le 1$, we have
	\begin{align*}
	{\sum_{\ell=0}^{\infty}
		\vec{\pi}_i^\ell(k)\cdot  \vec{\pi}_j^\ell(k) \over   \rho(k) }\cdot
	{\D_r(k,k) } \le  {
		\vec{\pi}_i(k) \over   \rho(k) }  = R \vec{\pi}_i(k) /\left \lceil {R\vec{\pi}_i(k)^2 \over  \|\vec{\pi}_i\|^2 }
	\right\rceil.
	\end{align*}
	If we view the right side of the above equality as a function of $\vec{\pi}_i(k)$,
	it takes maximum when ${R\vec{\pi}_i(k)^2 \over  \|\vec{\pi}_i\|^2}= 1$, or equivalently $\vec{\pi}_i(k) = \sqrt{  \|\vec{\pi}_i\|^2\over R 
	} $.  Thus, the random variables in equation~\eqref{eqn:basic-formula} can be bounded by
	$ R  \sqrt{  \|\vec{\pi}_i\|^2\over R 
	} =   \|\vec{\pi}_i\| \sqrt{R}$. Plugging $b =  \|\vec{\pi}_i\|
	\sqrt{R}$ and $  \Var[\vec{s}^L(j)]  \le {1\over
		(1-\scw)^4R}\|\vec{\pi}_i\|^2$ into bernstein inequality, and the
	Lemma follows.
\end{proof}

\subsection{Proof of Lemma~\ref{lem:zl}}
\begin{proof}
	Note that $\left(\scw \right)^\ell \left(P^\top\right)^\ell(k,q)$ is the probability that a $\scw$-walk from $v_k$
	visits $v_q$ at  its $\ell$-th step. Consequently, $c^\ell \left(P^\top\right)^{\ell}(k,q)^2$ is the probability that two
	$\scw$-walks from node $v_k$ visit node $v_q$ at their $\ell$-th
	step simultaneously. To ensure this is the first time that the two
	$\scw$-walks meet, we subtract the probability
	mass that the two $\scw$-walks have met before. In particular,
	recall that 
	$Z_{\ell'}(k, q')$ is the probability that two
	$\scw$-walks from node $v_k$ first meet at $v_{q'}$ in exactly
	$\ell'$    steps.  Due to the memoryless property of the $\scw$-walk,
	the two $\scw$-walks will behave as two new $\scw$-walks from
	$v_{q'}$ after their $\ell'$-th step. The probability that these two new $\scw$-walks
	visitis $v_q$ in exact $\ell-\ell'$ steps is $ c^{\ell-\ell'}
	\left(P^\top\right)^{\ell-\ell'}(q',q)^2 $.  Summing up $q'$ from $1$ to $n$ and
	$\ell'$ from $1$ to $\ell-1$, and the Lemma follows. 
\end{proof}

\end{document}